\pgfplotsset{compat=1.13}
\newcolumntype{R}{>{\columncolor{red!20}}c}
\newcolumntype{G}{>{\columncolor{green!20}}c}
\newcolumntype{B}{>{\columncolor{blue!20}}c}
\newcolumntype{Y}{>{\columncolor{yellow!20}}c}
\newcolumntype{K}{>{\columncolor{black!20}}c}
\newtheorem{theorem}{Theorem}
\newtheorem{proposition}{Proposition}
\newtheorem{corollary}{Corollary}
\newtheorem{remark}{Remark}
\lstdefinestyle{C++}{
	language=C++,
	keywordstyle=\bfseries\color{purple},
	stringstyle=\color{blue},
	commentstyle=\color{gray},
	comment=[l]{/*}
}
\lstdefinestyle{AMPL}{
	language=AMPL,
	aboveskip=3mm,
	belowskip=3mm,
	showstringspaces=false,
	columns=flexible,
	keywordstyle=\bfseries,
	breaklines=true,
	breakatwhitespace=true,
	tabsize=3,
}
\lstdefinelanguage{AMPL}{keywords={let,set,param,var,arc,integer,minimize,maximize,subject,to,node,
sum,in,Current,complements,integer,solve_result_num,IN,contains,less,suffix,INOUT,default,logical,
Infinity,dimen,max,symbolic,Initial,div,min,table,LOCAL,else,option,then,OUT,environ,setof,union,
all,exists,shell_exitcodeuntil,binary,forall,solve_exitcodewhile,by,if,solve_messagewithin,check,
solve_result},sensitive=true,comment=[l]{\#}}
\newcommand{\be}{\begin{equation}}
\newcommand{\ee}{\end{equation}}
\newcommand{\bea}{\begin{eqnarray}}
\newcommand{\eea}{\end{eqnarray}}
\newcommand{\bvec}{\left(\begin{array}{c}}
\newcommand{\evec}{\end{array}\right)}
\newcommand{\bsub}{\begin{subequations}}
\newcommand{\esub}{\end{subequations}}
\def\argmax{\mathop{\rm arg\,max}}%
\def\argmin{\mathop{\rm arg\,min}}%
\definecolor{peach}{HTML}{FFCCAC}
\definecolor{butter}{HTML}{FFEB94}
\definecolor{babyblue}{HTML}{C1E1DC}
\begin{document}
\allowdisplaybreaks
\graphicspath{{images/}} 

\title{Binary Control Pulse Optimization for Quantum Systems} 

\author{Xinyu Fei}
\affiliation{Department of Industrial and Operations Engineering, University of Michigan at Ann Arbor}
\email{xinyuf@umich.edu}
\orcid{0000-0001-7010-8664}

\author{Lucas T. Brady}
\affiliation{Joint Center for Quantum Information and Computer Science, NIST/University of Maryland}
\email{lucas.brady@nist.gov}
\orcid{0000-0001-7696-7689}

\author{Jeffrey Larson}
\affiliation{Mathematics and Computer Science Division, Argonne National Laboratory}
\email{jmlarson@anl.gov}
\orcid{0000-0001-9924-2082}

\author{Sven Leyffer}
\affiliation{Mathematics and Computer Science Division, Argonne National Laboratory}
\email{leyffer@anl.gov}
\orcid{0000-0001-8839-5876}

\author{Siqian Shen}
\affiliation{Department of Industrial and Operations Engineering, University of Michigan at Ann Arbor}
\email{siqian@umich.edu}
\orcid{0000-0002-2854-163X}

\maketitle

\begin{abstract}
Quantum control aims to manipulate quantum systems toward specific quantum
states or desired operations. Designing highly accurate and effective control
steps is  vitally important to various quantum applications, including energy
minimization and circuit compilation. In this paper we focus on discrete
binary quantum control problems and apply different optimization algorithms and
techniques to improve computational efficiency and solution quality.
Specifically, we develop a generic model and extend it in several ways. We introduce 
a squared $L_2$-penalty function to handle additional side constraints, 
to model requirements such as allowing at most one control to be active. We introduce a total variation (TV) regularizer to reduce the number of switches in the control. 
We modify the popular gradient ascent pulse engineering (GRAPE) algorithm, 
develop a new alternating direction method of multipliers (ADMM) algorithm 
to solve the continuous relaxation of the penalized model, and then apply rounding
techniques to obtain binary control solutions. We propose a modified
trust-region method to further improve the solutions. Our
algorithms can obtain high-quality control results, as demonstrated by numerical
studies on diverse quantum control
examples. \end{abstract}

\section{Introduction}
\label{sec: intro}

Quantum control~\citep{Rabitz2000,Werschnik2007,Brif2010} is a rich field that
started with applications in quantum
chemistry~\citep{Shi1988,Peirce1988,Shi1989,Kosloff1989,Jakubetz1990} but has
since expanded to other subfields such as atomic, molecular, and optical physics~\citep{khaneja2005optimal,Gorshkov2008,Bijnen2015}. Quantum control theory has been used in the quantum information community for a long time,
mostly for designing pulse controls and gates in quantum
devices~\citep{Palao2002,Palao2003,Montangero2007,Grace2007,Waldherr2014,Dolde2014,Venturelli2018,Omran2019,Khatri2019};
but more recently, researchers have considered using control theories and optimization for the high-level design of quantum algorithms~\citep{Yang2017,Bapat2019,Mbeng2019,Lin2019,brady2020optimal,brady2021annealing,Venuti2021}.
In this paper, we focus primarily on the quantum information applications of control theory, but our methods are generalizable to other settings. We do
not focus on the analytic aspects of quantum control but on the more mechanistic aspects of how to apply optimization techniques to the problem
of designing and finding solutions to a variety of quantum control problems.

Quantum computing architecture falls into different categories, such as quantum
circuits, which utilize discrete unitary gates to control a system, and quantum
annealing~\citep{Kadowaki1998,Farhi2000}, which uses smooth Hamiltonian
evolution instead. Quantum control is already familiar within quantum
circuit architecture design as a method for designing quantum gates, and
several of our examples described in Section~\ref{sec: model} belong to this
category. 
Within the framework of quantum algorithms, quantum control is often used in
variational quantum algorithms, which can straddle the line between discrete
gates and smooth evolution. This paper focuses mainly on discrete binary
control, which is more applicable to a quantum gate architecture. Variational
techniques have already been implemented experimentally (see, e.g.,~\citep{Pagano2020,Harrigan2021}), 
and our techniques can potentially enhance the
performance of the classical variational loop on such near-term devices.

Most quantum optimal control algorithms are based on gradient descent for
better convergence than using gradient-free algorithms~\citep{nocedal2006numerical}.  \citet{khaneja2005optimal} proposed the
gradient ascent pulse engineering (GRAPE) algorithm for designing pulse
sequences in nuclear magnetic resonance. 
The authors approximated the control function by a piecewise-constant function
and evaluated the explicit derivative. Based on the GRAPE
algorithm, \citet{Larocca_2021} proposed a
K-GRAPE algorithm by using a Krylov subspace to estimate quantum states during
the process of time evolution. \citet{brady2020optimal} applied an analytical
framework based on gradient descent to discuss the optimal control
procedure of a special control problem that minimizes the energy of a quantum
state with a combination of two Hamiltonians. 
Another class of quantum optimal
control algorithms is based on the chopped random basis (CRAB) optimal control
technique, which describes the control space by a series of basis functions and
optimizes the corresponding coefficients~\citep{doria2011optimal,
caneva2011chopped}. 
\citet{sorensen2018quantum} combined the GRAPE algorithm and
CRAB algorithm to achieve better results and faster convergence.
However, all these algorithms are designed for unconstrained continuous quantum control problems.

In this work, we focus on a quantum pulse optimization problem having binary
control variables and restricted feasible regions derived by linear
constraints. These constraints describe a so-called bang-bang control in a model
that corresponds to a quantum circuit design similar to that in the quantum
approximate optimization algorithm (QAOA)~\citep{farhi2014quantum} and other
variational quantum algorithms~\citep{Bharti2022,Cerezo2021}. Some literature formulates the QAOA algorithm into bang-bang control problems with a fixed evolution time and investigates the performance of multiple methods including Stochastic Descent and Pontryagin’s minimum principle for optimizing control models~\citep{liang2020investigating,bao2018optimal}. However, they only consider quantum systems with two controllers, while we propose a more general solution framework for quantum systems with multiple controllers.  
In this paper, we introduce four quantum control examples: (i) energy minimization
problem, (ii) controlled NOT (CNOT) gate estimation problem, (iii) NOT gate estimation problem with the energy leakage, and (iv) circuit compilation problem.
The nonconvexity, binary variables, and restricted feasible
sets in these examples lead to extreme challenges and difficulties in solving
the related binary quantum control problems.  

Methods for binary control mainly include genetic algorithms~\citep{muhlenbein1988evolution}, branch-and-bound~\citep{lawler1966branch, leyffer2001integrating},
and local search~\citep{vogt2021binary}. 
To the best of our knowledge, genetic algorithms have not
been applied to binary quantum control, but \citet{zahedinejad2014evolutionary}
showed that such algorithms fail to obtain high-quality solutions even to continuous quantum
control problems. Branch-and-bound can find
high-quality solutions for binary optimal control, but the long computational
time makes the algorithm intractable and hard to use for large-scale practical
problems. 
\citet{vogt2021binary} introduced a trust-region method~\citep{nocedal2006numerical} to solve the single flux quantum control problem
with binary controls. 
We extend their approach by allowing the addition of constraints, such as min-up-time and max-switching constraints to control the number of switches.

The main contributions of this paper are as follows. First, we propose a generic 
model including continuous and discretized versions for binary quantum control 
problems with linear constraints indicating that at each time step there can 
be only one active control. 
Second, we introduce an exact penalty function for linear
constraints and develop an algorithmic framework combining the GRAPE algorithm
and rounding techniques to solve it. Third, to prevent chattering on controls, we 
propose a new model that includes a total variation (TV) regularizer, and then we propose a new approach based on the alternating direction method of multipliers (ADMM) to solve this model. 
Fourth, we introduce a 
modified trust-region method to improve the solutions obtained
from these algorithms. Compared with other methods, our algorithmic
framework can obtain high-quality binary control sequences and prevent frequent switches. We demonstrate the performance on multiple quantum control examples with various objective functions and controllers.

The paper is organized as follows. In Section~\ref{sec:
model} we construct generic models of quantum binary control problems
using continuous and discretized formulations. 
We also introduce the corresponding specific formulations for four quantum
control examples. 
In Section~\ref{sec: alg-binary} we review the GRAPE algorithm, utilize a
penalty function to relax certain linear constraints, and introduce our rounding
techniques. In Section~\ref{sec: alg-infrequent} we propose and solve the
penalty model with the TV regularizer and the corresponding ADMM algorithm. In Section
\ref{sec: alg-tr} we derive trust-region subproblems and propose an approximate local-branching method based on the trust-region algorithm. In Section~\ref{sec: results} we present numerical simulations for multiple instances of the three quantum control
examples in Section~\ref{sec: model} to demonstrate the computational efficacy
and solution performance of our models and algorithms. In Section~\ref{sec:
conclusion} we summarize our work  and propose future research directions. 

\section{Formulations for Quantum Control}
\label{sec: model}
Consider a quantum system with $q$ qubits and a time horizon $[0,t_f]$, where
$t_f>0$ is defined as the evolution time. Let $H^{(0)}\in \mathbb{C}^{2^q\times
2^q}$ be the intrinsic Hamiltonian. Let $N$ be the number of control operators
and $H^{(j)}\in \mathbb{C}^{2^q\times 2^q},\ j=1,\ldots,N$ be the given control Hamiltonians. 
Let $X_{\textrm{init}},\ X_{\textrm{targ}}\in \mathbb{C}^{2^q\times 2^q}$ be the
initial and target unitary operators of the quantum system, respectively. 
Define variables $u_j(t)\in
\left\{0,1\right \},\ j=1,\ldots,N,\ \forall t$ as the real-valued control functions at time $t$, and use $u$ to represent the corresponding vector form.
Define variables $H(t)\in \mathbb{C}^{2^q\times 2^q}$ as the time-dependent
control Hamiltonian function and $X(t)\in \mathbb{C}^{2^q\times 2^q}$ as the
time-dependent unitary quantum operator function. A generic quantum control
problem for optimizing a function of quantum operators is formulated as 
\begin{subequations}
\label{eq:model-c-1}
\begin{align}
    \label{eq:model-c-1-obj}
    \min \quad & F(X(t_f))\\
    \label{eq:model-c-1-cons-h}
    \textrm{s.t.}\quad & H(t) = H^{(0)} + \sum_{j=1}^N u_j(t)H^{(j)},\ \forall t\in [0,t_f]\\
    \label{eq:model-c-1-cons-s}
    & \frac{d}{dt} X(t) = -iH(t) X(t),\ \forall t\in [0,t_f]\\
    \label{eq:model-c-1-cons-i}
    & X(0) = X_\textrm{init}\\
    \label{eq:model-c-1-cons-u-sum}
    & \sum_{j=1}^N u_j(t) = 1,\ \forall t\in [0,t_f]\\
    \label{eq:model-c-1-cons-u}
    & u_j(t)\in \left\{0,1\right\},\ j=1,\ldots,N.
\end{align}
\end{subequations}
The objective function \eqref{eq:model-c-1-obj} is a general function with
respect to the final unitary operator $X(t_f)$. We assume that $F$ is a differentiable function. In Sections~\ref{sec: model-energy}--\ref{sec: model-compilation},  
we provide the specific formulations of the objective function
for four quantum control examples. Constraint \eqref{eq:model-c-1-cons-h}
describes the computation of the time-dependent Hamiltonian function based on
the intrinsic Hamiltonian, control Hamiltonians, and control functions.
Constraint \eqref{eq:model-c-1-cons-s} is the ordinary differential equation
describing the time evolution of the operator of the quantum system 
with the same units for energy and frequency, setting $\hbar=1$.
Constraint \eqref{eq:model-c-1-cons-i} is the initial condition of the
operator. Constraint \eqref{eq:model-c-1-cons-u-sum} indicates that the
summation of all the control functions should be one at all times, which is
described by the Special Ordered Set of Type 1 (SOS1) property in optimal
control theory~\citep{sager2012integer}. In binary control, this SOS1 property guarantees that only one control field will be active at a time that mimics the bang-bang nature of some quantum applications such as the quantum approximate optimization algorithm  and 
the variational quantum eigensolver (VQE). Constraints \eqref{eq:model-c-1-cons-u} require all time-based values of the control
functions to be feasible. 

Following the time discretization process in~\citep{khaneja2005optimal}, we
explicitly integrate constraint \eqref{eq:model-c-1-cons-s}.
In particular, we divide the time horizon $[0,t_f]$ into $T$
time intervals $[t_{k-1}, t_{k}),\ k=1,\ldots,T$. 
We consider time intervals with an equal length represented by
$\Delta t$, but our work readily extends to nonuniform discretizations. 
For each controller $j=1,\ldots,N$ and each time step
$k=1,\ldots,T$, we define discretized binary control variables as $u_{jk}\in
\left\{0,1\right\}$. For each time step $k=1,\ldots,T$, we define the discretized
time-dependent Hamiltonians as $H_{k}\in \mathbb{C}^{2^q\times 2^q}$ and 
the quantum operators as $X_k\in \mathbb{C}^{2^q\times 2^q}$. 
The differential equation 
\eqref{eq:model-c-1-cons-s} is thus approximated by 
\begin{align}
\label{eq:diff-linear-eq}
    \frac{d}{dt} X(t)=-iH_kX(t),\ \forall t\in [t_{k-1}, t_k),\ k=1,\ldots,T.
\end{align}
For each $k=1,\ldots,T$, the linear differential equation \eqref{eq:diff-linear-eq} is a 
Schr\"odinger equation of a unitary operator, and 
we obtain an explicit solution as $$X(t)=\exp\left\{-iH_k(t-t_{k-1})\right\}X(t_{k-1}),\ \forall t\in [t_{k-1},t_k),$$  
because $H_k$ is time independent.
We then obtain a discretized quantum control problem (DQCP) as a discretization of the differential control model \eqref{eq:model-c-1} using explicit solutions on each interval $[t_{k-1}, t_k)$ for unitary operators as 
\begin{subequations}
\makeatletter
\def\@currentlabel{DQCP}
\makeatother
\label{eq:model-d-1} 
\begin{align}
    \label{eq:model-d-1-obj}
    (DQCP)\quad \min \quad & F(X_T)\\
    \label{eq:model-d-1-cons-h}
    \textrm{s.t.}\quad & H_k = H^{(0)} + \sum_{j=1}^N u_{jk}H^{(j)},\ k=1,\ldots,T\\
    \label{eq:model-d-1-cons-s}
    & X_{k}=e^{-i H_k \Delta t}X_{k-1},\ k=1,\ldots,T \\
    \label{eq:model-d-1-cons-i}
    & X_0 = X_\textrm{init}\\
    \label{eq:model-d-1-cons-u-sum}
    & \sum_{j=1}^N u_{jk} = 1,\ k=1,\ldots,T\\
    \label{eq:model-d-1-cons-u}
    & u_{jk}\in \left\{0,1\right\},\ j=1,\ldots,N,\ k=1,\ldots,T.
\end{align}
\end{subequations}
The objective function \eqref{eq:model-d-1-obj} is the objective function \eqref{eq:model-c-1-obj} evaluated at the approximated final operator $X_T\approx X(t_f)$. Constraints
\eqref{eq:model-d-1-cons-h}--\eqref{eq:model-d-1-cons-u} are the discretized
formulations of constraints
\eqref{eq:model-c-1-cons-h}--\eqref{eq:model-c-1-cons-u}. 
In the following sections we use this discretized formulation to develop our algorithms. We present the following discussion on the relationship between the continuous and discretized formulation. 
\begin{remark}
Problem \eqref{eq:model-d-1} is equivalent to \eqref{eq:model-c-1} for 
piecewise-constant controls $u_j,\ j=1,\ldots,N$. In addition, we have
\begin{align}
    \int_{0}^{t_{k}} u_j(\tau)d\tau = \sum_{\tau=1}^{k} u_{j\tau} \Delta t,\ j=1,\ldots,N,\ k=1,\ldots,T.
\end{align}
\end{remark}

\begin{remark}
The model can be generalized to instances with multiple active controllers in any time interval by considering each possible combination of the controllers. For an instance with $L$ control Hamiltonians $H^{(1)}, \ldots,H^{(L)}$, we consider $N=2^L$ combinations $H^{(l_1,\ldots,l_k)},\ \forall \left\{l_1,\ldots,l_k\right\}\subseteq \left\{1,\ldots,L\right\}$ and define control functions for each combination. 
Thus, the original problem is converted to a new problem with only one active controller in any time interval.
\end{remark}
In our paper we consider four examples with different objective functions and
control Hamiltonians. 
In the following sections we introduce the 
continuous model of four example problems in quantum control. They can be 
formulated as discretized models \eqref{eq:model-d-1} following the above discretization process.

\subsection{Energy Minimization Problem}
\label{sec: model-energy}

Consider a
quantum spin  system consisting of $q$ qubits, no intrinsic Hamiltonian, and two
control Hamiltonians $H^{(1)},\ H^{(2)}$. The initial operator
$X_{\textrm{init}}$ is a $2^q$-dimensional identity matrix. Let $\ket{\psi_0}$
be the ground state of the first control Hamiltonian $H^{(1)}$. The ground state energy of this quantum system is the minimum eigenvalue of the control Hamiltonian $H^{(2)}$, represented by $E_{\textrm{min}}$ in our model. 
Since the ground state energy differs across problem instances, it is important to compare performance consistently across these instances. Therefore, we maximize the approximation ratio, defined as the achieved energy divided by the true ground state energy. The problems we study are well defined so that the initial state of the system has zero energy with respect to $H^{(2)}$; furthermore, the ground state energy will always be a large negative number. Therefore, the approximation ratio reflects the energy level achieved by our procedure
from the initial energy of zero to the true ground state energy.  Because our procedure can  improve the state  relative only to its initial configuration, a negative approximation ratio is not possible.

To ensure consistency with the objective functions of our other two examples, we minimize one minus the approximation ratio. Because the true ground state energy is negative, this formula is equivalent to minimizing the achieved energy. 
The continuous-time formulation is 
\begin{subequations}
\label{eq:model-energy-c}
\begin{align}
\label{eq:model-energy-c-obj}
    \min\quad  \displaystyle & 1 - \left \langle \psi_0\right| X(t_f)^\dagger H^{(2)} X(t_f) \left|\psi_0 \right\rangle / E_{\textrm{min}}\\
    \textrm{s.t.}\quad %
                    & H(t) = u_1(t)H^{(1)} + u_2(t)H^{(2)},\ \forall t\in [0,t_f]\\
                    \label{eq:model-energy-h1}
                    & H^{(1)} = -\sum_{i=1}^q \sigma_i^x
                    ,\ H^{(2)} = \sum_{ij} J_{ij}\sigma_i^z \sigma_j^z\\
                    & \textrm{Constraints \eqref{eq:model-c-1-cons-s}--\eqref{eq:model-c-1-cons-u-sum}}\nonumber\\
                    & u_1(t),\ u_2(t) \in \left\{0,1\right\}, 
\end{align}
\end{subequations}
where $\cdot^\dagger$ represents the conjugate transpose of a matrix and
$\langle \cdot |$ represents the conjugate transpose of a quantum state vector
$\ket{\cdot}$. The matrix $[J_{ij}],\ i,j=1,\ldots,q$ is the adjacency matrix
of a randomly generated graph with $q$ nodes, and $\sigma_i^x,\ \sigma_i^z$ are 
Pauli matrices of qubit $i$. 

\subsection{CNOT Gate Estimation Problem}
\label{sec: model-cnot}
Our second example is defined on an isotropic Heisenberg
spin-1/2 chain with length 2 according to~\citep{pawela2016various}. For our study we consider just the unitary evolution with no energy leakage.  The quantum system includes 2 qubits, an
intrinsic Hamiltonian $H^{(0)}$, and two control Hamiltonians $H^{(1)},
H^{(2)}$. The initial operator $X_\textrm{init}$ is a $4$-dimensional identity
matrix and the target operator is the matrix formulation of the CNOT
gate represented by $X_\textrm{CNOT}\in \mathbb{C}^{4\times 4}$. This problem's continuous-time formulation is 
\begin{subequations}
\label{eq:model-cnot-c}
\begin{align}
    \min_{u(t), X(t), H(t)}\quad  & 1 - \frac{1}{4}\left|\operatorname{tr}\left\{X_\textrm{CNOT}^\dagger X(t_f)\right \}\right| \\
    \textrm{s.t.}\quad  & H(t) = H^{(0)} + u_1(t)H^{(1)} + u_2(t)H^{(2)},\ \forall t\in [0,t_f]\\
                    \label{eq:model-cnot-h0}
                    & H^{(0)} = \sigma^x_{1} \sigma^x_{2}+\sigma^y_{1} \sigma^y_{2}+\sigma^z_{1} \sigma^z_{2}\\
                    & H^{(1)} = \sigma^x_1
                    ,\ H^{(2)} = \sigma^y_1\\
                    & \textrm{Constraints \eqref{eq:model-c-1-cons-s}--\eqref{eq:model-c-1-cons-i}}\nonumber\\
                    & u_1(t),\ u_2(t) \in \left\{0,1\right\}, 
\end{align}
\end{subequations}
where $\sigma_i^x,\ \sigma_i^y,\ \sigma_i^z,\ i=1,2$ are Pauli matrices of two qubits. 
The goal is to minimize the objective function, defined as the infidelity
between the final operator and the target operator. 

\subsection{NOT Gate Estimation with Energy Leakage}
\label{sec: model-not}
We consider an anharmonic multilevel system that is widely used in quantum experiments to implement physical qubits. Following~\citep{Motzoi2009}, we consider the lowest three levels of an energy spectrum with only nearest-level coupling for a single qubit $q=1$. The first two levels comprise the qubit and the third level models the energy leakage. Under this setting, the system contains an intrinsic Hamiltonian $H^{(0)}$ and two control Hamiltonians $H^{(1)},\ H^{(2)}$. The initial operator $X_\textrm{init}$ is a 3-dimensional identity matrix, and the target operator is the matrix formulation of the NOT gate corresponding to the lowest three levels of the energy spectrum represented by $X_\textrm{NOT}\in \mathbb{C}^{3\times 3}$, which is
\begin{align}
    X_{\textrm{NOT}} = \begin{pmatrix}
    0 & 1 & 0\\
    1 & 0 & 0\\
    0 & 0 & 0\\
    \end{pmatrix}.
\end{align}
The continuous-time formulation of this problem is 
\begin{subequations}
\label{eq:model-not-c}
\begin{align}
    \min_{u(t), X(t), H(t)}\quad  & 1 - \frac{1}{2}\left|\operatorname{tr}\left\{X_\textrm{NOT}^\dagger X(t_f)\right \}\right| \\
    \textrm{s.t.}\quad  & H(t) = H^{(0)} + u_1(t)H^{(1)} + u_2(t)H^{(2)},\ \forall t\in [0,t_f]\\                \label{eq:model-not-h0}
                    & H^{(0)} = \mu_1 |1\rangle \langle 1| + \mu_2 |2\rangle \langle 2|\\ 
                    \label{eq:model-not-h1}
                    & H^{(1)} = {\omega_1}(|0\rangle \langle 1| + |1\rangle \langle 0|)
                    + {\omega_2}(|1\rangle \langle 2| + |2\rangle \langle 1|)\\
                    \label{eq:model-not-h2}
                    & H^{(2)} = {\omega_1}(i|0\rangle \langle 1| -i |1\rangle \langle 0|)
                    + {\omega_2}(i|1\rangle \langle 2| -i |2\rangle \langle 1|)\\
                    & \textrm{Constraints \eqref{eq:model-c-1-cons-s}--\eqref{eq:model-c-1-cons-i}}\nonumber\\
                    & u_1(t),\ u_2(t) \in \left\{0,1\right\}, 
\end{align}
\end{subequations}
where $|j\rangle$ represents a vector with the $j$-th element as 1 and other elements as 0. The parameters $\mu_1,\ \mu_2$ weigh the relative strength of transitions and the parameters $\omega_1,\ \omega_2$ correspond to the drive frequency. We present the specific values of the parameters used in numerical simulations in Section~\ref{sec: exp-design}.

\subsection{Circuit Compilation Problem}
\label{sec: model-compilation}
For a general quantum algorithm, each quantum circuit needs to be compiled to a representation imposed by specific controllers and constraints in order to execute the algorithm on specific quantum devices. 
We take quantum circuits generated by the unitary coupled-cluster single-double (UCCSD) method~\citep{bartlett2007coupled, romero2018strategies} for estimating the ground state energy of molecules in quantum chemistry as examples. 
We consider the Hamiltonian controllers in the gmon-qubit system~\citep{chen2014qubit} with $q$ qubits. Each qubit has a flux-drive controller and a charge-drive controller. In addition, there is a rectangular-grid topology with nearest-neighbor connectivity denoted by $E$, and each pair of connected qubits $(j_1, j_2)\in E$ has a controller. Define $w_{j_1j_2}(t)$ as control functions for the connected qubits controllers and use $w(t)$ to represent the corresponding vector form of $w_{j_1j_2}(t)$. The initial operator $X_\textrm{init}$ is a $2^q$-dimensional identity matrix. The target
operator is the matrix formulation of the quantum circuit generated by UCCSD for specific molecules represented by
$X_\textrm{UCCSD}\in \mathbb{C}^{2^q\times 2^q}$.  
We refer interested readers to \citet{gokhale2019partial} for more details. The continuous-time formulation of this problem is
\begin{subequations}
\label{eq:model-molecule-c}
\begin{align}
    \min\quad  & 1 - \frac{1}{2^q}\left|\operatorname{tr}\left\{X_\textrm{UCCSD}^\dagger X(t_f)\right \}\right| \\
    \textrm{s.t.}\quad  & H(t) = \sum_{j=1}^{2q} u_j(t)H^{(j)} + \sum_{(j_1,j_2)\in E} w_{j_1j_2}(t)H^{(j_1j_2)},\ \forall t\in [0,t_f]\\
                    \label{eq:model-molecule-h1-1}
                    & H^{(2j-1)} = J_c \sigma^x_{j},\ j=1,\ldots,q\\
                    \label{eq:model-molecule-h1-2}
                    & H^{(2j)} = J_f \left(\begin{array}{ll}
                                0 & 0 \\
                                0 & 1
                                \end{array}\right),\ j=1,\ldots,q\\
                    \label{eq:model-molecule-hc}
                    & H^{(j_1j_2)} = J_e \sigma^x_{j_1}\sigma^x_{j_2},\ (j_1, j_2)\in E\\
                    & \textrm{Constraints \eqref{eq:model-c-1-cons-s}--\eqref{eq:model-c-1-cons-u-sum}}\nonumber\\
                    & u_j(t) \in \left\{0,1\right\},\ w_{j_1j_2}(t)\in \left\{0, 1\right\},\ j=1,\ldots,2q,\ (j_1, j_2)\in E,
\end{align}
\end{subequations}
where $J_c,\ J_f,\ J_e$ are parameters corresponding to the quantum system and $\sigma_j^x$ is the Pauli matrix of qubit $j$. 
We finish this section by showing a general property of the objective function in optimal control problems.
\begin{theorem}
For a quantum system with $q$ qubits, the infidelity $1 - \frac{1}{2^q}\left|\operatorname{tr}\left\{X_{\text {targ }}^{\dagger}
X(t_f)\right\}\right| \in [0,1]$. The objective functions of the optimal control models for CNOT gate \eqref{eq:model-cnot-c} and circuit compilation problem \eqref{eq:model-molecule-c} are bounded between
zero and one.
\end{theorem}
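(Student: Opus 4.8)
The plan is to reduce the statement to the elementary fact that a $d\times d$ unitary matrix has trace of modulus at most $d$. The first step is to check that the evolved operator $X(t_f)$ (i.e., $X_T$ in the discretized model \eqref{eq:model-d-1}) is unitary. In each of the examples the operators $H^{(0)},H^{(1)},H^{(2)},\dots$ are Hermitian — Pauli matrices, tensor products of Pauli matrices, and real diagonal matrices such as $J_f\,\mathrm{diag}(0,1)$ are all Hermitian — so for real coefficients $u_{jk}$ the matrix $H_k$ in \eqref{eq:model-d-1-cons-h} is Hermitian and $e^{-iH_k\Delta t}$ is unitary. Since $X_0=X_{\text{init}}$ is the identity and $X_T=e^{-iH_T\Delta t}\cdots e^{-iH_1\Delta t}$ by \eqref{eq:model-d-1-cons-s}, the operator $X_T$ is a product of unitary matrices and hence unitary. (Equivalently, in the continuous model \eqref{eq:model-c-1} one verifies $\frac{d}{dt}\bigl(X(t)^\dagger X(t)\bigr)=0$ from \eqref{eq:model-c-1-cons-s} with $H(t)$ Hermitian, so unitarity propagates from $X(0)=X_{\text{init}}$.)

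Next, since $X_{\text{CNOT}}$ and $X_{\text{UCCSD}}$ are the matrix representations of genuine unitary gates/circuits, they are unitary, so $W:=X_{\text{targ}}^\dagger X(t_f)$ is a product of unitaries and therefore unitary, of dimension $d=2^q$ (with $d=4$ in the CNOT example). Every eigenvalue $\lambda_i$ of a unitary matrix satisfies $|\lambda_i|=1$, so by the triangle inequality $|\operatorname{tr}(W)|=\bigl|\sum_{i=1}^d\lambda_i\bigr|\le\sum_{i=1}^d|\lambda_i|=d=2^q$. Together with the trivial lower bound $|\operatorname{tr}(W)|\ge 0$, this yields $0\le\frac{1}{2^q}\bigl|\operatorname{tr}\{X_{\text{targ}}^\dagger X(t_f)\}\bigr|\le 1$, and hence $1-\frac{1}{2^q}\bigl|\operatorname{tr}\{X_{\text{targ}}^\dagger X(t_f)\}\bigr|\in[0,1]$, which is the claimed bound on the infidelity.

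Finally, the objective functions in \eqref{eq:model-cnot-c} and \eqref{eq:model-molecule-c} are exactly this infidelity evaluated with $2^q=4$ and $2^q$, respectively, so they inherit the $[0,1]$ bound, completing the proof. The only point that needs any care is the unitarity of both $X(t_f)$ and the target operator; once that is established, the rest is just the eigenvalue–triangle-inequality estimate. I would also remark that the argument as stated relies on the target being unitary, which is why the NOT-with-leakage model \eqref{eq:model-not-c} is not covered by this theorem: there $X_{\text{NOT}}$ has rank $2<3$ and is not unitary, so its normalized trace must be bounded by a separate (and easy) entrywise argument rather than by the spectrum of a unitary matrix.
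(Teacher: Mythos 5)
Your proof is correct, and it reaches the same conclusion from the same starting point as the paper (unitarity of $X(t_f)$ and of the target), but the key estimate is carried out by a different lemma. The paper bounds $\left|\operatorname{tr}\{X_{\textrm{targ}}^\dagger X(t_f)\}\right|$ by writing $X(t_f)=\sum_{i=1}^{2^q} x_i e_i^T$ with orthonormal columns $x_i$, cyclically permuting each trace term to get the $i$th entry of $X_{\textrm{targ}}^\dagger x_i$, and bounding each entry by $\left\|X_{\textrm{targ}}^\dagger x_i\right\|_\infty\le 1$; you instead form $W=X_{\textrm{targ}}^\dagger X(t_f)$, note it is unitary, and use that its eigenvalues lie on the unit circle so $|\operatorname{tr}(W)|\le 2^q$ by the triangle inequality. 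Both are valid; your spectral argument is arguably the more standard one and slightly shorter, while the paper's column-wise bound is more elementary (it only needs that the columns of a unitary are unit vectors, not the spectral theorem or the fact that a product of unitaries is unitary). You also add two things the paper leaves implicit: an explicit verification that $X_T$ (or $X(t_f)$) is unitary, since the Hamiltonians are Hermitian and the propagators are matrix exponentials of $-iH_k\Delta t$, and the correct observation that the theorem deliberately excludes the NOT-with-leakage model \eqref{eq:model-not-c}, whose target $X_{\textrm{NOT}}$ is not unitary, so a separate entrywise bound would be needed there. No gaps.
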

\begin{proof}
By definition, the objective function value cannot be larger than $1$. From the definition of our quantum control problem, the quantum operators $X_\textrm{targ}$ and $X(t_f)$ are both unitary matrices. Therefore, we can write $X(t_f)$ as $X(t_f) = \sum_{i=1}^{2^q} x_i e_i^T$, where $x_i$ are the orthonormal column vectors of $X(t_f)$ and $e_i$ are $2^q\times 1$ vectors with the $i$th component as $1$ and all the other components as $0$. Thus we have
\begin{align}
    \left|\operatorname{tr}\left\{X_{\text {targ }}^{\dagger} X(t_f)\right\}\right| 
    = \left| \sum_{i=1}^{2^q} \operatorname{tr}\left\{X_{\text {targ }}^{\dagger} x_i e_i^T\right\}\right|
    \leq \sum_{i=1}^{2^q}  \left\|X_{\text {targ }}^{\dagger} x_i\right\|_\infty
    \leq 2^q.
\end{align}
The first inequality follows from the fact that the trace of a product is
invariant under cyclic permutations of the factors. The second inequality
follows from $\left\|X_{\text {targ }}^{\dagger} x_i\right\|_\infty\leq 1$. As a
result, the objective function is no less than $0$, and this completes the proof. 
\end{proof}
\begin{remark}
All the problems in this section fall into a class known as right-invariant
systems~\citep{jurdjevic1972control} that consists of control problems on Lie
groups.  Specifically, this class consists of all control problems described by
a Hamiltonian of the form $H(t) = H^{(0)}+\sum_{j} u_j(t)H^{(j)}$.
Controllability results on this class of control problems are well known~\citep{jurdjevic1972control,ramakrishna1995control}.  A test of controllability
of these problems consists of examining the dynamical Lie algebra generated by
the Hamiltonians, $H^{(j)}$, and their nested commutators and comparing that Lie
algebra with the Lie algebra of the subgroup defined by the symmetries in a given
problem. For the systems considered in this paper, there are indeed enough
control Hamiltonians to reach the target state by using piecewise continuous control functions,
$u_j(t)$.
\end{remark}

\section{Binary Quantum Control Algorithms}
\label{sec: alg-binary}
In this section we propose models and algorithms to obtain binary quantum
control results for the discretized model \eqref{eq:model-d-1}. 
In Section~\ref{sec: alg-grape} we review the adjoint approximation method of gradients used in 
the GRAPE algorithm for solving the continuous relaxation of the optimal control problem 
\eqref{eq:model-d-1} and derive the gradient
update rule for the aforementioned four specific examples. 
In Section~\ref{sec: alg-penalty} we propose an exact penalty function for the SOS1
property that allows us to add constraints to GRAPE. In Section~\ref{sec: alg-rounding} we introduce 
rounding techniques to obtain binary controls. These algorithms form the basis of our discrete framework discussed in Sections~\ref{sec: alg-infrequent}--\ref{sec: alg-tr}.

The overall framework is outlined in Figure~\ref{fig:framework}, which presents the methods in Sections~\ref{sec: alg-binary}--\ref{sec: alg-tr} and the corresponding tables and results. We summarize the algorithms of our framework in Table~\ref{tab:list-alg}. 
In the figure, the rectangles represent the three
formulations of our model, and the ellipses represent the algorithms. The
parallelograms show the result tables corresponding to algorithms, where the
dashed parallelograms refer to the results of the continuous models and the solid parallelograms refer to the results of the discrete models.
\tikzstyle{block} = [rectangle, draw, fill=white, text width=8em, text centered, minimum height=4em, font=\fontsize{10}{10}\selectfont]
\tikzstyle{line} = [draw, -latex']
\tikzstyle{cloud} = [draw, ellipse, text width=5.5em, text centered, node distance=3cm, minimum height=4em, font=\fontsize{10}{10}\selectfont]
\tikzstyle{cloud2} = [draw, ellipse, aspect=3, text width=8.5em, text centered, node distance=3cm, minimum height=4em, font=\fontsize{10}{10}\selectfont]
\tikzstyle{result} = [draw, trapezium, trapezium left angle = 65, trapezium right angle = 115, trapezium stretches, text width=8em, text centered, minimum width=2.5cm, minimum height=2em, font=\fontsize{10}{10}\selectfont]
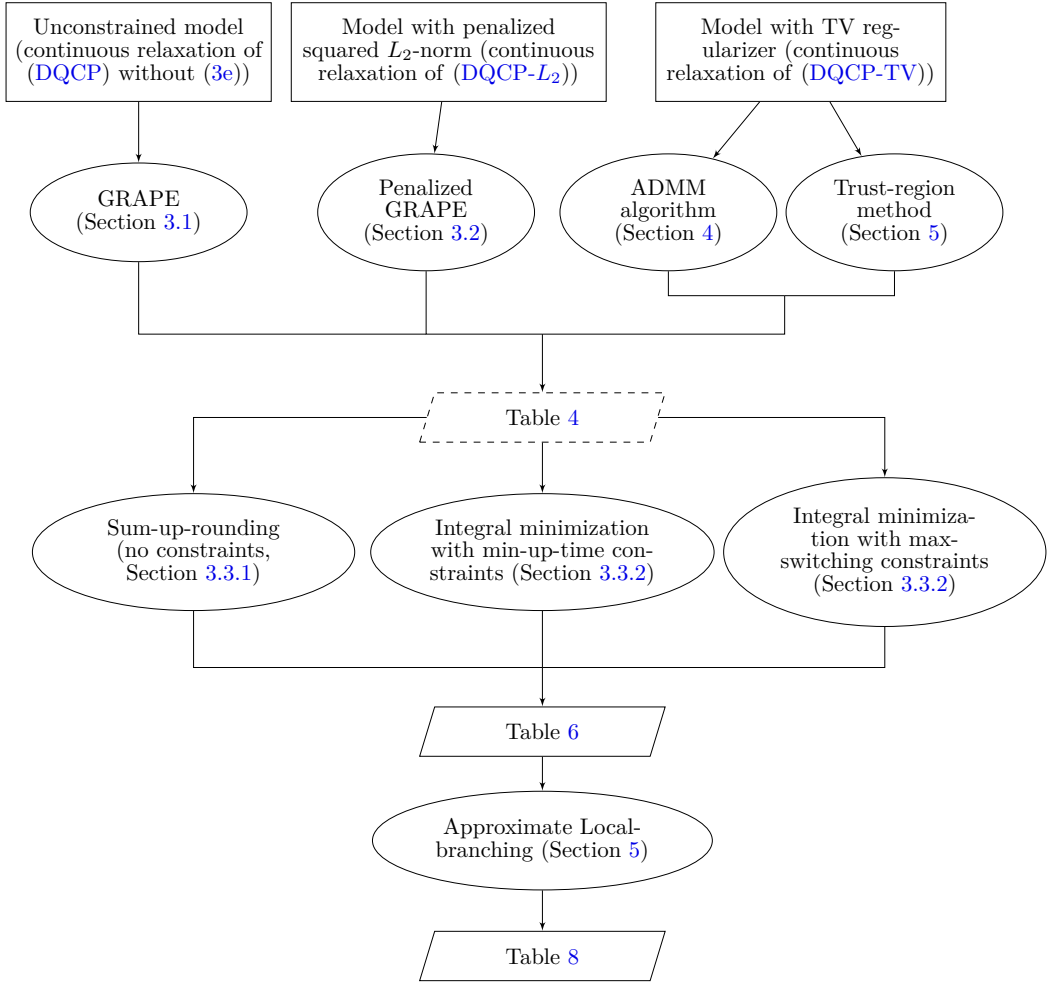
\begin{figure}[ht!]
\centering
\begin{adjustbox}{scale=0.85}
\begin{tikzpicture}[node distance=5cm, auto]
    \node [block, text width=10em] (unconstrained) at (-1.5,0) {Unconstrained model (continuous relaxation of \eqref{eq:model-d-1} without \eqref{eq:model-d-1-cons-u-sum})};
    \node [block, right of=unconstrained, xshift=-0.2cm, text width=12em] (L2) {Model with penalized squared $L_2$-norm (continuous relaxation of \eqref{eq:model-d-l2-p})};
    \node [block, right of=L2, text width=11em, xshift=0.45cm] (TV) {Model with TV regularizer (continuous relaxation of \eqref{eq:model-tv})};
    \node [cloud, below of=unconstrained, yshift=0.5cm] (grape) {GRAPE (Section~\ref{sec: alg-grape})};
    \node [cloud, right of=grape, xshift=1.45cm] (pGRAPE) {Penalized GRAPE (Section~\ref{sec: alg-penalty})};
    \node [cloud, right of=pGRAPE,xshift=0.75cm] (ADMM) {ADMM algorithm (Section~\ref{sec: alg-infrequent})};
    \node [cloud, right of=ADMM,xshift=0.5cm] (TR) {Trust-region method (Section~\ref{sec: alg-tr})};
    \node[coordinate, below of=pGRAPE, yshift=3.1cm, xshift=1.8cm] (coor1) {};
    \node[coordinate, below of=ADMM, yshift=3.7cm, xshift=1.8cm] (coor2) {};
    \node[coordinate, below of=coor2, yshift=4.4cm] (coor4) {};
    \node [result, below of=coor1, yshift=3.7cm, dashed] (resc) {Table~\ref{tab: res-obj-c}};
    \node[cloud2, below of=resc, yshift=0.9cm, xshift=-5.4cm] (sur) {Sum-up-rounding (no constraints, Section~\ref{sec: alg-sur})};
    \node[cloud2, right of=sur, xshift=2.4cm, text width=9.1em] (minup) {Integral minimization with min-up-time constraints (Section~\ref{sec: alg-bnb})};
    \node[cloud2, right of=minup, xshift=2.3cm] (maxswitch) {Integral minimization with max-switching constraints (Section~\ref{sec: alg-bnb})};
    \node[coordinate, below of=minup, yshift=3.2cm] (coor5) {};
    \node[result, below of=coor5, yshift=4cm] (resr) {Table~\ref{tab: res-obj-r}};
    \node[cloud2, below of=resr, yshift=1.3cm, text width=9em] (ALB)
    {Approximate Local-branching (Section~\ref{sec: alg-tr})};
    \node[result, below of=ALB, yshift=3.2cm] (resalb) {Table~\ref{tab: res-obj-improve}};
    \path [line] (unconstrained) -- (grape);
    \path [line] (L2) -- (pGRAPE);
    \path [line] (TV) -- (TR);
    \path [line] (TV) -- (ADMM);
    \draw (pGRAPE) |- (coor1);
    \draw (grape) |- (pGRAPE|-coor1);
    \draw (TR) |- (coor2);
    \draw (ADMM) |- (coor2);
    \draw (coor2) |- (coor1);
    \path [line] (coor1) -- (resc);
    \path [line] (resc) -| (sur);
    \path [line] (resc) -- (minup);
    \path [line] (resc) -| (maxswitch);
    \draw (sur) |- (coor5);
    \draw (minup) |- (coor5);
    \draw (maxswitch) |- (coor5);
    \path [line] (coor5) -- (resr);
    \path [line] (resr) -- (ALB);
    \path [line] (ALB) -- (resalb);
    
\end{tikzpicture}
\end{adjustbox}
\caption{Overview and results of the algorithmic framework.}
\label{fig:framework}
\end{figure}

\begin{table}[h!]
    \centering
        \caption{List of acronyms of algorithms in numerical simulations. The pGRAPE and GRAPE algorithms are identical when the SOS1 constraint is absent. }
    \begin{adjustbox}{width=\textwidth}
        \begin{tabular}{ll}
        \hline
         Name& Algorithms \\
         \hline
         pGRAPE 
         & Solving the continuous relaxation with squared $L_2$-penalty function by penalized GRAPE. \\
         \hline
         TR& Solving the continuous relaxation with TV regularizer and SOS1 property by a trust-region method.\\
         \hline
         ADMM& Solving the continuous relaxation with TV regularizer and SOS1 property by ADMM.\\
         \hline
         SUR & Rounding the continuous solutions without hard control constraints.\\
         \hline
         MT& Rounding the continuous solutions with min-up-time constraints.\\
         \hline
         MS& Rounding the continuous solutions with max-switching constraints.\\
         \hline
        ALB & Improving the binary solutions by the approximate local-branching method. \\
         \hline
    \end{tabular}
    \end{adjustbox}   
    \label{tab:list-alg}
\end{table}

\subsection{GRAPE Approach for Solving Continuous Relaxations}
\label{sec: alg-grape}
The GRAPE algorithm~\citep{khaneja2005optimal} is a gradient descent algorithm for 
solving the unconstrained continuous control problem. 
We derive a gradient descent algorithm based on the adjoint approach used in the GRAPE algorithm 
to solve the continuous relaxation of our discretized model \eqref{eq:model-d-1} without the SOS1 property 
enforced by constraint \eqref{eq:model-d-1-cons-u-sum}. We eliminate variables $X$ and $H$ by converting them into implicit functions of the control variables $u$ using constraints 
\eqref{eq:model-d-1-cons-h}--\eqref{eq:model-d-1-cons-i}. Then the minimization problem of $F(X_T)$ on $u,\ X,\ H$ is converted to the 
minimization problem of $F(X_T(u))$ over the variables $u$. 
The goal of the adjoint approach is to approximate the gradient of the objective function $F$ with
respect to control variables $u_{jk},\ j=1,\ldots,N,\ k=1,\ldots,T$ based on the discretized 
time horizon. For
simplicity, we define propagators for each time step $k=1,\ldots,T$ as $U_k
= \exp\left\{-iH_k\Delta t\right\}$. For small $\Delta t$, the gradient of $U_k$
corresponding to $u_{jk}$ is estimated as  
\begin{align}
    \frac{\partial U_k}{\partial u_{jk}}= -i\Delta t H^{(j)} U_k,\ j=1,\ldots,N,\ k=1,\ldots,T.
\end{align}
The gradient of the objective function with respect to the propagators depends
on its specific formulation. We discuss gradients of two specific functions for 
the examples in Section~\ref{sec: model}. 

\paragraph{Energy function} For generality, we use $\bar{H}$ to represent the
Hamiltonian in the energy objective function, which means that the goal is to
minimize the energy corresponding to $\bar{H}$. Specifically, we have $\bar{H}
= H^{(2)}$ in the example in Section~\ref{sec: model-energy}. The general
energy function can be expressed as
\begin{align}
    \langle \psi_0| X_T^\dagger \bar{H} X_T \ket{\psi_0} = \langle \psi_0| U_1^\dagger \cdots U_T^\dagger \bar{H} U_T \cdots U_1 \ket{\psi_0}.
\end{align}
For each time step $k=1,\ldots,T-1$, define variables $\ket{\kappa_k} = U_{k+1}^\dagger \cdots U_{T}^\dagger \bar{H}X_T\ket{\psi_0}$. For the last time step $T$, we define a variable $\ket{\kappa_T} = \bar{H}X_T\ket{\psi_0}$. Then the gradient with respect to $u_{jk}$ is computed as 
\begin{align}
    \frac{\partial F}{\partial u_{jk}}= \frac{2}{E_\textrm{min}}\textrm{Re}\left[i\Delta t \langle \kappa_{k}| H^{(j)} X_k\ket{\psi_0}\right], \ j=1,\ldots,N,\ k=1,\ldots,T.
\end{align}
\paragraph{Infidelity function} The infidelity function can be expressed as
\begin{align}
    1 - \frac{1}{2^q}\left|\operatorname{tr}\left\{X_{\textrm{targ}}^{\dagger} X_T\right\}\right|
     = 1 - \frac{1}{2^q}\left|\operatorname{tr}\left\{X_{\textrm{targ}}^\dagger U_T\cdots U_1 X_0\right\}\right|.
\end{align}
For each time step $k=1,\ldots,T-1$, define variables $\lambda_k = U_{k+1}^\dagger \cdots U_T^\dagger X_\textrm{targ}$. For the last time step $T$, we define a variable $\lambda_T = X_\textrm{targ}$.
Then the gradient of the trace with respect to $u_{jk}$ is computed as 
\begin{align}
    \frac{\partial \operatorname{tr}\left\{X_{\textrm{targ}}^{\dagger} X_T\right\}}{\partial u_{jk}}= -i\Delta t \lambda_{k}^\dagger H^{(j)} X_k,\ j=1,\ldots,N,\ k=1,\ldots,T.
\end{align}
Using the definition of objective function $F$, we compute the gradient as  
\begin{align}
  \frac{\partial F}{\partial u_{jk}}= \frac{1}{2^q}\textrm{Re}\left[i\Delta t \,\textrm{tr}\left\{\lambda_{k}^\dagger H^{(j)} X_k\right\} e ^{-i\textrm{arg} \left(\operatorname{tr}\left\{X_{\textrm{targ}}^{\dagger} X_T\right\}\right)}\right],\ j=1,\ldots,N,\ k=1,\ldots,T, 
\end{align}
where $\textrm{arg}(\cdot)$ represents the argument of a complex number. 

With these computed gradients, we minimize the reduced function $F(X_T(u))$ with respect to control variables $u\in [0,1]^{N\cdot T}$ by L-BFGS-B \citep{byrd1995limited}, a well-known quasi-Newton algorithm for solving bound-constrained optimization problems. 
\begin{remark}
Our implementation of GRAPE extends the classical gradient-ascend scheme (see, e.g.,~\citep{khaneja2005optimal,Larocca_2021}) in a number of ways: We use a bound-constrained quasi-Newton method to solve problems with bounded control variables, and we introduce a penalized GRAPE (denoted by pGRAPE) that adds a quadratic  penalty of the SOS1 property in Section \ref{sec: alg-penalty}.
\end{remark}

We take the circuit compilation problem (Section~\ref{sec: model-compilation}) on 
the molecule H$_2$ (dihydrogen) as an example to show the control results, 
and we present the continuous control results without
the SOS1 property obtained by the GRAPE algorithm in Figure~\ref{fig:pre-grape-control}.
We define the absolute violation of the SOS1 property at each time step $k=1,\ldots,T$ as $|\sum_{j=1}^N u_{jk} - 1|$ and present the value at each time step in Figure~\ref{fig:pre-grape-violation}. The figure shows that the results violate the SOS1 property at all time steps, with the maximum absolute violation being $2.404$. Hence we introduce a penalized function in Section~\ref{sec: alg-penalty} to impose the SOS1 property. 
\begin{figure}[ht!]
    \centering
    \subfloat[Control results of $5$ controllers \label{fig:pre-grape-control}]{\includegraphics[width=0.5\textwidth]{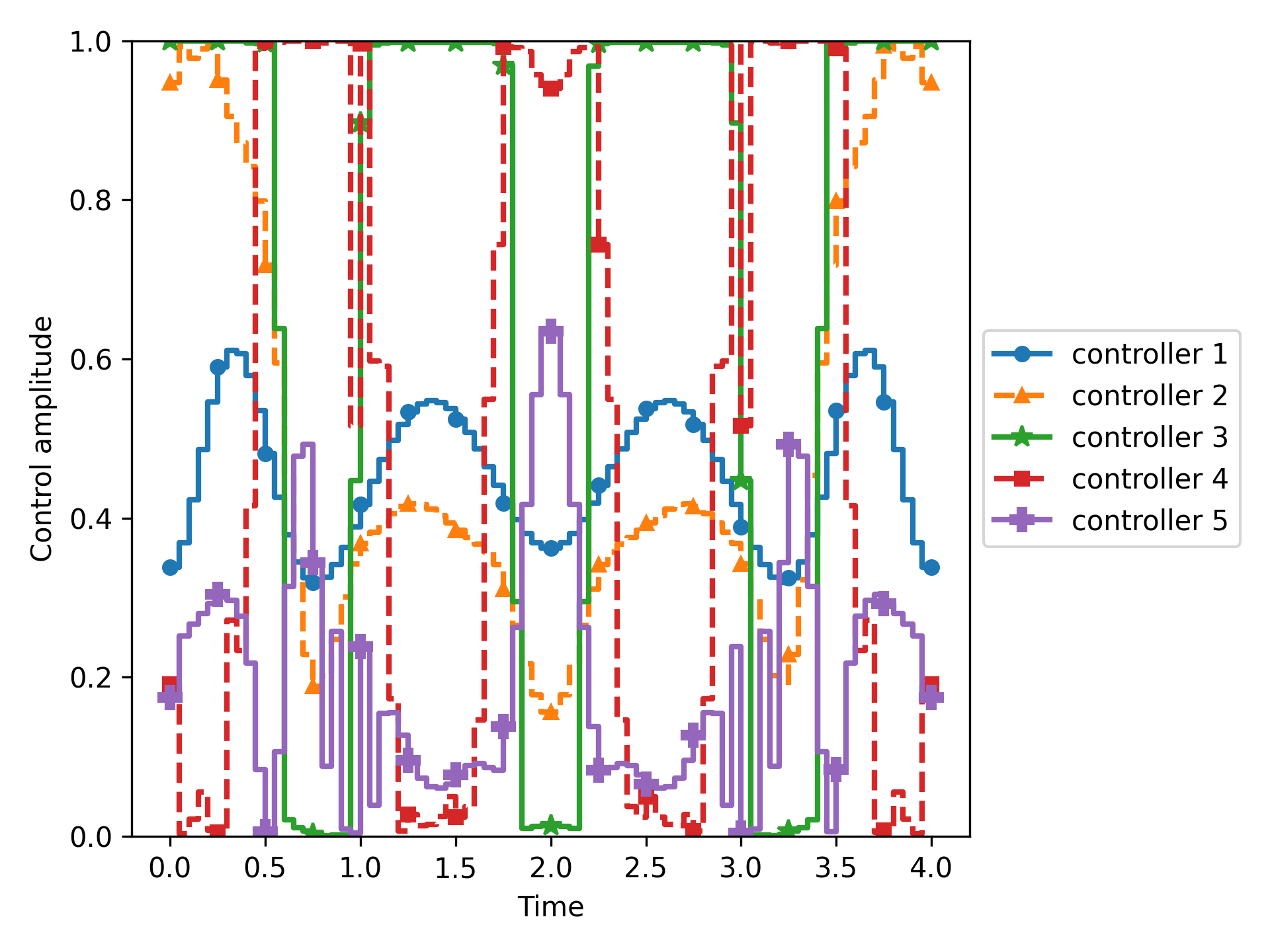}}
    \subfloat[Absolute violation at each step \label{fig:pre-grape-violation}]{\includegraphics[width=0.5\textwidth]{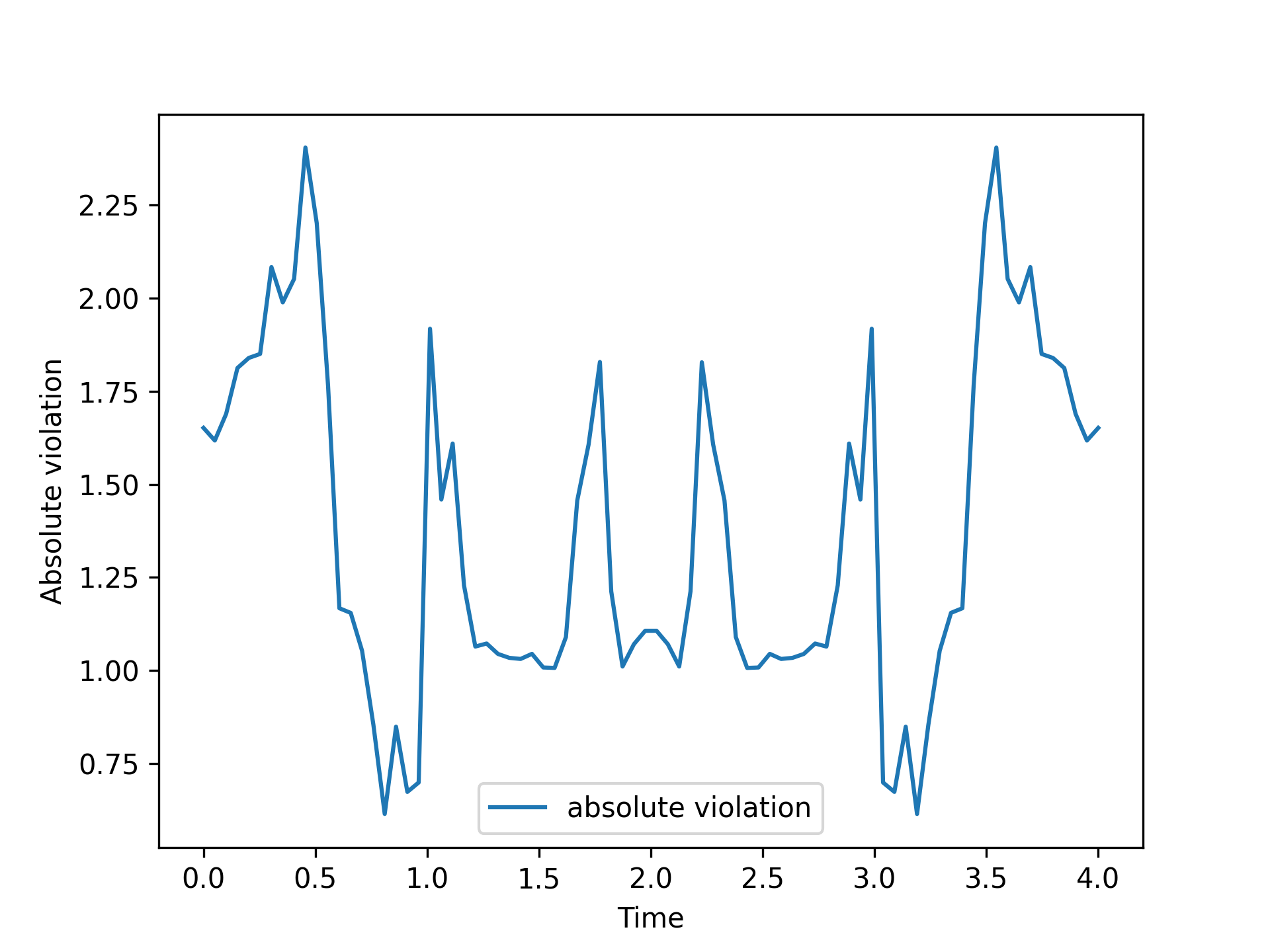}}
    \caption{(a) Control results of the continuous relaxation of the discretized model \eqref{eq:model-d-1} without the SOS1 property constraint \eqref{eq:model-d-1-cons-u-sum} of the circuit compilation example on the molecule H$_2$. 
    (b) Absolute violation at each time step.}
    \label{fig:pre-grape}
\end{figure}
\subsection{Penalty Function for SOS1 Property}
\label{sec: alg-penalty}
Our GRAPE algorithm is defined only for unconstrained or bound-constrained problems~\citep{khaneja2005optimal}. 
Hence, we convert the SOS1 property \eqref{eq:model-d-1-cons-u-sum} into a squared
$L_2$-penalty term and add it to the objective function. Define the 
squared $L_2$ function of the constraint violation as 
\begin{align}
    l(u, T) = \sum_{k=1}^T \left(\sum_{j=1}^N u_{jk} - 1\right)^2.
\end{align}
We let a constant $\rho> 0$
be the penalty parameter. Then  the model with the penalty function is 
\begin{align}
\label{eq:model-d-l2-p} 
\tag{DQCP-$L_2$}
    \min \ F(X_T) + \rho l(u,T) \quad
    \textrm{s.t.} \ \textrm{Constraints \eqref{eq:model-d-1-cons-h}--\eqref{eq:model-d-1-cons-u-sum}}\nonumber.
\end{align}
We use $l(u^*_\rho,T)$ to denote the optimized value of the  squared $L_2$ function $l(u,T)$ under the penalty parameter $\rho$. 
The following theorem discusses the exactness of the penalized objective function. 
\begin{theorem}
\label{theo: exactness}
[Exactness of Squared $L_2$-Penalty] Let $z_1$ and $z_2(\rho)$ be the optimal values of \eqref{eq:model-d-1} and \eqref{eq:model-d-l2-p}, respectively. 
There exists $\tilde{\rho}<\infty$ such that $z_1=z_2(\rho)$ for all $\rho>\tilde{\rho}$.
\end{theorem}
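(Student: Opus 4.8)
The plan is to exploit the fact that the binary feasible set of \eqref{eq:model-d-1} is \emph{finite}, so that the usual machinery of exact penalty functions applies without any constraint-qualification subtleties. First I would observe that on the feasible region of \eqref{eq:model-d-l2-p} (namely constraints \eqref{eq:model-d-1-cons-h}--\eqref{eq:model-d-1-cons-u}, which still include the integrality \eqref{eq:model-d-1-cons-u} but now also the SOS1 equality \eqref{eq:model-d-1-cons-u-sum}), the penalty term $\rho\, l(u,T)$ vanishes identically, since $\sum_j u_{jk}=1$ for every $k$; hence the two problems have the same objective there and $z_2(\rho)\le z_1$ for every $\rho>0$. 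The reverse inequality is the substantive direction: I need to show that for $\rho$ large enough, no infeasible point (one violating SOS1 but still binary) can beat $z_1$.

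The key step is a \emph{separation argument} on the finite set $\mathcal{B}=\{0,1\}^{N\cdot T}$ of all binary control vectors. Partition $\mathcal{B}$ into the SOS1-feasible vectors $\mathcal{B}_0=\{u\in\mathcal{B}: \sum_j u_{jk}=1\ \forall k\}$ and the rest $\mathcal{B}_1=\mathcal{B}\setminus\mathcal{B}_0$. For $u\in\mathcal{B}_1$, every summand $\big(\sum_j u_{jk}-1\big)^2$ is a nonnegative integer, and at least one is $\ge 1$, so $l(u,T)\ge 1$; thus $\delta:=\min_{u\in\mathcal{B}_1} l(u,T)\ge 1>0$. Next, $F(X_T(u))$ ranges over a finite set of real numbers as $u$ ranges over $\mathcal{B}$ (since $X_T$ is a continuous — indeed analytic — function of $u$ via \eqref{eq:model-d-1-cons-h}--\eqref{eq:model-d-1-cons-s} and $F$ is continuous by assumption, and $\mathcal{B}$ is finite), so $M:=\max_{u\in\mathcal{B}}|F(X_T(u))|<\infty$. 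Taking $\tilde\rho = 2M/\delta$, for any $\rho>\tilde\rho$ and any $u\in\mathcal{B}_1$ we get $F(X_T(u))+\rho\,l(u,T)\ge -M+\rho\delta > -M+2M = M \ge z_1$, while any SOS1-feasible $u$ achieving $z_1$ gives objective value exactly $z_1\le M$. Hence the minimizer of \eqref{eq:model-d-l2-p} for such $\rho$ lies in $\mathcal{B}_0$, where the penalty is zero and the objective coincides with that of \eqref{eq:model-d-1}, giving $z_2(\rho)=z_1$.

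I would then note that this $\tilde\rho$ is finite, which is exactly the claim, and remark that the bound $l(u,T)\ge 1$ on violated vectors is what makes the squared $L_2$ penalty \emph{exact} on a discrete set — a property it would not enjoy over the continuous relaxation. The main obstacle, such as it is, is being careful that one is penalizing only the SOS1 constraint and not relaxing integrality: the theorem compares \eqref{eq:model-d-1} with \eqref{eq:model-d-l2-p}, both of which retain \eqref{eq:model-d-1-cons-u}, so the finiteness argument is legitimate; if one instead wanted exactness against the \emph{continuous} relaxation the statement would be false, and I should make sure the write-up does not overclaim. A minor point to handle cleanly is existence of the minimizers attaining $z_1$ and $z_2(\rho)$: both optimize a continuous function over a nonempty finite set, so minima are attained and the quantities $z_1,z_2(\rho)$ are well defined.
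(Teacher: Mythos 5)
Your argument is correct, but it takes a different route from the paper, which in fact offers no proof of Theorem~\ref{theo: exactness} at all: it simply notes that a more general exact-penalty result for integer programs is proved in \citep{sinclair1986exact}. Your proof is the elementary specialization of that result to this setting, and it is sound: since the controls in \eqref{eq:model-d-l2-p} remain binary, each violation $\sum_{j} u_{jk}-1$ is an integer, so every SOS1-infeasible binary point satisfies $l(u,T)\geq 1$, while finiteness of $\{0,1\}^{N\cdot T}$ gives a uniform bound $M$ on $|F(X_T(u))|$; the explicit threshold $\tilde{\rho}=2M/\delta$ then strictly separates all SOS1-violating binary points from the feasible optimum, forcing the minimizer of the penalized problem into the SOS1-feasible set, where the two objectives coincide. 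What your route buys is a self-contained proof with a computable threshold, and it makes transparent exactly why exactness relies on integrality of the controls --- the same point the paper needs later in Section~\ref{sec: alg-sur} and Theorem~\ref{theo:l2-bound}, where the continuous relaxation only achieves $l(u^*_\rho,T)\sim O(1/\rho)$ rather than exact feasibility; the citation-based approach instead buys generality (arbitrary penalized constraints) at the cost of opacity. One slip you should fix: in your opening sentence you describe the feasible region of \eqref{eq:model-d-l2-p} as still containing the SOS1 equality \eqref{eq:model-d-1-cons-u-sum}; if that were so, the theorem would be trivial and your separation step vacuous. The intended penalized model retains only \eqref{eq:model-d-1-cons-h}--\eqref{eq:model-d-1-cons-i} together with the integrality constraint \eqref{eq:model-d-1-cons-u}, with SOS1 moved entirely into the penalty (the constraint range printed in the paper appears to be a typo), and the remainder of your proof uses exactly this reading, so the argument stands once that sentence is corrected.
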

A more general version of Theorem~\ref{theo: exactness} is proved in \citep{sinclair1986exact}. 
Next we discuss the value of the optimized squared $L_2$-penalty term with respect to the penalty parameter $\rho$ for the continuous relaxation of \eqref{eq:model-d-l2-p}. 
\begin{theorem}
\label{theo:l2-bound}
For the continuous relaxation of \eqref{eq:model-d-l2-p}, if the original objective function $F$ is upper bounded by a constant $C_F$ and \eqref{eq:model-d-1} is feasible, then for any $T$ the optimized squared $L_2$-penalty term uniformly holds that $|l(u^*_\rho, T)|\leq 2C_F/\rho$.
\end{theorem}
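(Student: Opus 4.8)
The plan is to exploit the freedom to pick the penalty parameter large while comparing the optimal penalized objective against a convenient feasible point. Let $u^\star_\rho$ denote an optimal control for the continuous relaxation of \eqref{eq:model-d-l2-p} at penalty level $\rho$, with associated final operator $X_T(u^\star_\rho)$, and write $z_2^{\mathrm{rel}}(\rho)$ for the optimal value of that relaxation. First I would observe that, since \eqref{eq:model-d-1} is feasible, there is a feasible binary control $\bar u$ satisfying the SOS1 constraint exactly, so $l(\bar u, T) = 0$; plugging $\bar u$ into the relaxed penalized problem gives $z_2^{\mathrm{rel}}(\rho) \le F(X_T(\bar u)) + \rho\cdot 0 = F(X_T(\bar u)) \le C_F$.

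Next I would use the definition of $z_2^{\mathrm{rel}}(\rho)$ from below: evaluating the penalized objective at the optimizer $u^\star_\rho$,
\begin{align}
F(X_T(u^\star_\rho)) + \rho\, l(u^\star_\rho, T) = z_2^{\mathrm{rel}}(\rho) \le C_F.
\end{align}
Now I need a lower bound on $F(X_T(u^\star_\rho))$. Here I would invoke the structure of the objective functions in Section~\ref{sec: model}: the infidelity objectives are bounded below by $0$ (this is exactly the content of the Theorem proved just above), and likewise the one-minus-approximation-ratio energy objective is nonnegative by the discussion in Section~\ref{sec: model-energy}. So $F(X_T(u^\star_\rho)) \ge 0$ for all the examples under consideration. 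Combining, $\rho\, l(u^\star_\rho, T) \le C_F - F(X_T(u^\star_\rho)) \le C_F$, hence $l(u^\star_\rho, T) \le C_F/\rho$. Since $l \ge 0$ by construction, $|l(u^\star_\rho, T)| = l(u^\star_\rho, T) \le C_F/\rho$, and the stated bound $2C_F/\rho$ follows a fortiori.

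The main subtlety — and the place I would be most careful — is the lower bound $F \ge 0$: the theorem is stated for "the original objective function $F$" generically, so I would either restrict the statement to the objectives actually appearing in the paper (where nonnegativity is established) or, if the intent is fully general, replace the bound by $|l(u^\star_\rho,T)| \le (C_F - \inf F)/\rho$ with $\inf F$ a finite lower bound on $F$ over the feasible set; the factor $2$ in the claimed bound suggests the authors allow some slack of exactly this kind (e.g. $\inf F \ge -C_F$). The rest is immediate: feasibility of \eqref{eq:model-d-1} supplies the zero-violation comparison point, optimality supplies the upper bound $z_2^{\mathrm{rel}}(\rho)\le C_F$, and the uniformity in $T$ is automatic because neither $C_F$ nor the argument depends on the discretization level.
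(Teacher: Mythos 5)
Your proposal is correct and follows essentially the same route as the paper's proof: compare the optimal penalized value against a feasible point of \eqref{eq:model-d-1} with zero SOS1 violation (so the optimal value is at most $C_F$), then use a lower bound on $F$ to bound $\rho\, l(u^*_\rho,T)$. Your diagnosis of the factor $2$ is exactly right — the paper's proof silently strengthens the hypothesis to $|F(u)|\leq C_F$, i.e.\ $F\geq -C_F$, which yields the budget $2C_F$ for the penalty term, whereas your use of $F\geq 0$ for the paper's objectives gives the sharper bound $C_F/\rho$ that implies the stated one.
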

A detailed proof is provided in Appendix~\ref{app: proof-sos1}. 
\begin{remark}
The GRAPE algorithm is still able to solve the unconstrained continuous relaxation with the penalty function \eqref{eq:model-d-l2-p}. We propose a penalized version (pGRAPE) in Section \ref{sec: alg-grape} by adding a term $2\rho \left(\sum_{j=1}^N u_{jk}-1\right)$ to the approximated gradient of the objective function. 
\end{remark}
\begin{remark}
For the quantum control problem with two controllers, the SOS1 can be enforced directly by substituting 
\begin{align}
    u_{2k} = 1 - u_{1k},\ k=1,\ldots,T
\end{align}
in the model instead of using the penalty function. 
\end{remark}
We show the control results obtained from the continuous relaxation with the squared $L_2$-penalty function and penalty parameter $\rho = 1.0$ in Figure~\ref{fig:pre-sos1}. 
The value of the squared penalty term is $5.55\times 10^{-7}$, leading to a small error between continuous and binary results with the SOS1 property obtained by rounding techniques. However, we still need to provide rounding techniques to obtain binary results. 

\begin{figure}[ht!]
    \centering
    \includegraphics[width=0.5\textwidth]{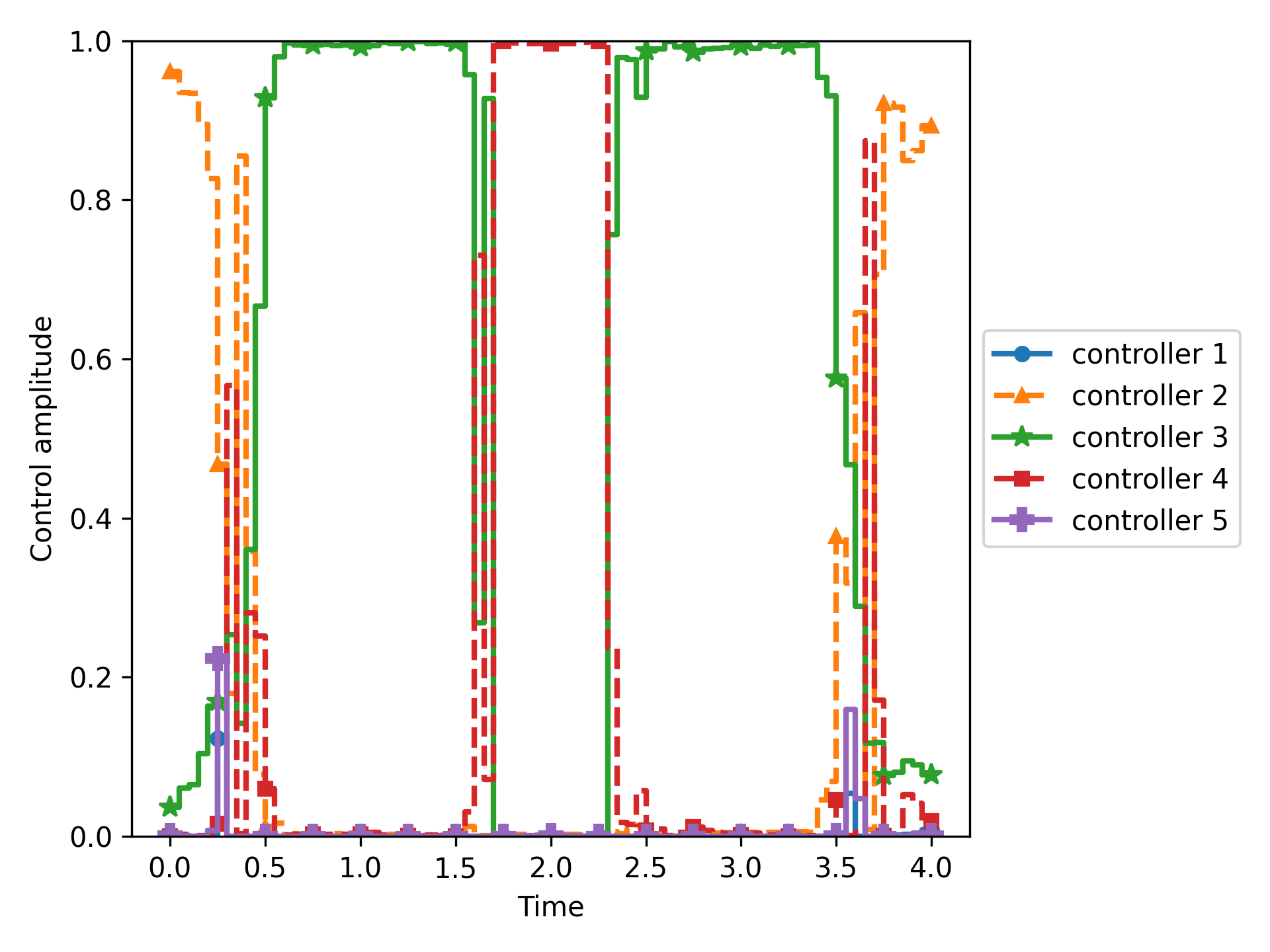}
    \caption{Control results of the continuous relaxation of the discretized model with a squared $L_2$-penalty function \eqref{eq:model-d-l2-p} of the circuit compilation example on the molecule H$_2$ (objective value: 4.37E$-$07).}
    \label{fig:pre-sos1}
\end{figure}
\subsection{Rounding Techniques and Optimality Guarantees}
\label{sec: alg-rounding}
In this section we introduce rounding techniques to obtain binary control results and investigate their optimality guarantees. In Section~\ref{sec: alg-sur} we review the sum-up rounding technique designed for binary controls without constraints and discuss the difference compared with the continuous results. In Section~\ref{sec: alg-bnb} we introduce the integral minimization problem for rounding to obtain restricted binary controls. 
\subsubsection{Sum-Up Rounding}
\label{sec: alg-sur}
The sum-Up rounding (SUR) strategy proposed by \citet{sager2012integer} is a well-known method to obtain integer controls from continuous, relaxed ones in optimal control theory; see, for example, \citep{you2011mixed, manns2020multidimensional}. 
One concern when applying SUR is that the solution of the relaxation does not satisfy the SOS1 constraints because the exactness of the squared $L_2$-penalty is only for binary controls. We show that as long as the violation of the SOS1 constraint is small, the solution that is constructed via SUR still satisfies the strong convergence properties of SUR. 
We define the discretized continuous control function $u^c$ as $u^c_{jk}\in [0,1],\ j=1,\ldots,N,\ k=1,\ldots,T$. We  define the discretized binary control function $u^b$ as $u^b_{jk}\in \left\{0,1\right\},\ j=1,\ldots,N,\ k=1,\ldots,T$. For each time step $k=1,\ldots,T$, we denote the vector form of control variables as $u^c_k$ and $u^b_k$. 
We define the current cumulative deviation between continuous and binary controls as $\hat{p}_{jk},\ j=1,\ldots,N,\ k=1,\ldots,T$. The SUR approach to obtain binary controls is described in Algorithm~\ref{alg:sur}.

\begin{algorithm}[!ht] \caption{Sum-Up Rounding with the SOS1 Property. \label{alg:sur} } 
    \DontPrintSemicolon
    \DontPrintSemicolon
    \SetNoFillComment
    \KwInput{Continuous control $u^c$ on uniform discretization.}
    \tcc{Iterate over each time step}
    \For{$k=1,\ldots,T$}
    {\tcc{Iterate over each controller}
    \For{$j=1,\ldots,N$} 
    {Compute cumulative deviation as 
    $\displaystyle \hat{p}_{jk} = \sum_{\tau=1}^k u^c_{j\tau} \Delta t -\sum_{\tau=1}^{k-1} u^b_{j\tau } \Delta t$.}
    Choose controller $j^* = \argmax_{j=1,\ldots,N} \hat{p}_{jk}$. If there is a tie, 
    we break the tie by choosing the smallest index.\\
    Set binary control $u^b_{j^*k}=1$ and $u^b_{jk}=0,\ \forall j\neq j^*$.}
    \KwOutput{Binary control $u^b$.}
\end{algorithm}

The construction of $u_{jk}^b$ ensures that the binary control satisfies the SOS1 property. \citet{sager2012integer} proved that if the continuous control $u^c$ satisfies the SOS1 property, then the rounded control $u^b$ will converge to $u^c$ when the length of a time interval $\Delta t$ converges to zero. This does not hold for arbitrary continuous controls without the SOS1 property, however, as the next proposition shows. 
\begin{proposition}
\label{prop: nosos1}
Let continuous control $u^c$ and binary control $u^b$ defined in Algorithm \ref{alg:sur} be given for $N\geq 2$. Define
\begin{align}
    \epsilon(\Delta t) \coloneqq \max_{k=1,\ldots,T} \left |\sum_{\tau=1}^{k} \left(\sum_{j=1}^N u^c_{j\tau} - 1\right) \Delta t \right |.
\end{align}
If $t_f <\infty $, then $u^b$ satisfies the SOS1 property $\sum_{j=1}^{N} u^b_{jk} = 1$ at any time step $k=1,\ldots,T$, and we have 
\begin{align}
    \max_{k=1,\ldots,T} \left\|\sum_{\tau=1}^{k} \left( u^c_{\tau} - u^b_{\tau}\right) \Delta t\right\|_\infty \geq \frac{1}{N}\epsilon(\Delta t).
\end{align}
\end{proposition}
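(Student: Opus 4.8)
The plan is to handle the two conclusions separately, since each is short. The SOS1 claim is immediate from the construction in Algorithm~\ref{alg:sur}: at every time step $k$ the algorithm picks a single index $j^\ast=\argmax_{j}\hat p_{jk}$ (ties broken by smallest index) and sets $u^b_{j^\ast k}=1$ and $u^b_{jk}=0$ for all $j\neq j^\ast$, so $\sum_{j=1}^N u^b_{jk}=1$ for every $k=1,\ldots,T$. The hypothesis $t_f<\infty$ enters only to guarantee $T<\infty$, so that all sums and the outer maximum appearing below range over finite index sets.

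For the quantitative bound I would introduce the cumulative deviation vectors $P_k\coloneqq\sum_{\tau=1}^{k}(u^c_\tau-u^b_\tau)\,\Delta t\in\R^N$ for $k=1,\ldots,T$. Summing the $N$ components of $P_k$ and using the SOS1 identity just established for $u^b$,
\begin{align}
\sum_{j=1}^N (P_k)_j = \sum_{\tau=1}^{k}\Big(\sum_{j=1}^N u^c_{j\tau}-\sum_{j=1}^N u^b_{j\tau}\Big)\Delta t = \sum_{\tau=1}^{k}\Big(\sum_{j=1}^N u^c_{j\tau}-1\Big)\Delta t,
\end{align}
so the scalar inside the definition of $\epsilon(\Delta t)$ equals $\sum_{j=1}^N (P_k)_j$. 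Pick $k^\ast$ attaining the maximum that defines $\epsilon(\Delta t)$; then, by the triangle inequality and $|(P_{k^\ast})_j|\le\|P_{k^\ast}\|_\infty$ for each $j$,
\begin{align}
\epsilon(\Delta t) = \Big|\sum_{j=1}^N (P_{k^\ast})_j\Big| \le \sum_{j=1}^N |(P_{k^\ast})_j| \le N\,\|P_{k^\ast}\|_\infty \le N\max_{k=1,\ldots,T}\|P_k\|_\infty,
\end{align}
and dividing by $N$ yields the claimed inequality.

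There is essentially no obstacle: the whole argument is a single application of the triangle inequality once one notices that summing the deviation over controllers collapses exactly to the SOS1 violation of $u^c$. The only points needing care are the legitimacy of swapping the finite sums over $j$ and $\tau$ and the finiteness of $T$, both granted by $t_f<\infty$. What deserves emphasis in the surrounding text rather than in the proof is the interpretation: because $\epsilon(\Delta t)$ need not tend to $0$ as $\Delta t\to0$ when $u^c$ violates SOS1, this inequality shows that SUR can lose the strong $L_\infty$-convergence guarantee of \citet{sager2012integer}, which is precisely why the SOS1 penalty is enforced before rounding.
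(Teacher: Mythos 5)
Your proof is correct and follows essentially the same route as the paper's: both reduce the bound to the observation that summing the cumulative deviation over the $N$ controllers collapses (via the SOS1 property of $u^b$, immediate from the rounding construction) to the cumulative SOS1 violation of $u^c$, and then apply the elementary inequality $\left|\sum_{j} a_j\right| \le N \max_j |a_j|$. Fixing a maximizing index $k^\ast$ versus taking the maximum over $k$ on both sides, as the paper does, is an immaterial difference.
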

The proof is provided in Appendix~\ref{app: proof-sur}. 
We have the following estimate of the difference between the continuous control without the SOS1 property and the rounded control. 
\begin{theorem}
\label{theo:sur-convergence}
With the same definition of $\epsilon (\Delta t)$ in Proposition~\ref{prop: nosos1}, it holds that for any time step $k=1,\ldots,T$, 
\begin{align}
    \label{eq:sur-convergence}
    \left\|\sum_{\tau=1}^k \left( u^c_{\tau} - u^b_{\tau}\right)\Delta t \right\|_\infty \leq \left(N-1\right) \Delta t + \frac{2N - 1}{N} \epsilon (\Delta t).
\end{align}
\end{theorem}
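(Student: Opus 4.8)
The plan is to track, for each controller $j$ and time step $k$, the accumulated rounding error $P_{jk} := \sum_{\tau=1}^{k}\bigl(u^c_{j\tau}-u^b_{j\tau}\bigr)\Delta t$, so that the left-hand side of \eqref{eq:sur-convergence} is exactly $\max_{j=1,\ldots,N}|P_{jk}|$. Write $s_\tau := \sum_{j=1}^{N} u^c_{j\tau}-1$ for the pointwise SOS1 violation and $\sigma_k := \sum_{\tau=1}^{k} s_\tau\Delta t$ for its running sum; the definition of $\epsilon(\Delta t)$ then gives $|\sigma_k|\le\epsilon(\Delta t)$ for every $k$. Two bookkeeping identities drive the argument: summing $P_{jk}$ over $j$ and using $\sum_{j} u^b_{j\tau}=1$ (which holds by construction in Algorithm~\ref{alg:sur}) yields $\sum_{j=1}^{N}P_{jk}=\sigma_k$; and summing the quantities $\hat p_{jk}$ from the algorithm over $j$ gives $\sum_{j=1}^{N}\hat p_{jk}=\Delta t+\sigma_k$, using the relations $\hat p_{jk}=P_{j,k-1}+u^c_{jk}\Delta t$ and $P_{jk}=\hat p_{jk}-u^b_{jk}\Delta t$.

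Next I would establish the one-sided bound $P_{jk}\ge -\tfrac{1}{N}\bigl((N-1)\Delta t+\epsilon(\Delta t)\bigr)$ for all $j,k$ by induction on $k$, the base case $P_{j0}=0$ being immediate. For the inductive step, fix $j$ and split on whether controller $j$ is the one selected at step $k$. If it is not selected, then $u^b_{jk}=0$, so $P_{jk}=P_{j,k-1}+u^c_{jk}\Delta t\ge P_{j,k-1}$ and the hypothesis carries over. If it is selected, then it attains the maximum in Algorithm~\ref{alg:sur}, so $\hat p_{jk}=\max_i \hat p_{ik}\ge\tfrac1N\sum_i\hat p_{ik}=\tfrac1N\bigl(\Delta t+\sigma_k\bigr)\ge\tfrac1N\bigl(\Delta t-\epsilon(\Delta t)\bigr)$; since $u^b_{jk}=1$ gives $P_{jk}=\hat p_{jk}-\Delta t$, this yields $P_{jk}\ge-\tfrac1N\bigl((N-1)\Delta t+\epsilon(\Delta t)\bigr)$, closing the induction.

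The matching upper bound is then purely algebraic: from $\sum_{i}P_{ik}=\sigma_k$ we have $P_{jk}=\sigma_k-\sum_{i\ne j}P_{ik}$, and substituting the lower bound just proved for each of the $N-1$ terms $P_{ik}$ together with $\sigma_k\le\epsilon(\Delta t)$ gives $P_{jk}\le\epsilon(\Delta t)+\tfrac{N-1}{N}\bigl((N-1)\Delta t+\epsilon(\Delta t)\bigr)=\tfrac{(N-1)^2}{N}\Delta t+\tfrac{2N-1}{N}\epsilon(\Delta t)$. Since $\tfrac{(N-1)^2}{N}\le N-1$ and $\tfrac1N\le\tfrac{2N-1}{N}$, both the upper and the lower bounds on $P_{jk}$ are dominated in absolute value by $(N-1)\Delta t+\tfrac{2N-1}{N}\epsilon(\Delta t)$, uniformly in $j$ and $k$; taking the maximum over $j$ gives \eqref{eq:sur-convergence}.

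The only real obstacle is sequencing the argument so that the greedy selection rule of Algorithm~\ref{alg:sur} is invoked in exactly the right place: one must derive the lower bound on $P_{jk}$ first, exploiting that a selected controller has the above-average value $\hat p_{jk}\ge\tfrac1N(\Delta t+\sigma_k)$, before the sum identity $\sum_j P_{jk}=\sigma_k$ can be turned around into the upper bound. Everything else is elementary arithmetic, and setting $\epsilon(\Delta t)=0$ recovers the classical SUR estimate of \citet{sager2012integer}.
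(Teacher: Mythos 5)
Your proof is correct, but it follows a genuinely different route from the paper's. The paper adapts the contradiction argument of Sager's Theorem 5: it assumes the bound fails at some time step $r$ for the maximizing controller, splits into the cases of a large negative and a large positive deviation, and in each case passes to the last time index $\hat{k}\le r$ at which the relevant controller was rounded up, uses the maximality of that controller at $\hat{k}$ to force \emph{all} controllers' cumulative deviations below the threshold, and then sums over controllers to contradict the definition of $\epsilon(\Delta t)$. You instead argue directly: a forward induction on $k$ gives the one-sided bound $P_{jk}\ge-\tfrac1N\bigl((N-1)\Delta t+\epsilon(\Delta t)\bigr)$ for every controller, the key step being that the selected controller attains the maximum of the $\hat p_{ik}$ and hence is at least their average $\tfrac1N(\Delta t+\sigma_k)$, and then the accounting identity $\sum_j P_{jk}=\sigma_k$ (valid because $u^b$ satisfies SOS1 by construction) converts the lower bound into the upper bound $P_{jk}\le\tfrac{(N-1)^2}{N}\Delta t+\tfrac{2N-1}{N}\epsilon(\Delta t)$. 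Both proofs ultimately rest on the same two ingredients (maximality of the chosen controller and summation over controllers against the SOS1 defect), but yours dispenses with the last-rounded-up-index device and the case split, is not a proof by contradiction, and in fact delivers a slightly sharper, asymmetric estimate ($\tfrac{N-1}{N}\Delta t+\tfrac1N\epsilon$ below, $\tfrac{(N-1)^2}{N}\Delta t+\tfrac{2N-1}{N}\epsilon$ above) that you then relax to the stated symmetric bound; setting $\epsilon(\Delta t)=0$ recovers the classical SUR estimate, as you note. What the paper's route buys is a transparent delta against the known result in \citep{sager2012integer} (the modification is literally the extra $\tfrac{2N-1}{N}\epsilon$ term carried through the original argument), whereas your route buys a shorter, self-contained induction and a marginally stronger constant.
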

Because $u^b$ and $u^c$ are bounded, we have $\epsilon(\Delta t)<\infty$ if $t_f<\infty$. 
The proof of the inequality \eqref{eq:sur-convergence} extends the proof for Theorem 5 in~\citep{sager2012integer}. We modify the right-hand side of the inequality by adding a term $\frac{2N-1}{N}\epsilon(\Delta t)$. We provide the detailed proof in Appendix~\ref{app: proof-sur}. 
In the following corollary we show that for our bounded discretized quantum control problem, $\epsilon(\Delta t)$ converges to zero as $\Delta t$ converges to zero, 
ensuring the convergence of the rounded control to the continuous control. 
\begin{corollary}
\label{cor:sur-convergence-1}
Let $l(u^c,T)$ be the value of the optimized squared $L_2$ term in the continuous relaxation of the model \eqref{eq:model-d-l2-p}. 
Then it holds that 
\begin{align}
    \label{eq:sur-convergence-prop}
    \epsilon(\Delta t)\leq \sqrt{t_f{l(u^c,T)}\Delta t},
\end{align}
where $t_f$ is the evolution time. Furthermore, if the original objective function $F$ is bounded, the rounded control will converge to the continuous control when $\Delta t$ is small enough.
\end{corollary}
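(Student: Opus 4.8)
The plan is to establish the bound $\epsilon(\Delta t)\le\sqrt{t_f\,l(u^c,T)\,\Delta t}$ by a single application of the Cauchy--Schwarz inequality, and then to obtain the convergence statement by feeding this bound into Theorem~\ref{theo:l2-bound} and Theorem~\ref{theo:sur-convergence}. First I would abbreviate the per-step SOS1 residual by $a_\tau := \sum_{j=1}^N u^c_{j\tau} - 1$, so that $l(u^c,T) = \sum_{\tau=1}^T a_\tau^2$ and, for each $k=1,\ldots,T$,
\[
\left|\sum_{\tau=1}^k a_\tau\,\Delta t\right| = \Delta t\,\left|\sum_{\tau=1}^k a_\tau\right|.
\]
Applying Cauchy--Schwarz to the pairing of $(a_1,\ldots,a_k)$ with the all-ones vector gives $\left|\sum_{\tau=1}^k a_\tau\right| \le \sqrt{k}\,\bigl(\sum_{\tau=1}^k a_\tau^2\bigr)^{1/2} \le \sqrt{T}\,\sqrt{l(u^c,T)}$, using $k\le T$ and monotonicity of the partial sums of squares. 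The key observation is that on the uniform grid $T\Delta t = t_f$, so $\Delta t\,\sqrt{T} = \sqrt{(\Delta t)^2 T} = \sqrt{\Delta t\,(T\Delta t)} = \sqrt{\Delta t\, t_f}$; substituting and taking the maximum over $k$ yields exactly $\epsilon(\Delta t) \le \sqrt{t_f\,l(u^c,T)\,\Delta t}$. (Finiteness of $\epsilon(\Delta t)$ for each fixed grid is already guaranteed by boundedness of $u^b,u^c$ and $t_f<\infty$, as noted after Theorem~\ref{theo:sur-convergence}.)

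For the convergence claim I would invoke Theorem~\ref{theo:l2-bound}: since $F$ is bounded above by a constant $C_F$ and \eqref{eq:model-d-1} is feasible, the optimized penalty term satisfies $l(u^c,T)\le 2C_F/\rho$ \emph{uniformly in $T$}. Plugging this into the bound just derived gives $\epsilon(\Delta t)\le\sqrt{2C_F t_f\,\Delta t/\rho}$, whose right-hand side has constants independent of the grid and therefore tends to $0$ as $\Delta t\to 0$ (equivalently $T\to\infty$). Combining this with Theorem~\ref{theo:sur-convergence}, the cumulative rounding error obeys $\bigl\|\sum_{\tau=1}^k (u^c_\tau - u^b_\tau)\Delta t\bigr\|_\infty \le (N-1)\Delta t + \frac{2N-1}{N}\epsilon(\Delta t)$, and both terms on the right vanish as $\Delta t\to 0$, which is the asserted convergence of the rounded control to the continuous control.

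The argument is essentially routine; the only point that needs care is the limiting statement, since both $u^c$ and $T$ themselves change with $\Delta t$ as the relaxation is re-solved on finer grids. The reason this causes no difficulty is precisely the uniformity in $T$ of the penalty estimate from Theorem~\ref{theo:l2-bound}, so I would be deliberate about recording the bound in the grid-independent form $\epsilon(\Delta t)\le\sqrt{2C_F t_f\,\Delta t/\rho}$ before passing to the limit, rather than in the $T$-dependent form with $l(u^c,T)$.
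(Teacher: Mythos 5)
Your proof is correct and follows essentially the same route as the paper: bound the partial-sum SOS1 violation via Cauchy--Schwarz against the squared $L_2$ penalty, use $\sqrt{T}\,\Delta t=\sqrt{t_f\,\Delta t}$, and then combine the uniform-in-$T$ bound $l(u^c,T)\le 2C_F/\rho$ from Theorem~\ref{theo:l2-bound} with Theorem~\ref{theo:sur-convergence} to conclude the $O(\sqrt{\Delta t})$ vanishing of the rounding error. The only cosmetic difference is that the paper first bounds the maximal partial sum by the full sum $\sum_{k=1}^T|\sum_j u^c_{jk}-1|\,\Delta t$ before applying Cauchy--Schwarz, whereas you apply it directly to each length-$k$ partial sum; the two are equivalent.
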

The proof is provided in Appendix~\ref{app: proof-sur}. 
Based on the convergence of binary results to continuous results, we present the following proposition to guarantee the optimality of binary results. 
\begin{proposition}
\label{prop: b-c-convergence}
Under the discretized setting of the quantum control problem, let $u^c$ be the continuous control and $u^b$ be the binary control obtained by Algorithm \ref{alg:sur}. 
Then the state of time evolution with binary control $X^b$ converges to the state with continuous control $X^c$ at each time step,  namely, 
\begin{align}
    \lim_{\Delta t\rightarrow 0} X^b_k = \lim_{\Delta t \rightarrow 0} X^c_k,\ k=1,\ldots,T, 
\end{align}
leading to $\displaystyle \lim_{\Delta t\rightarrow 0} F(X_T^b) = \lim_{\Delta t\rightarrow 0} F(X_T^c)$ for a 
continuous objective function. 
\end{proposition}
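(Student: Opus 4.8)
The plan is to reduce the statement to showing $\|X_k^b-X_k^c\|\to 0$ as $\Delta t\to0$ uniformly in $k$; the claim about $F$ then follows at $k=T$ from continuity of $F$. By Theorem~\ref{theo:sur-convergence} together with Corollary~\ref{cor:sur-convergence-1}, the running control deviation
\[
\delta(\Delta t):=\max_{1\le k\le T}\Big\|\sum_{\tau=1}^{k}(u_\tau^c-u_\tau^b)\,\Delta t\Big\|_\infty\le (N-1)\Delta t+\tfrac{2N-1}{N}\sqrt{t_f\,l(u^c,T)\,\Delta t}\xrightarrow{\Delta t\to0}0,
\]
where $l(u^c,T)$ is bounded by Theorem~\ref{theo:l2-bound} and $t_f<\infty$; so the whole job is to show that the discrete state map depends on the control only through this running integral, up to $O(\Delta t)$ errors. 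The pitfall is that SUR does \emph{not} make the per-step deviation small, so a term-by-term telescoping bound on $\|X_k^b-X_k^c\|$ via $\|e^{-iH_\tau^b\Delta t}-e^{-iH_\tau^c\Delta t}\|\le\|H_\tau^b-H_\tau^c\|\Delta t$ only yields $O(T\Delta t)=O(1)$; the argument must instead be a summation-by-parts that transfers the $\Delta t$'s onto the slowly varying surrounding unitaries.

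Concretely, I would track $Z_k:=(X_k^c)^\dagger X_k^b$, which is unitary (every $H_\tau^c$ is Hermitian, so every $X_\tau^c$ is unitary) and satisfies $\|X_k^b-X_k^c\|=\|Z_k-I\|$. Setting $G_\tau:=H_\tau^c-H_\tau^b=\sum_{j}(u_{j\tau}^c-u_{j\tau}^b)H^{(j)}$ and expanding $(U_k^c)^\dagger U_k^b=e^{iH_k^c\Delta t}e^{-iH_k^b\Delta t}=I+iG_k\Delta t+E_k$ with $\|E_k\|=O(\Delta t^2)$ (the $H_k$ are uniformly bounded because $u\in[0,1]^{N\cdot T}$), one obtains the recursion $Z_k=Z_{k-1}+i(X_{k-1}^c)^\dagger G_k X_{k-1}^b\,\Delta t+O(\Delta t^2)$, hence, since $Z_0=I$,
\[
Z_k-I=i\sum_{\tau=1}^{k}(X_{\tau-1}^c)^\dagger G_\tau X_{\tau-1}^b\,\Delta t+O(T\Delta t^2),
\]
and the accumulated remainder is $O(T\Delta t^2)=O(\Delta t)$. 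With $S_k:=\sum_{\tau\le k}G_\tau\Delta t$ (so $S_0=0$ and $\|S_k\|\le N\max_j\|H^{(j)}\|\,\delta(\Delta t)$), Abel summation rewrites $\sum_{\tau\le k}(X_{\tau-1}^c)^\dagger G_\tau X_{\tau-1}^b\,\Delta t$ as the boundary term $(X_{k-1}^c)^\dagger S_k X_{k-1}^b$ plus $\sum_{\tau<k}\big[(X_{\tau-1}^c)^\dagger S_\tau X_{\tau-1}^b-(X_\tau^c)^\dagger S_\tau X_\tau^b\big]$; using $X_\tau^{c}=U_\tau^cX_{\tau-1}^c$, $X_\tau^b=U_\tau^bX_{\tau-1}^b$ with $U_\tau^{c},U_\tau^b=I+O(\Delta t)$ and the fact that conjugation by unitaries preserves the operator norm, every increment is $O(\Delta t)\|S_\tau\|$, so the sum is at most $\|S_k\|+O(\Delta t)\sum_{\tau\le k}\|S_\tau\|\le(1+O(t_f))\max_{\tau\le T}\|S_\tau\|=O(\delta(\Delta t))$.

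Putting the pieces together, $\|X_k^b-X_k^c\|=\|Z_k-I\|=O(\delta(\Delta t))+O(\Delta t)\to0$ uniformly in $k$; since $\lim_{\Delta t\to0}X_k^c$ exists (the usual convergence of the exponential integrator for the limiting control), so does $\lim_{\Delta t\to0}X_k^b$, and the two coincide for every $k$. Applying continuity of $F$ at $k=T$ then yields $\lim_{\Delta t\to0}F(X_T^b)=\lim_{\Delta t\to0}F(X_T^c)$. I expect the main obstacle to be precisely the Abel-summation step and its bookkeeping — organizing the $\sim t_f/\Delta t$ noncommuting operator-valued summands so that only the slowly varying conjugations times the already-small partial sums $S_\tau$ survive — together with the routine but necessary accounting for the accumulated $O(\Delta t^2)$ exponential-splitting remainders.
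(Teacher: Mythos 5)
Your argument is correct, and it takes a genuinely different route from the paper's proof. Both arguments rest on the same quantitative input --- Theorem~\ref{theo:sur-convergence} together with Corollary~\ref{cor:sur-convergence-1} (and the uniform bound from Theorem~\ref{theo:l2-bound}) forces the running deviation $\max_{k}\bigl\|\sum_{\tau\le k}(u^c_\tau-u^b_\tau)\Delta t\bigr\|_\infty$ to vanish as $\Delta t\to 0$ --- and both finish by continuity of $F$ at $k=T$. The difference is how closeness of the cumulative controls is converted into closeness of the states. The paper collapses each product $\prod_{\tau\le k} e^{-i\Delta t H_\tau}$ into a single exponential of the accumulated Hamiltonian via a Trotter-type expansion and then compares the exponents of $X^c_k$ and $X^b_k$; this is short, but the collapse is only heuristic when the $H^{(j)}$ do not commute (the product tends to a time-ordered exponential, which need not equal the exponential of the accumulated sum), so the commutator errors over $T\sim t_f/\Delta t$ factors are not tracked. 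You never collapse the product: you control $\|X^b_k-X^c_k\|=\|Z_k-I\|$ with $Z_k=(X^c_k)^\dagger X^b_k$, keep the per-step splitting remainders at a uniform $O(\Delta t^2)$, and then use discrete summation by parts so that only the partial sums $S_\tau$ (of size $O(\delta(\Delta t))$ thanks to the cited results) sandwiched by slowly varying unitaries survive. This is precisely the mechanism behind Sager-type integral-approximation theorems; it is rigorous for noncommuting control Hamiltonians and yields an explicit uniform rate $O(\delta(\Delta t))+O(\Delta t)$ rather than only a limit statement, at the cost of the bookkeeping you anticipate. The one loose end, the meaning of $\lim_{\Delta t\to 0}X^c_k$ for an index $k$ whose physical time depends on $\Delta t$, is shared with (and no worse than) the paper's own treatment, and your reading --- fixed physical time and existence of the limit for the relaxed control --- suffices for the claim.
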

The proof is provided in Appendix~\ref{app: proof-sur}.

\subsubsection{Combinatorial Integral Approximation}
\label{sec: alg-bnb}
To obtain discretized binary controls, \citet{sager2005numerical} proposed a more general rounding technique by minimizing the integral difference between continuous and binary controls with certain additional constraints on the binary controls called combinatorial integral approximation (CIA). Let $\mathcal{U}_B\subseteq \left\{0,1\right\}^{N\cdot T}$ be the feasible region for binary controls. For each controller $j=1,\ldots,N$ and each time step $k=1,\ldots,T$, define $u^c_{jk}$ and $u^b_{jk}$ as the discretized continuous and binary controls. We use $u^c$ and $u^b$ to represent the corresponding vector forms. The optimization problem for rounding is formulated as a mixed-integer problem 
\begin{subequations}
\label{eq:model-milp-rounding}
\begin{align}
    \min_{u^b} \quad & \max_{j=1,\ldots,N} \max_{k=1,\ldots,T} \left|\sum_{\tau=1}^k (u^c_{j\tau} - u^b_{j\tau})\Delta t\right|\\
    \textrm{s.t.} \quad & u^b\in \mathcal{U}_B.
\end{align}
\end{subequations}
The rounding optimization problem can be solved by diverse algorithms for solving integer programs. In this paper we choose a branch-and-cut algorithm~\citep{wolsey2020integer}. 
Furthermore, the rounding result obtained from SUR is an optimal solution of model \eqref{eq:model-milp-rounding} with 
$\mathcal{U}_B = \left\{0,1\right\}^{N\cdot T}$~\citep{sager2005numerical}. 
In practice, researchers and engineers have proposed diverse restrictions on binary controls to avoid frequent switches. 
We formulate two main types of constraints as linear constraints. 

\paragraph{Min-Up-Time Constraints.} Let $T_\textrm{minup}$ be the minimum number of steps $\Delta t$ that the controller is active. The min-up-time constraints enforce that each controller is active for at least $T_\textrm{minup}$ time steps. The restricted feasible region is formulated by the following linear constraints: 
\begin{align}
    \mathcal{U}_B = \Bigg\{(u,v): & -v_{jk}\leq u_{jk} - u_{jk+1}\leq v_{jk},\ j=1,\ldots,N,\ k=1,\ldots,T-1 \nonumber \\
    & \sum_{k=t}^{t+T_{\textrm{minup}}-1} v_{jk}\leq 1,\ j=1,\ldots, N,\ t=1,\ldots,T-T_{\textrm{minup}} \Bigg\}.
\end{align}
\paragraph{Max-Switching Constraints.} Let $S$ be the maximum number of switches during the evolution time horizon $[0,t_f]$. The max-switching constraints enforce an upper bound $S$ on the total number of switches for each controller. We describe the restricted feasible region by linear constraints as 
\begin{align}
    \mathcal{U}_B = \Bigg\{(u,v): & -v_{jk}\leq u_{jk} - u_{jk+1}\leq v_{jk},\ j=1,\ldots,N,\ k=1,\ldots,T-1 \nonumber \\
    & \sum_{k=1}^{T} v_{jk}\leq S,\ j=1,\ldots, N\Bigg\}.
\end{align}

We present the binary results obtained from SUR and CIA with min-up-time constraints in Figure~\ref{fig:pre-sur}. We show that SUR leads to chattering on singular arcs. Although CIA can prevent frequent switches by setting hard constraints on the controls, it leads to a serious objective value increase. In Section~\ref{sec: alg-infrequent} and Section~\ref{sec: alg-tr} we propose models and algorithms to reduce switches when taking the original objective function into account. 

\begin{figure}[ht!]
    \centering
    \subfloat[SUR results (objective value $0.021$)]{\includegraphics[width=0.5\textwidth]{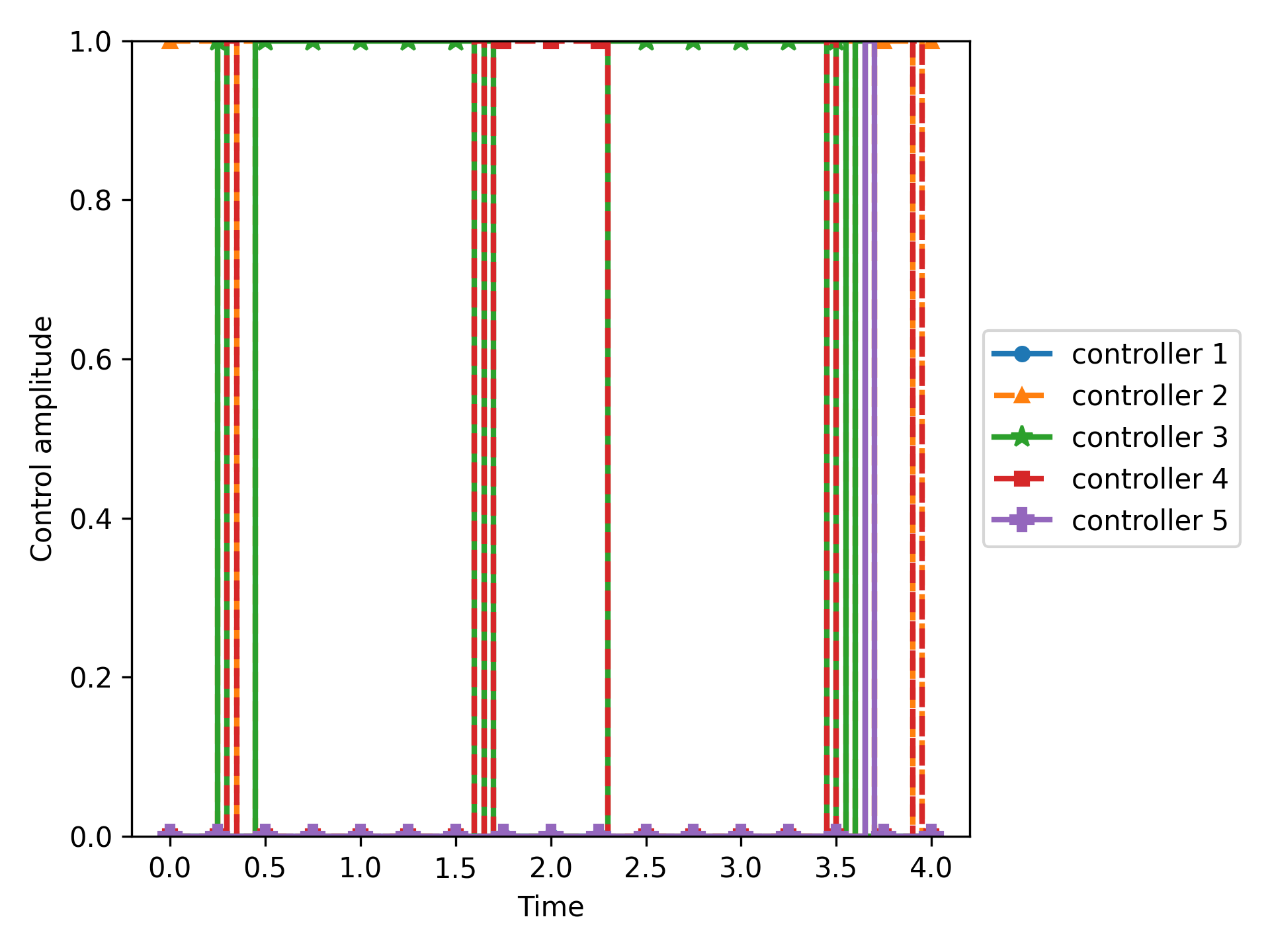}}
    \subfloat[CIA results with min-up time constraints (objective value $0.600$) ]{\includegraphics[width=0.5\textwidth]{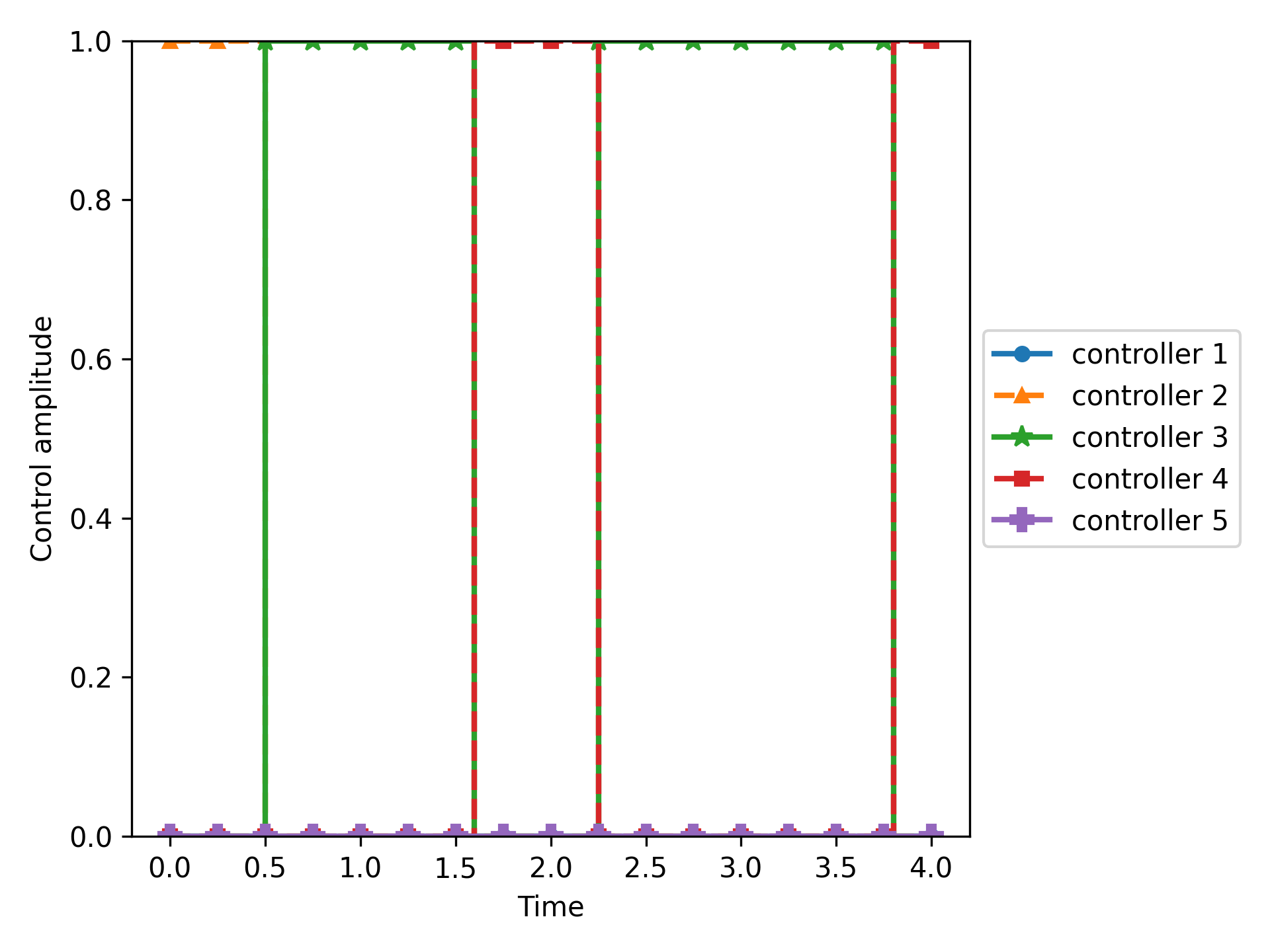}}
    \caption{Binary control results obtained by conducting SUR and CIA on continuous results of the circuit compilation example using the molecule H$_2$. The min-up time constraints reduce the switches but increase the objective value.}
    \label{fig:pre-sur}
\end{figure}
\section{Model with the TV Regularizer}
\label{sec: alg-infrequent}
Binary controls that avoid chattering on singular arcs can be obtained by solving the rounding model \eqref{eq:model-milp-rounding}. However, the objective function of the rounding model %
is not relevant to the original objective function $F$, leading to a significant increase in the value of $F$, as illustrated in Figure~\ref{fig:pre-sur}. Here we investigate a TV regularizer as an alternative strategy to obtain solutions with fewer switches. 

The TV of a function is defined as the integral of the absolute change of 
the function over the entire space. Since~\citet{rudin1992nonlinear} 
first introduced the TV regularizer by considering functions describing images, it 
has become a popular method in image noise reduction; see, for example,~\citep{condat2013direct,kunisch2004total}. 
We refer interested readers to~\citep{rodriguez2013total} for a detailed review of applying a TV regularizer to 
denoise images. \citet{stella2017simple} propose an algorithm to solve the continuous control 
problem with the TV regularizer based on the forward-backward envelope \citep{themelis2018forward}. 
\citet{sager2021mixed} considered the TV of integer control functions  
to reduce the absolute change of controls. The authors added a constraint imposing that 
the TV regularizer should be no more than a threshold. 
However, they considered the TV regularizer only when rounding the control results of continuous 
relaxation, which leads to an increase in the original objective function. 
\citet{leyffer2021sequential} extended the trust-region method to solve the integer optimal control problem with the TV regularizer in the objective function, but it is still highly dependent on the initial values. 

We use the TV regularizer to penalize the absolute change between control variables of two consecutive time steps, which is defined as 
\begin{align}
    TV(u) = \sum_{j=1}^N \sum_{k=1}^{T-1} |u_{jk} - u_{jk+1}|
\end{align}
for a control variable vector $u$. 
Let $\alpha>0$ be the parameter for the TV regularizer.  
Then the continuous relaxation model with the TV regularizer is formulated as 
\begin{align}
\label{eq:model-tv} \tag{DQCP-TV}
    \min \ & F(X_T) + \rho l(u, T) + \alpha TV(u)\quad \textrm{s.t.}\ \textrm{Constraints \eqref{eq:model-d-1-cons-h}--\eqref{eq:model-d-1-cons-i}},\ u\in [0,1]^{N\cdot T}.
\end{align}
The first term  is the original objective function, the second term is the squared $L_2$-penalty function for the SOS1 property, and the third term is the TV regularizer function. 
Because of the $L_1$ term, we cannot use GRAPE;  instead, we propose ADMM to solve the continuous relaxation. For each controller $j=1,\ldots,N$ and time step $k=1,\ldots,T-1$, we introduce auxiliary variables $v_{jk} = u_{jk} - u_{jk+1}$ to describe the change between control variables of two consecutive time steps. We reformulate the model as follows: 
\begin{subequations}
\label{eq:model-tv-admm}
\begin{align}
    \min \quad & F(X_T)+ \rho l(u, T) + \alpha \sum_{j=1}^N \sum_{k=1}^{T-1} |v_{jk}|\\
    \textrm{s.t.} \quad & 
    \label{eq:model-tv-admm-dual}
    v_{jk} = u_{jk} - u_{jk+1},\ j=1,\ldots,N,\ k=1,\ldots,T-1\\
    & \textrm{Constraints \eqref{eq:model-d-1-cons-h}--\eqref{eq:model-d-1-cons-i}}\nonumber\\
    & u\in [0,1]^{N\cdot T}.
\end{align}
\end{subequations}
Define $\mu_{jk},\ j=1,\ldots,N,\ k=1,\ldots,T-1$ as the dual variables corresponding to constraints \eqref{eq:model-tv-admm-dual}. We use $u,\ v,\ \mu$ to denote the corresponding vector forms. Let fixed parameters $\beta>0, \delta > 0$ be the Lagrangian penalty parameter and stopping criterion threshold. The update procedure of ADMM consists of three steps: (i) updating variables $u$, (ii) updating variables $v$, and (iii) updating dual variables $\mu$. We solve the minimization problem for updating variables $u$ by the modified GRAPE algorithm with a squared $L_2$-penalty function and Lagrangian penalty function. We derive an exact form for the update of variables $v$. We use gradient descent to update the dual variables $\mu$. The specific procedure for updating is presented in Algorithm~\ref{alg:tv-admm}.

\begin{algorithm}[ht] \caption{ADMM Algorithm for Solving Continuous Relaxation of \eqref{eq:model-tv}. \label{alg:tv-admm}}   
    \KwInput{Initial values of variables $u^0,\ v^0,\ \mu^0$, and number of ADMM iterations $L$.}
    Initialize iteration $l=1$.\\
    \While{$l\leq L$ and the algorithm does not converge}
    {Update variables $u$ as 
    \begin{align*}
        u^{l} = & \argmin_{u\in \mathcal{U}_C} \ F(X_T)+ \rho l(u, T) + \frac{\beta}{2} \sum_{j=1}^N \sum_{k=1}^{T-1}\|u_{jk}-u_{jk+1}-v_{jk}^{l-1}+\mu_{jk}^{l-1}\|^2\\
        & \textrm{s.t. Constraints \eqref{eq:model-d-1-cons-h}--\eqref{eq:model-d-1-cons-i}}.
    \end{align*}
    \For{$k=1,\ldots,T$}
    {\For{$j=1,\ldots,N$}
        {Update variables $v_{jk}$ as 
        $$v^{l}_{jk} = \begin{cases}
  u^{l}_{jk} - u^{l}_{jk+1}+\mu^{l-1}_{jk} - \alpha / \beta, & \text{if } u^{l}_{jk} - u^{l}_{jk+1}+\mu^{l-1}_{jk} > \alpha / \beta, \\
  u^{l}_{jk} -u^{l}_{jk+1}+\mu^{l-1}_{jk} + \alpha / \beta, & \text{if } u^{l}_{jk} - u^{l}_{jk+1}+\mu^{l-1}_{jk} < -\alpha / \beta, \\
  0, & \text{otherwise}.
  \end{cases}$$
        Update dual variables $\mu_{jk}$ as 
        $\displaystyle \mu_{jk}^{l} = \mu_{jk}^{l-1} + (u_{jk}^{l} - u_{jk+1}^{l} - v_{jk}^{l}).$}}
    \If{$\sum_{j=1}^N \sum_{k=1}^{T-1} \|u^l_{jk} - u^l_{jk+1} - v^l_{jk}\|^2 \leq \delta$}
    {Break the loop. }
    Update $l \leftarrow l+1$.}
    \KwOutput{Final solutions of variables $u^l,\ v^l,\ \mu^l$.}
\end{algorithm}

We present the continuous control results obtained from the model \eqref{eq:model-tv} in Figure~\ref{fig:pre-tv} for the circuit compilation example on the molecule H$_2$. We test the parameter of the  TV regularizer $\alpha=10^{-5},\ 10^{-4},\ 10^{-3},\ 10^{-2},\  10^{-1}$ and the Lagrangian parameter $\beta=0.1,\ 0.5,\ 1.0$. We choose $\alpha=10^{-3}$ and $\beta=0.5$, which has the smallest objective value after rounding with min-up-time constraints. The number of switches in control results decreases significantly as compared with the results in Figure~\ref{fig:pre-sos1}, showing the benefits of the TV regularizer.

\begin{figure}[ht!]
    \centering
    \includegraphics[width=0.5\textwidth]{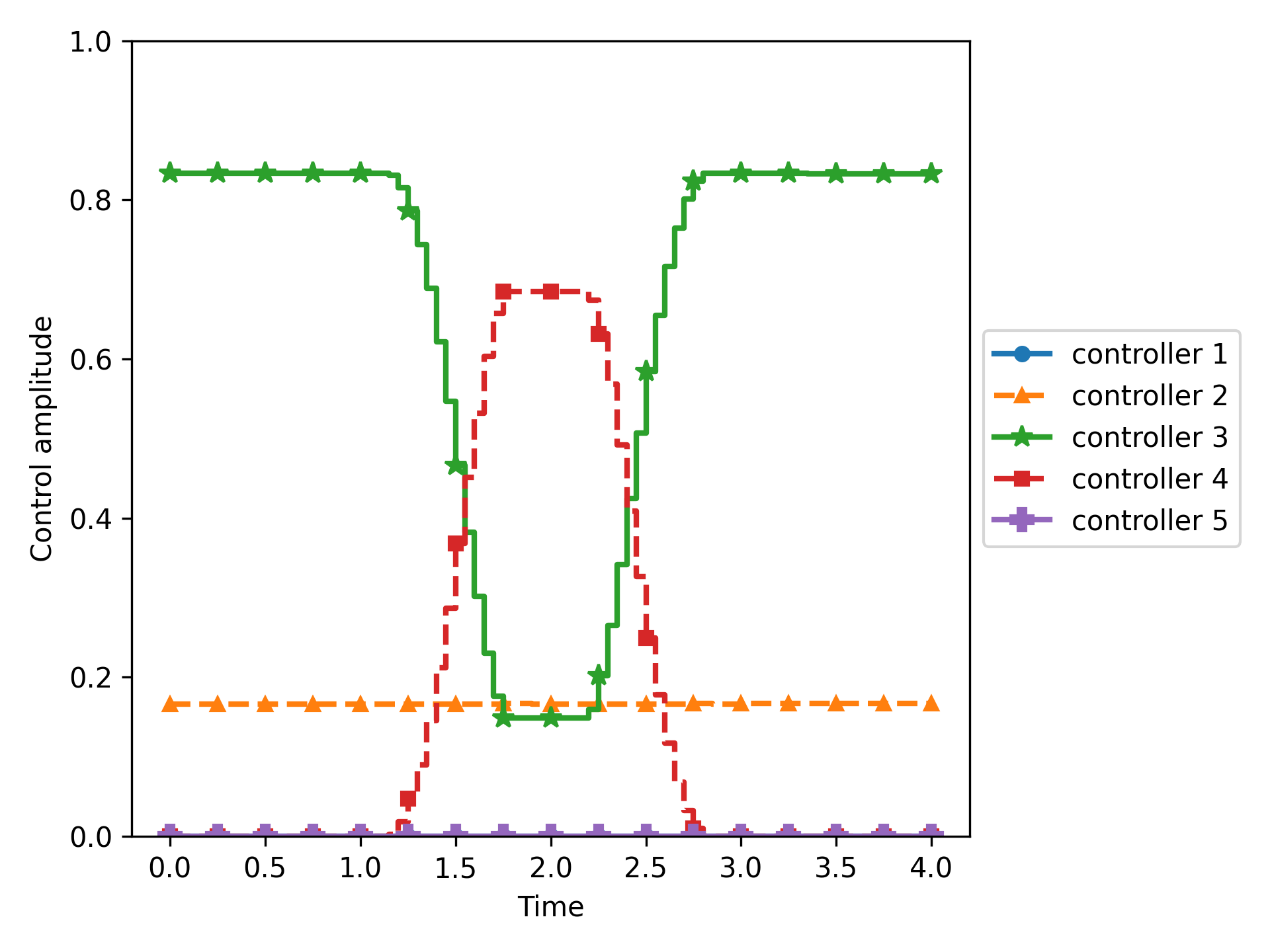}
    \caption{Control results of the continuous relaxation of the discretized model with the TV regularizer \eqref{eq:model-tv} of the circuit compilation example on the molecule H$_2$ (objective value: 1.33E$-$05).}
    \label{fig:pre-tv}
\end{figure}

\section{Improvement Heuristic: Approximate Local-Branching Method}
\label{sec: alg-tr}
Because the ADMM algorithm does not ensure global optimality for nonlinear objective functions and the objective value increases after rounding, the control results can be improved by applying an approximate local-branching (ALB) heuristic. 
The trust-region method~\citep{nocedal2006numerical} is a widely used local search method in optimization but highly depends on the initial values. 
In this section we propose a trust-region subproblem for the quantum control problem and then modify the trust-region method to solve the problem starting from the solutions obtained in Sections~\ref{sec: alg-binary}--\ref{sec: alg-infrequent} to improve the quantum controls. 

We introduce our improvement heuristic based on the approach of~\citep{leyffer2021sequential} to solve the binary model with the TV regularizer. Given a feasible point $\hat{u}$, we use the first-order gradient to approximate the objective value around the point $\hat{u}$ and define the trust-region subproblem with radius $R$ as 
\begin{subequations}
\label{eq:model-tr}
\begin{align}
    \label{eq:model-tr-obj}
    \min_u\quad & \langle \nabla F(\hat{u}), u - \hat{u}\rangle_{L_2} + 
     \alpha TV(u) - \alpha TV(\hat{u}) \\
    \label{eq:model-tr-radius}
    \textrm{s.t.} \quad & \|u - \hat{u}\|_1\leq R\\
    & \textrm{Constraints \eqref{eq:model-d-1-cons-u-sum}--\eqref{eq:model-d-1-cons-u}} \nonumber.
\end{align}
\end{subequations}
We note that \eqref{eq:model-tr-obj} approximates the objective value $F(u)$ but uses an exact form of the TV regularizer. 
Constraint \eqref{eq:model-tr-radius} indicates that we consider only the points with $L_1$ distance to $\hat{u}$ no more than $R$. 
For each controller $j=1,\ldots,N$ and time step $k=1,\ldots,T-1$, define variables $v_{jk}$ as the upper bound of the absolute change between the control values of two consecutive time steps. The trust-region subproblem is reformulated as the following mixed-integer linear program: 
\begin{subequations}
\label{eq:model-tr-milp} 
\begin{align}
    \label{eq:model-tr-milp-obj}
    \min_u\quad & \langle F(\hat{u}), u - \hat{u}\rangle_{L_2} + \alpha \sum_{k=1}^{T-1} \sum_{j=1}^N v_{jk} - \alpha TV(\hat{u}) \\
    \label{eq:tr-mip-cons-1}
    \textrm{s.t.} \quad & \sum_{k=1}^T \left(\sum_{j:\hat{u}_{jk}=0} u_{jk} + \sum_{j:\hat{u}_{jk}=1} (1-u_{jk})\right)\leq R\\
    \label{eq:tr-mip-cons-3}
    & -v_{jk}\leq u_{jk} - u_{jk+1}\leq v_{jk},\ j=1,\ldots,N,\ k=1,\ldots,T-1\\
    & \textrm{Constraints \eqref{eq:model-d-1-cons-u-sum}--\eqref{eq:model-d-1-cons-u}} \nonumber.
\end{align}
\end{subequations}
The objective function \eqref{eq:model-tr-milp-obj} is a reformulation of \eqref{eq:model-tr-obj} with variables $v$. Constraint \eqref{eq:tr-mip-cons-1} is the trust-region constraint. Constraints \eqref{eq:tr-mip-cons-3} ensure that for each controller $j=1,\ldots,N$, $v_{jk}\geq |u_{jk} - u_{jk+1}|,\ k=1,\ldots,T-1$. 
Let $\bar{u}$ be an optimal solution of the model \eqref{eq:model-tr-milp}. Define $\Delta F_p(\hat{u}, \bar{u})$ and $\Delta F_a(\hat{u}, \bar{u})$ respectively as the predictive and actual decrease of the objective function with the following formulations: 
\begin{subequations}
\label{eq:compute-decrease}
\begin{align}
    \Delta F_p(\hat{u}, \bar{u}) & =  \langle \nabla F(\hat{u}), \hat{u} - \bar{u}\rangle_{L_2} + \alpha TV(\hat{u}_{jk}) - \alpha TV(\bar{u}_{jk}),\\
    \Delta F_a(\hat{u}, \bar{u}) & = F(\hat{u}) - F(\bar{u}) + \alpha TV(\hat{u}_{jk}) - \alpha TV(\bar{u}_{jk}).
\end{align}
\end{subequations}
The trust-region  algorithm consists of an inner loop and an outer loop. In the inner loop we solve a sequence of trust-region subproblems a with monotonically decreasing radius until the ratio between the actual decrease and predictive decrease is large enough. To obtain a balance between the computational cost and the searched area, we set a threshold $\bar{R}$ for the radius. When the radius is greater than the threshold, we decrease it according to a geometric sequence~\citep{leyffer2021sequential}; otherwise, we decrease it by an arithmetic sequence. In the outer loop we repeat the inner loop for each updated point until the predictive decrease is nonpositive. This procedure is described in Algorithm~\ref{alg:tr-b}. 
\begin{algorithm}[!htbp] \caption{Trust-Region Method for Quantum Control.\label{alg:tr-b} }   
    \KwInput{Starting radius $R^0>0$, threshold of radius $\bar{R}>0$, initial feasible point $u^0$, and threshold of decrease $\eta>0$.}
    Initialize predictive decrease $\Delta F_p(u^0, \bar{u})=\infty$ and actual decrease $\Delta F_a(u^0, \bar{u})=-\infty$. 
    Set the number of outer iterations $l\leftarrow 0$.\\ 
    \While{$\Delta F_p(u^l, \bar{u}) > 0$}
    {Initialize radius $R = R^0$.\\
    \While{$\Delta F_a(u^l, \bar{u}) < \eta \Delta F_p(u^l, \bar{u})$}
    {Solve model \eqref{eq:model-tr-milp} with $\hat{u} = u^l$ to obtain the minimizer $\bar{u}$. \\
    Compute predictive and actual decrease $\Delta F_p(u^l, \bar{u})$ and $\Delta F_a(u^l, \bar{u})$ by \eqref{eq:compute-decrease}.\\
    \eIf{$R > \bar{R}$}
    {$R \leftarrow \max\left\{\lfloor R / 2 \rfloor, \bar{R}\right\}$.}
    {$R \leftarrow R - 1$.}}
    Set $l\leftarrow l + 1$. Update the central point $u^l \leftarrow \bar{u}$.}
    \KwOutput{Control results $u^l$.}
\end{algorithm}
\begin{remark}
If we relax the feasible region of each controller to $[0,1]$, the trust-region method can solve the continuous relaxation of the model with the TV regularizer. Trust-region methods show convergence to stationary points on continuous relaxation if we allow the radius $R$ to be real and adjusted accordingly (Corollary 3.9 in~\citep{aravkin2021proximal}).
\end{remark}
The trust-region approach can also improve solutions obtained by CIA approaches from Section~\ref{sec: alg-bnb}. 
Based on the restricted feasible region $\mathcal{U}_B$ instead of the TV regularizer, we propose the trust-region subproblem with additional linear constraints as follows: 
\begin{subequations}
\label{eq:tr-mip-r}
\begin{align}
    \label{eq:tr-h-obj}
    \min_u\quad & \langle \nabla F(\hat{u}), u - \hat{u}\rangle_{L_2}\\
    \textrm{s.t.} \quad & \textrm{Constraints  \eqref{eq:model-d-1-cons-u-sum}--\eqref{eq:model-d-1-cons-u}, \eqref{eq:tr-mip-cons-1}}\nonumber\\
    \label{eq:tr-h-feasible}
    & (u,v) \in \mathcal{U}_B.
\end{align}
\end{subequations}
The objective function \eqref{eq:tr-h-obj} is a first-order gradient approximation of the original objective function $F(u)$, and constraint \eqref{eq:tr-h-feasible} restricts feasible regions for controls. The computations of the predictive decrease and actual decrease are modified as 
\begin{subequations}
\label{eq:compute-decrease-hard}
\begin{align}
    \Delta F_p(\hat{u}, \bar{u}) & = \langle \nabla F(\hat{u}), \hat{u} - \bar{u}\rangle_{L_2},\\
    \Delta F_a(\hat{u}, \bar{u}) & = F(\hat{u}) - F(\bar{u}).
\end{align}
\end{subequations}
For the trust-region algorithm (Algorithm~\ref{alg:tr-b}), we replace solving model \eqref{eq:model-tr-milp} with solving model \eqref{eq:tr-mip-r} and compute the decrease by \eqref{eq:compute-decrease-hard}. The other parts remain the same. 

In theory, all time-evolution processes in our algorithms can be conducted on quantum computers. The inputs for the quantum computers are the control sequences, and we require the output of the objective function and its (approximate) gradient. The updates of variables are still computed on classical computers.

\section{Numerical Results}
We apply our algorithmic framework proposed in Sections~\ref{sec: alg-binary}--\ref{sec: alg-tr} to diverse instances of the four examples introduced in Section~\ref{sec: model}. In Section~\ref{sec: results-solver} we show the results from state-of-the-art optimization solvers as the baseline. In Section~\ref{sec: exp-design} we introduce the design of numerical instances and parameter settings. In Sections~\ref{sec: res-c}--\ref{sec: res-i} we present the numerical results, including the continuous relaxation results, binary results by combinatorial integral approximation, and binary results after an improvement heuristic. 
\label{sec: results}
\subsection{State-of-the-Art Optimization Solvers}
\label{sec: results-solver}
The NEOS server is a frequently used internet-based service containing several state-of-the-art solvers for numerical optimization problems~\citep{czyzyk_et_al_1998,dolan_2001,gropp_more_1997}. 
For nonlinear constrained continuous problems, SNOPT~\citep{andrei2017sqp} and
IPOPT~\citep{wachter2006implementation} are two widely used solvers. 
For mixed-integer nonlinear constrained problems, BARON~\citep{sahinidis1996baron} and Couenne~\citep{belotti2009couenne} are commonly
used for obtaining global optimal solutions, while MINLP~\citep{belotti2013mixed,leyffer2010software}, SCIP~\citep{gamrath2020scip},
and Bonmin~\citep{bonami2008algorithmic} are effective solvers with good performance for finding local optima. 

As a baseline we first apply the optimization solvers from the NEOS server to solve the energy minimization example problem in Section~\ref{sec: model-energy}. All the experiments are conducted on the online server. 
We set the number of qubits $q=2$ and $q=6$. 
We set the evolution time $t_f=2$ and the number of time steps $T=40$. We notice that for all the evolution times in Section~\ref{sec: results}, we use dimensionless units, with $\hbar=1$. 
We apply the solver MUSCOD-II (6.0)~\citep{kirches2013taco} to solve the continuous formulation with the differential equations \eqref{eq:model-c-1}, and we set the maximum iterations to $100$ for $q=2$ and $700$ for $q=6$. 
To apply the state-of-the-art nonlinear optimization solvers to a discretized model, we derive a discretized formulation of the ordinary differential equation  by the implicit Euler method (see, 
e.g., ~\citep{butcher2016numerical}) as 
\begin{align}
    X_k-X_{k-1} = -iH_kX_k \Delta t,\ k=1,\ldots,T.
\end{align}
We apply the solvers SNOPT (7.6.1) and IPOPT (3.13.4) to solve the continuous relaxation model and apply the solvers MINLP, Bonmin (1.8.8), Couenne (0.5.8), and SCIP (7.0.3.5) to solve the binary model. 
We set the time limit to $5$ minutes when $q=2$ and $80$ minutes when $q=6$. 
In Table~\ref{tab:res-solver-energy} we present the objective values, TV-norm values, CPU times, and explored nodes (only for binary solvers). 

Table~\ref{tab:res-solver-energy} shows that MUSCOD-II is good for 
solving the continuous
relaxation for the instance with $q=2$, but it takes a long time to solve the 
large instance with $q=6$. 
SNOPT and IPOPT perform worse than the MUSCOD-II solver especially when the quantum systems include more qubits. 
For the binary control
problem, all the methods reach the time limit.  Bonmin obtains the best
solution, but the gaps between their results and true energy are all large. 
For the instance with $q=6$, some optimization solvers such as MINLP and
Couenne run out of memory.
The CPU time increases and the number of explored nodes decreases significantly when the size of the quantum system increases. 

\begin{table}[ht]
  \centering
  \caption{Results of solvers on energy minimization example. The results are marked by ``OOM'' if a solver runs out of memory and ``LIMIT'' if a solver reaches the time or iteration limit. The explored nodes of continuous solvers are marked by ``-'' because the node exploration process is conducted only by binary solvers.}
    \begin{tabular}{l|rrr|rrrrr}
    \hline
    \multirow{2}{*}{Solver} & \multicolumn{3}{c|}{$q=2$} & 
    \multicolumn{3}{c}{$q=6$}\\
    \cline{2-7}
    & Obj& Time (s) & Nodes & Obj & Time (s) & Nodes \\
    \hline
    SNOPT & 0.084 & 0.18 & - & 0.736 & 199.64 & -\\
    \hline
    IPOPT & 0.084 & 0.10 & - & 0.736 & 46.85 & - \\
    \hline
    MUSCOD-II & $5.90$E$-09$ & 1.67 & - & 
    0.154 & 4399.54 & - \\
    \hline
    MINLP & 0.150 & LIMIT & 9056 & OOM & OOM & OOM\\
    \hline
    Bonmin & 0.148 & LIMIT & 11672 & 
    0.793 & LIMIT & 355 \\
    \hline
    Couenne & 0.471 & LIMIT & 6932 & OOM & OOM & OOM\\
    \hline
    SCIP & 0.149 & LIMIT & 25679 & 0.000 & LIMIT & 2480 \\
    \hline
    \end{tabular}%
  \label{tab:res-solver-energy}%
\end{table}%

This experiment shows that existing standard nonlinear programming and mixed-integer nonlinear programming 
solvers cannot solve discretized optimal control problems. 
In the following sections we present the numerical results of our proposed algorithms on multiple instances. 
Our pGRAPE+SUR method obtains binary solutions with an objective value $4.22$E$-04$ in $0.14$ seconds for $q=2$ and binary solutions with an objective value $0.157$ in $27.95$ seconds for $q=6$. 
We show that our methods can obtain better results and with shorter computational time than do the state-of-art solvers. 

\subsection{Experimental Design and Parameter Settings}
\label{sec: exp-design}
When the problems have no SOS1 property, the squared $L_2$ term in the penalized model 
\eqref{eq:model-d-l2-p} is eliminated, so solving \eqref{eq:model-d-l2-p} by pGRAPE is equivalent to 
solving the original model \eqref{eq:model-d-1} by GRAPE. Because the energy minimization problem 
has only two controllers, we eliminate the SOS1 property directly by substituting $u_2(t)=1-u_1(t)$. We 
remove the SOS1 property in the CNOT gate estimation problem to show the generality of our models. We solve 
the model with a squared $L_2$-penalty function for the circuit compilation problem. 

For all the examples, we apply the pGRAPE algorithm to solve the continuous
relaxation. We also employ the TR and ADMM algorithms to solve the continuous relaxation
with the TV regularizer. Then we obtain the binary results without constraints by
SUR, the binary results with min-up-time constraints,
and the binary results with max-switching constraints by integral minimization. 
Furthermore, we apply the approximate local-branching methods to improve the binary solutions
with the TV regularizer and hard control constraints. 

For the energy minimization problem, when $q=2$, the matrix $J$ is two-dimensional and includes only one independent element because it is symmetric. Hence, we set diagonals in $J$ to $0$ and other elements to $1$. When $q=4,\ 6$, we generate $5$ instances where each instance has a random symmetric matrix $J$ with zero diagonals and elements within a range $[-1,1]$. We will present averaged objective value results for these problems. The instances are represented by Energy2, Energy4, and Energy6 corresponding to the number of qubits $q=2,\ 4,\ 6$ in our following results. For the CNOT problem, we conduct experiments with evolution times $t_f=5,\ 10,\ 15,\ 20$ and represent the corresponding instances as CNOT5, CNOT10, CNOT15, CNOT20, respectively. For the NOT gate estimation problem, we follow the parameter settings in~\citep{Motzoi2009} and set $\mu_1=0,\ \mu_2=2\pi,\ \omega_1=1,\ \omega_2=\sqrt{2}$. We conduct numerical simulations with evolution times $t_f=2,\ 6,\ 10$ and represent the instances as NOT2, NOT6, and NOT10, respectively.  For the circuit compilation problem, we test two molecules, H$_2$ (dihydrogen) and LiH (lithium hydride), and generate the UCCSD circuits with minimum energy by the VQE algorithm in the Python package Qiskit~\citep{aleksandrowicz2019qiskit}. The instances are represented by CircuitH2 and CircuitLiH. We set the parameters corresponding to quantum systems as $J_c=0.2\pi,\ J_f=3\pi,\ J=0.1\pi$. We test values of the penalty parameter for the squared $L_2$-penalty function $\rho=10^{-6},\ 10^{-5},\ 10^{-4},\ 10^{-3},\ 10^{-2},\ 10^{-1},\ 1,\ 10$, and we choose the one with the smallest objective value among the parameters with a squared penalized term less than $10^{-6}$. 
We set the penalty parameter to be $\rho=1.0$ and $\rho=0.1$ for two the instances, respectively. We test values for the TV parameter $\alpha = 10^{-5},\ 10^{-4},\ 10^{-3},\ 10^{-2}$ and choose $\alpha$ with the smallest rounding objective value with min-up-time constraints. The settings of $\alpha$ and other parameters are presented in Table~\ref{tab: setting}. All the numerical simulations were conducted in Python 3.8 on a MacOS computer with 8 cores, 16 GB of RAM, and a 3.20 GHz processor. All the computational time results are the times on classical computers. 
\begin{table}[ht!]
  \centering
  \caption{Parameter settings of examples. The parameters include the number of qubits ($q$), number of controllers ($N$), evolution time ($t_f$), number of time steps ($T$), TV parameter ($\alpha$), minimum uptime steps ($T_\textrm{minup}$), and maximum switches ($S$).}
    \begin{tabular}{lrrrrrrr}
    \hline
    Instance & $q$ & $N$ & $t_f$ & $T$ & $\alpha$ & $T_\textrm{minup}$ & $S$ \\
    \hline
    Energy2 & 2     & 2     & 2 & 40 & 0.01 & 10 & 5 \\
    \hline
    Energy4 & 4     & 2      & 2 & 40   & 0.01 & 10 & 5 \\
    \hline
    Energy6 & 6     & 2      & 2 & 40  & 0.01  & 10 & 5\\
    \hline
    CNOT5 & 2     & 2      & 5 & 100  & 0.01 & 10 & 20 \\
    \hline
    CNOT10 & 2     & 2      & 10 & 200 & 0.001 & 10 & 20 \\
    \hline
    CNOT15 & 2     & 2      & 15 & 300 & 0.0001 & 10 & 20 \\
    \hline
    CNOT20 & 2     & 2      & 20 & 400  & 0.0001 & 10 & 20 \\
    \hline
    NOT2 & 1 & 2 & 2 & 20 & 0.001 & 5 & 4\\
    \hline
    NOT6 & 1 & 2 & 6 & 60 & 0.001 & 5 & 12\\
    \hline
    NOT10 & 1 & 2 & 10 & 100 & 0.001 & 5 & 20\\
    \hline
    CircuitH2 & 2     & 5      & 4 & 80  & 0.001 & 10 & 8\\
    \hline
    CircuitLiH & 4     & 12     & 20 & 200 & 0.001 & 5 & 40\\
    \hline
    \end{tabular}%
  \label{tab: setting}%
\end{table}%

In Sections~\ref{sec: res-c}--\ref{sec: res-i}, for brevity we select four different instances,  Energy6, CNOT20, NOT10, and CircuitLiH, to analyze the results of objective values and CPU time. Detailed results of all the methods and instances are presented in Tables~\ref{tab: res-obj-c}, \ref{tab: res-time-c}, \ref{tab: res-obj-r}, \ref{tab: res-time-r}, \ref{tab: res-obj-improve}, and \ref{tab: res-time-improvement} in Appendix~\ref{app: results}. Our full code and results are available on our Github repository \citep{codebinary2022}.

\subsection{Results of Continuous Relaxations}
\label{sec: res-c}
In Figure~\ref{fig:res-circuit-l2} we show how the common logarithm of the squared $L_2$-norm value varies with the penalty parameter represented by the lines. We also present an asymptotic bound $\log_{10} c/\rho$ represented by the dashed lines, where $c$ is a constant marking that a selected data point lies on the dashed lines in the simulation results. Notably, the figure %
confirms that $l(u^*_\rho, T) \sim O(1/\rho)$, as shown in Theorem~\ref{theo:l2-bound}. 
\begin{figure}[ht!]
    \centering
    \subfloat[Instance CircuitH2]{\includegraphics[width=0.5\textwidth]{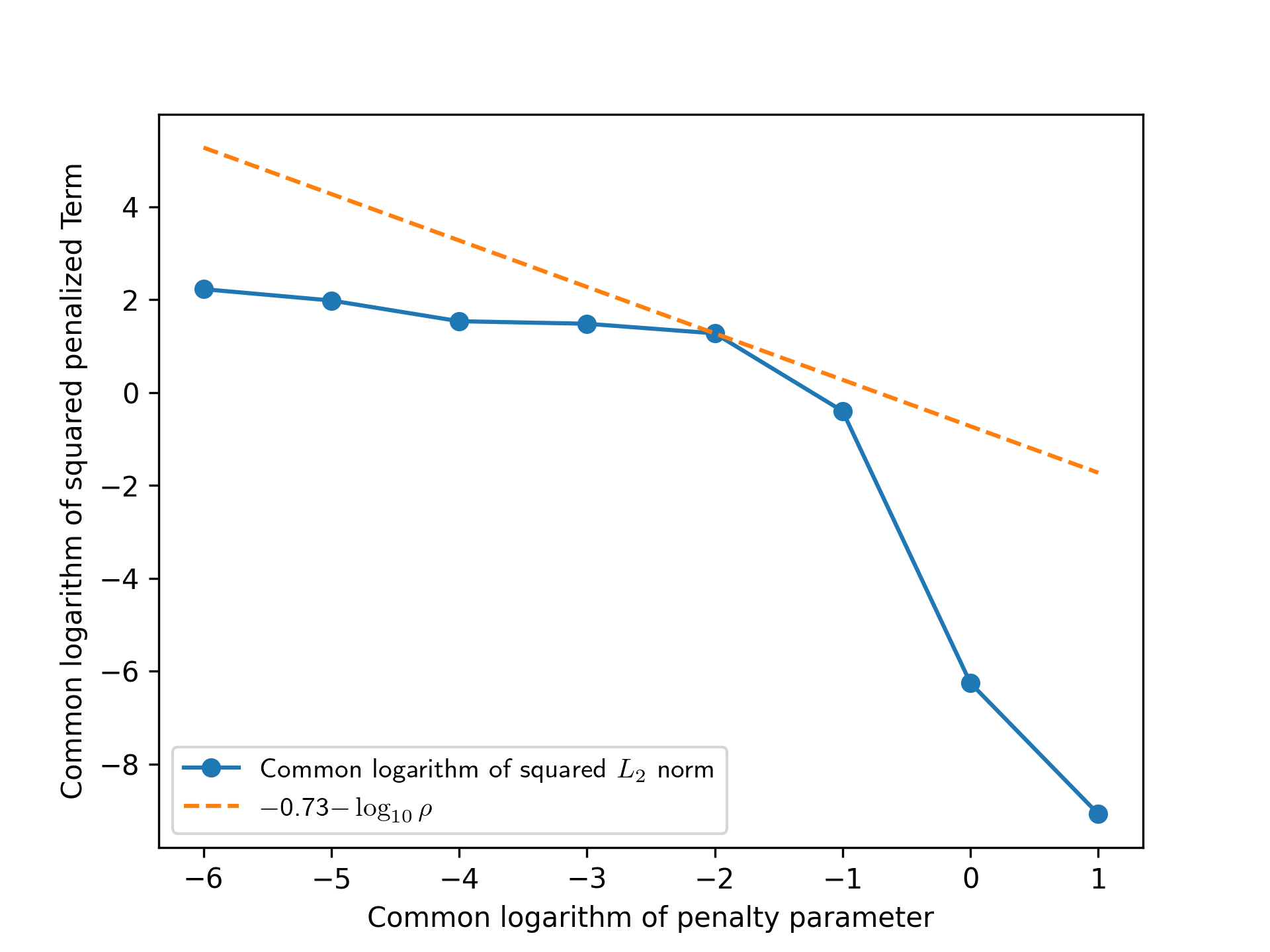}}
    \subfloat[Instance CircuitLiH]{\includegraphics[width=0.5\textwidth]{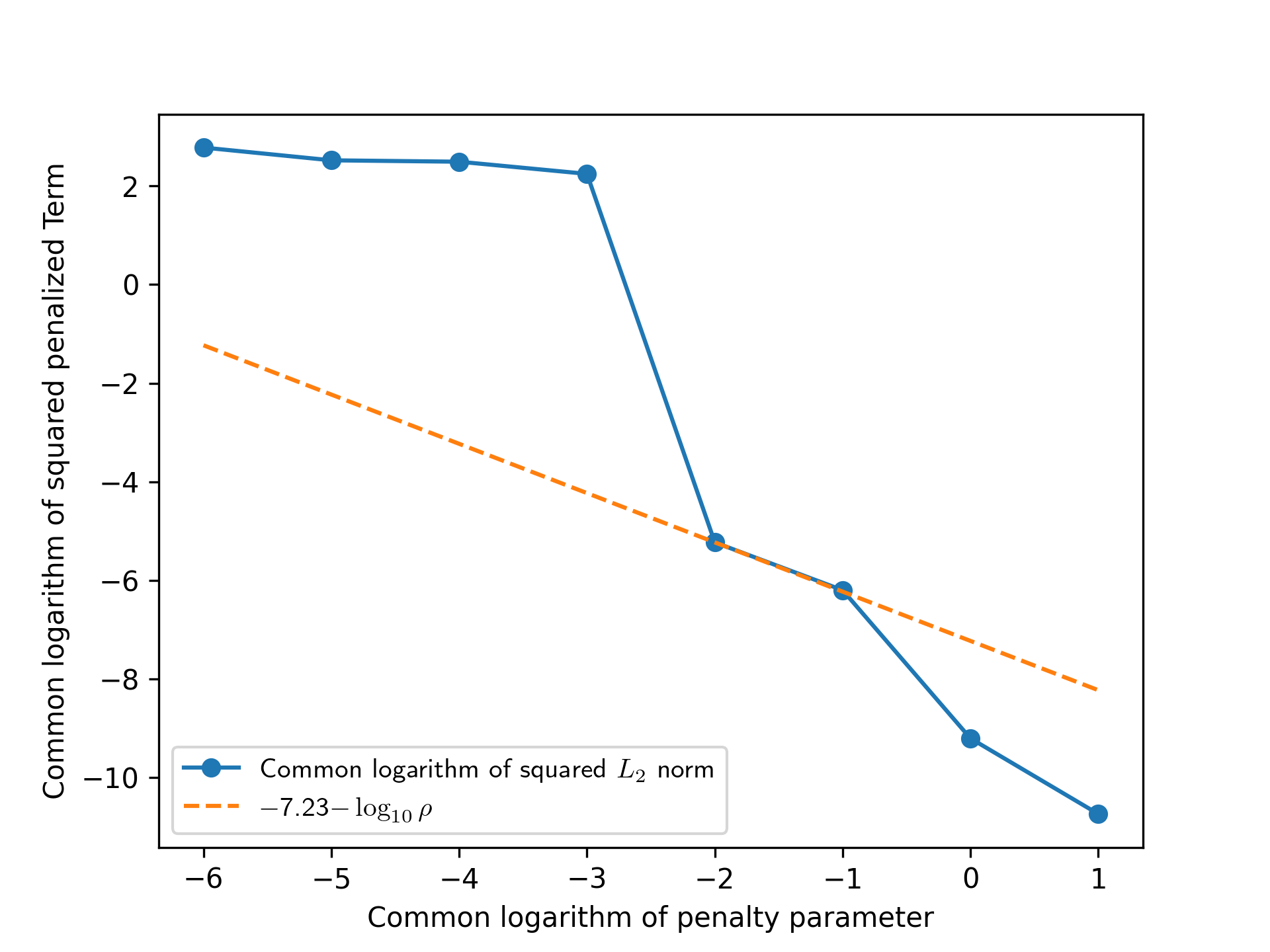}}
    \caption{Common logarithm of squared $L_2$-norm varying with $\log_{10} \rho$, where $\rho$ is the squared $L_2$ penalty parameter. Blue lines are the common logarithm of the squared $L_2$-norm. Orange dashed lines are functions $\log_{10} c-\log_{10} \rho$, where $c$ is a constant. The figures show that the squared $L_2$-norm $l(u^*_\rho, T)$ decreases as the penalty parameter $\rho$ increases with a rate $O(1/\rho)$.}
    \label{fig:res-circuit-l2}
\end{figure}

We show that pGRAPE always obtains the lowest objective value but the highest TV regularizer values because it solves the model without the TV regularizer. ADMM has the best performance for reducing the TV regularizer values, showing the benefits of introducing a TV regularizer in Section~\ref{sec: alg-infrequent}. The detailed results are presented in Appendix~\ref{app: results}.

We measure the problem size of each instance by $2^q\times N\times T$, where $q$, $N$, and $T$ represent the number of qubits, number of controllers, and number of time steps. For four groups of instances,  Energy, CNOT, NOT, and Circuit, we present in Figure~\ref{fig:c-time} common logarithm-logarithm figures to show how the CPU time and the number of iterations vary with the problem size. 
The CPU time increases significantly with an increase in the number of qubits because the dimension of simulation Hamiltonian matrices increases exponentially. The CPU time also increases when the number of controllers and time steps increases. 
Compared with pGRAPE, TR and ADMM spend more time solving the model because it contains the TV regularizer. 
The CPU time of ADMM is more than that of TR for Energy instances, while TR spends more time on CNOT, NOT, and Circuit instances than ADMM spends. 
\begin{figure}[ht!]
    \includegraphics[width=\textwidth]{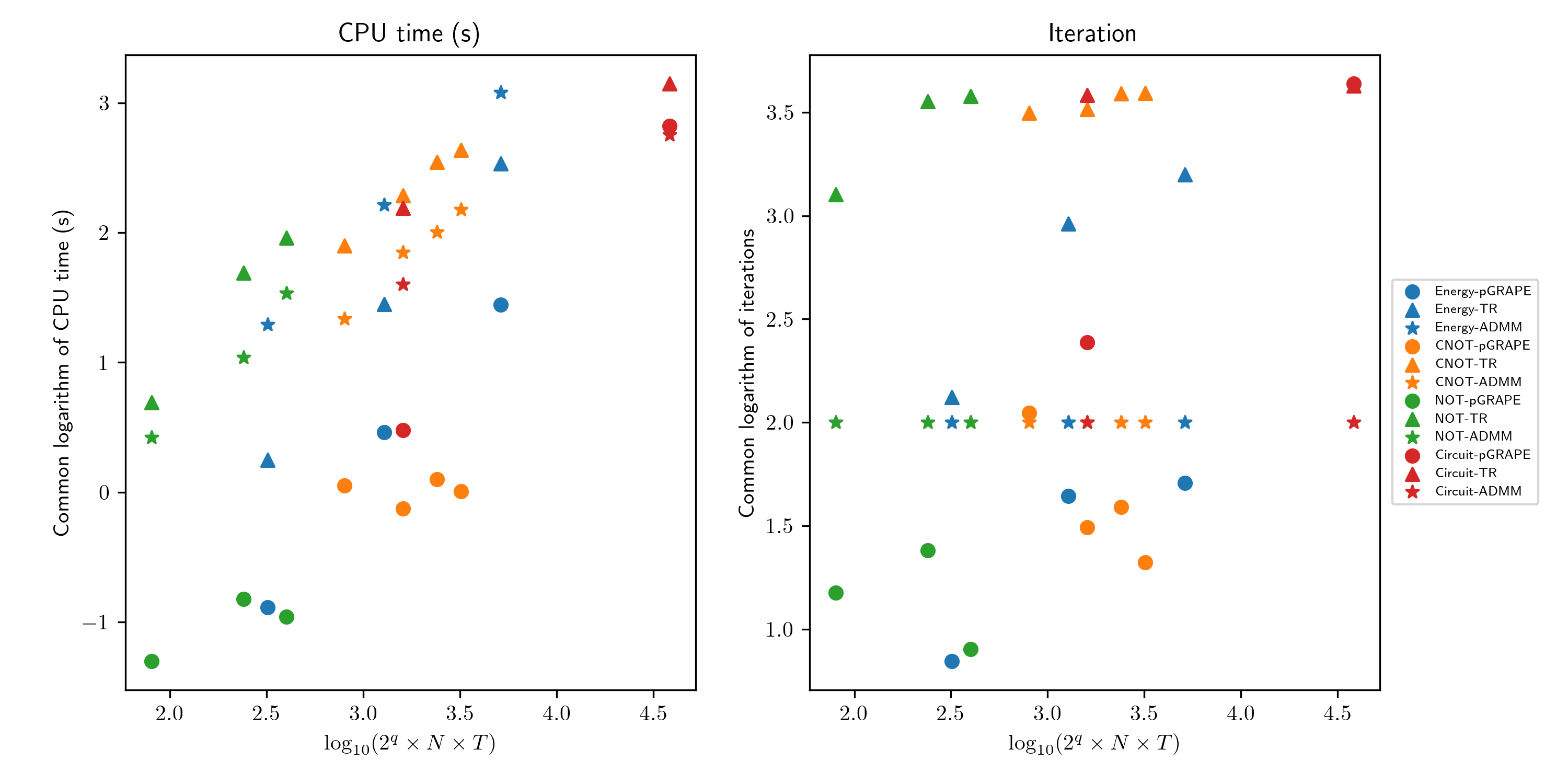}
    \caption{Common logarithm-logarithm figure of CPU time and iterations of continuous relaxation for all the instances and number of time steps. Blue, orange, green, and red colors represent instances Energy, CNOT, NOT, and Circuit. Circles, triangles, and stars represent obtaining continuous results by pGRAPE, TR, and ADMM, respectively. Dots with increasing size and transparency represent results by MT and MS, respectively.}
    \label{fig:c-time}
\end{figure}

\subsection{Results of Rounding Techniques}
\label{sec: res-r}
The results for Sections~\ref{sec: res-r} and~\ref{sec: res-i} are summarized in Figure~\ref{fig:all-instance}, which shows the objective values and TV regularizer values of the binary results obtained by SUR as well as CIA with min-up-time constraints and max-switching constraints of selected instances. Compared with pGRAPE+MT/MS, TR+MT/MS or ADMM+MT/MS obtains lower objective values and TV regularizer values for the binary results with min-up-time constraints and max-switching constraints because they solve the model with the TV regularizer. 

From the results of pGRAPE we show that the squared $L_2$-penalty function ensures that the difference between binary controls of SUR and continuous results is small. To further demonstrate the convergence of pGRAPE+SUR in Section~\ref{sec: alg-sur}, we present how the maximum absolute integral error, which is the left-hand side of equation \eqref{eq:sur-convergence},
varies with the number of time steps in Figure~\ref{fig:sur-error} represented by the lines. The figures show that the error converges to zero with the increase in the number of time steps, as is claimed in Proposition~\ref{prop: b-c-convergence}. 
We also present the theoretical upper bound of the integral error, which is the right-hand side of \eqref{eq:sur-convergence}, represented by the dashed lines. We observe that the integral error is always less than the upper bound, demonstrating the conclusion in Theorem~\ref{theo:sur-convergence}. 
In addition, we present the figures of how $\epsilon(\Delta t)$ and the upper bound $\sqrt{t_f{l(u^c,T)}\Delta t}$ vary with the number of time steps for instances CircuitH2 and CircuitLiH in Figure~\ref{fig:epsilon}. 
We show that $\epsilon(\Delta t)$ is always smaller than the upper bound, demonstrating the conclusion~\eqref{eq:sur-convergence-prop} in Corollary~\ref{cor:sur-convergence-1}.
\begin{figure}[ht!]
    \centering
    \subfloat[Instance CicruitH2 \label{fig:h2-sur-error}]{\includegraphics[width=0.5\textwidth]{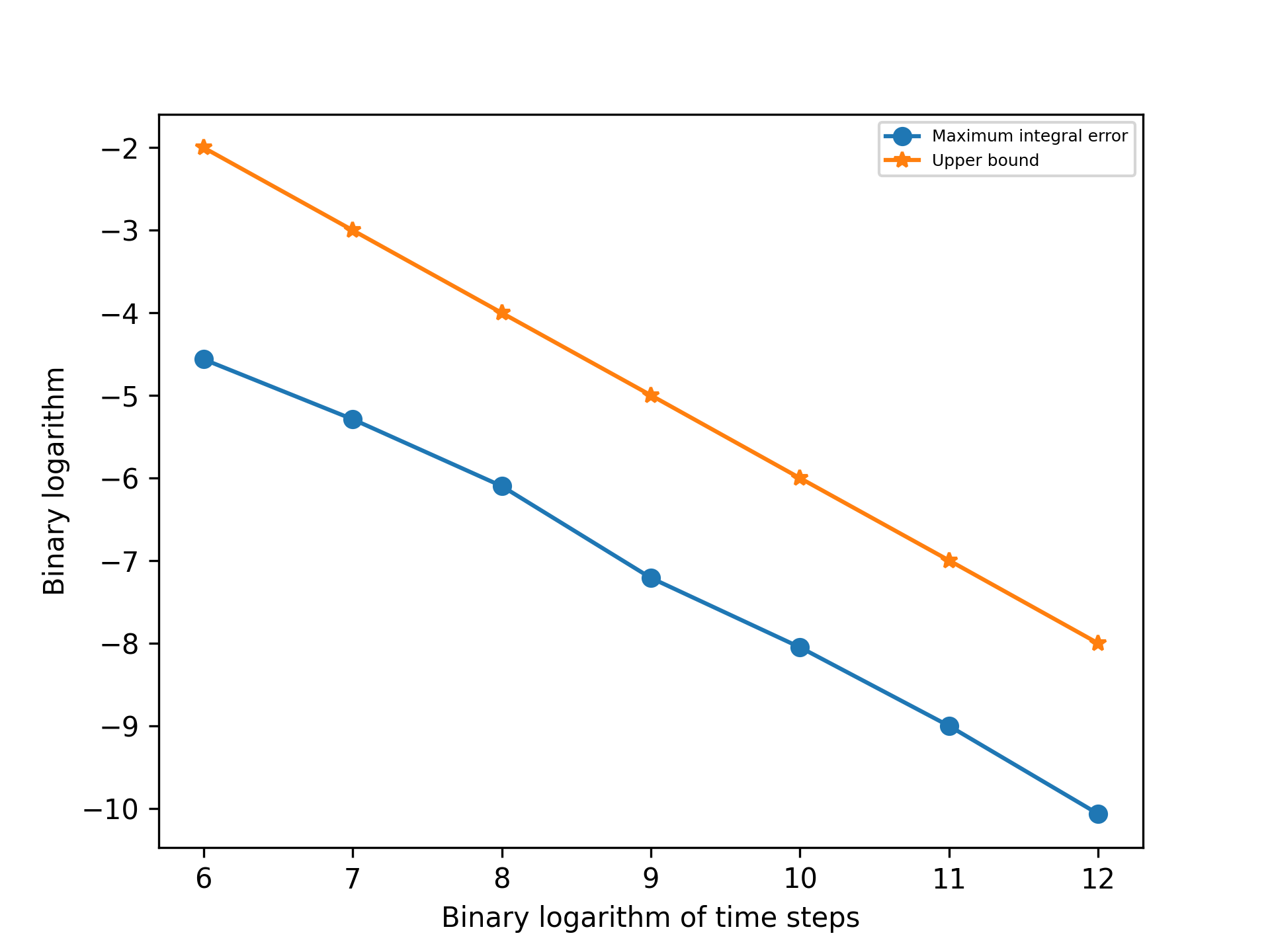}}
    \subfloat[Instance CircuitLiH \label{fig:lih-sur-error}]{\includegraphics[width=0.5\textwidth]{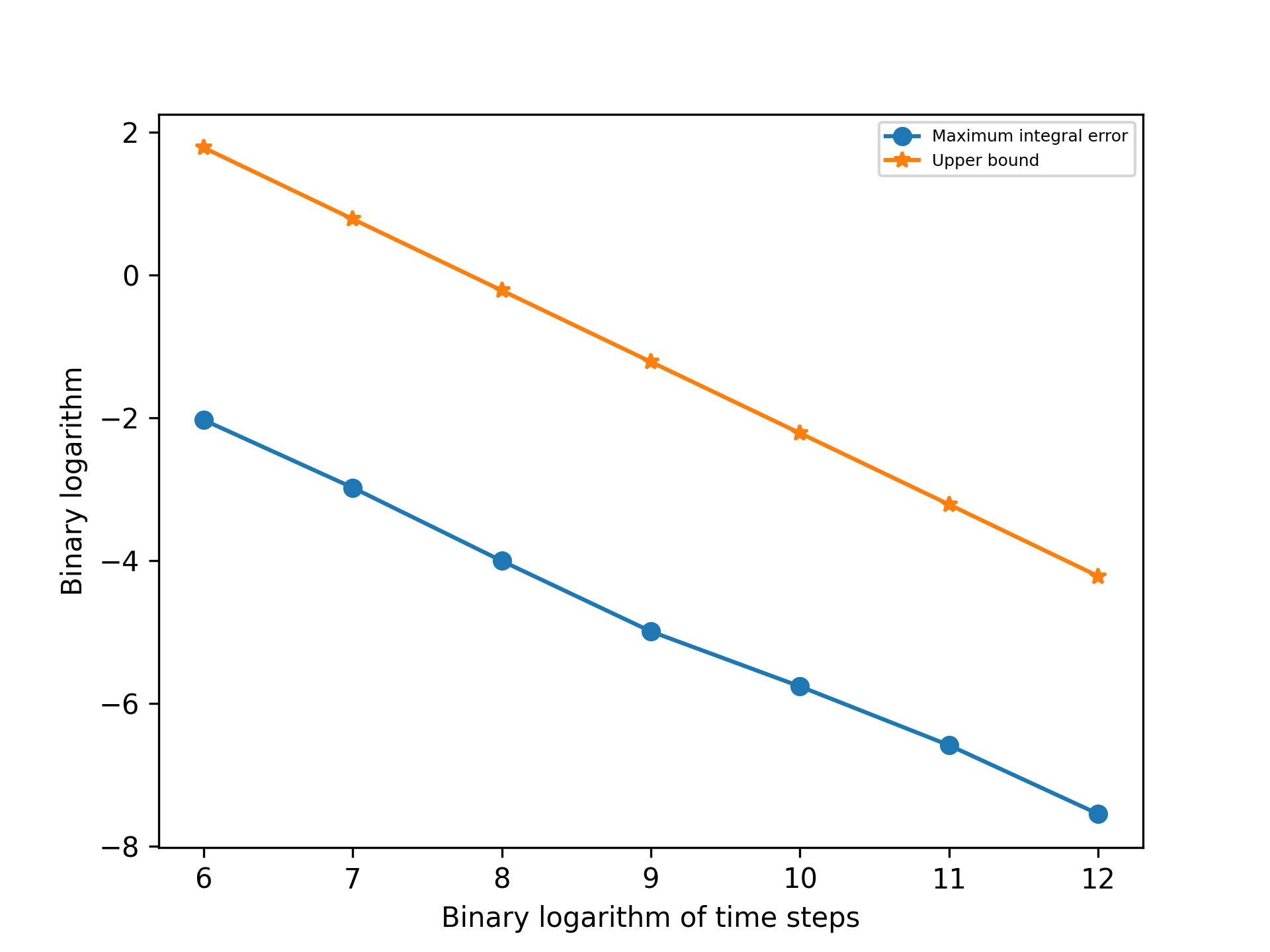}}
    \caption{Binary logarithm of maximum absolute integral error and upper bound of continuous results and SUR binary results. Blue lines marked by circles represent the maximum absolute integral error, and orange lines marked by stars represent the upper bound. 
    The error is always smaller than the upper bound and converges to zero as the number of time steps increases, demonstrating the conclusion~\eqref{eq:sur-convergence}. }
    \label{fig:sur-error}
\end{figure}
\begin{figure}[ht!]
    \centering
    \subfloat[Instance CicruitH2 \label{fig:h2-epsilon}]{\includegraphics[width=0.5\textwidth]{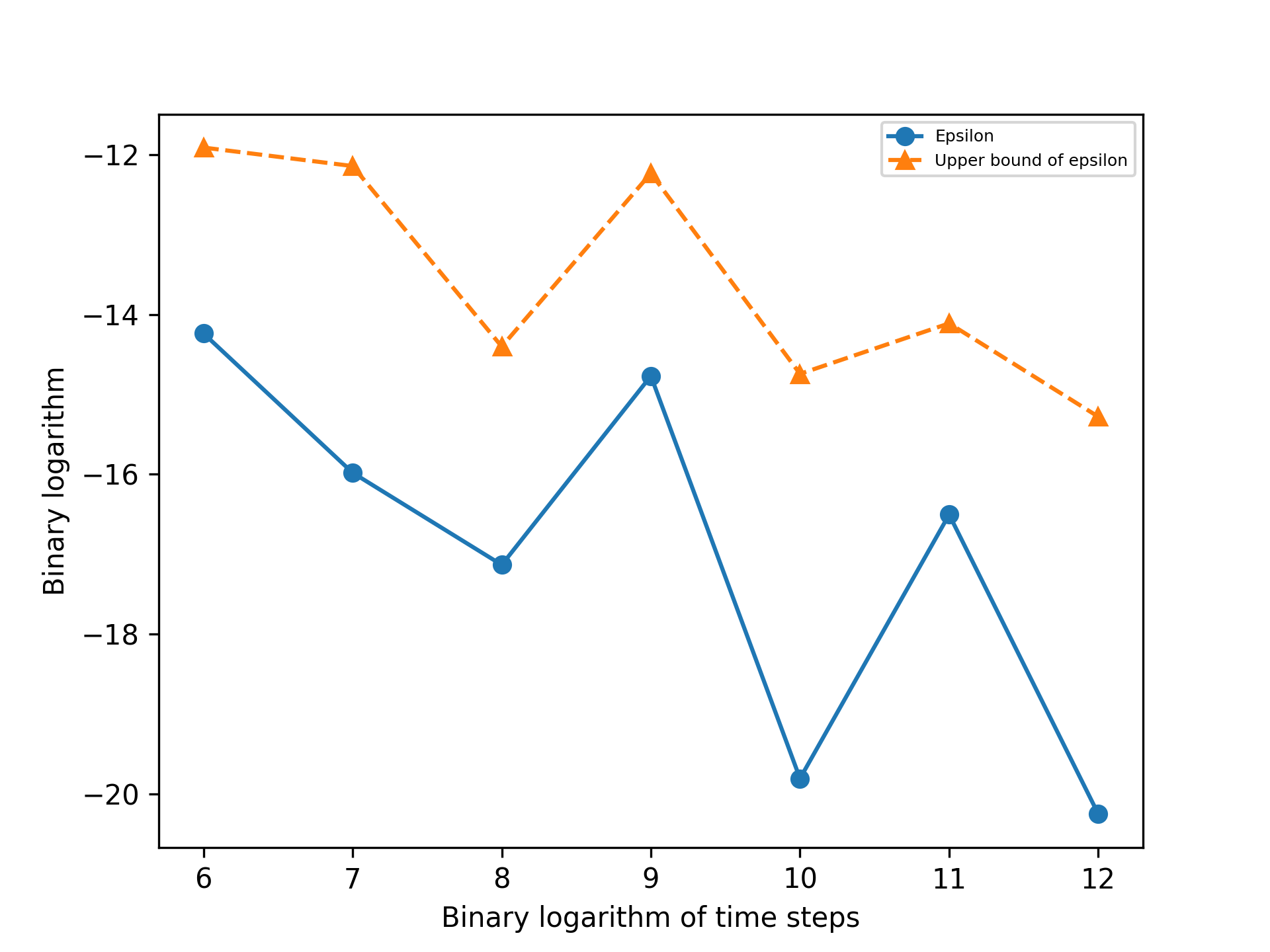}}
    \subfloat[Instance CircuitLiH \label{fig:lih-epsilon}]{\includegraphics[width=0.5\textwidth]{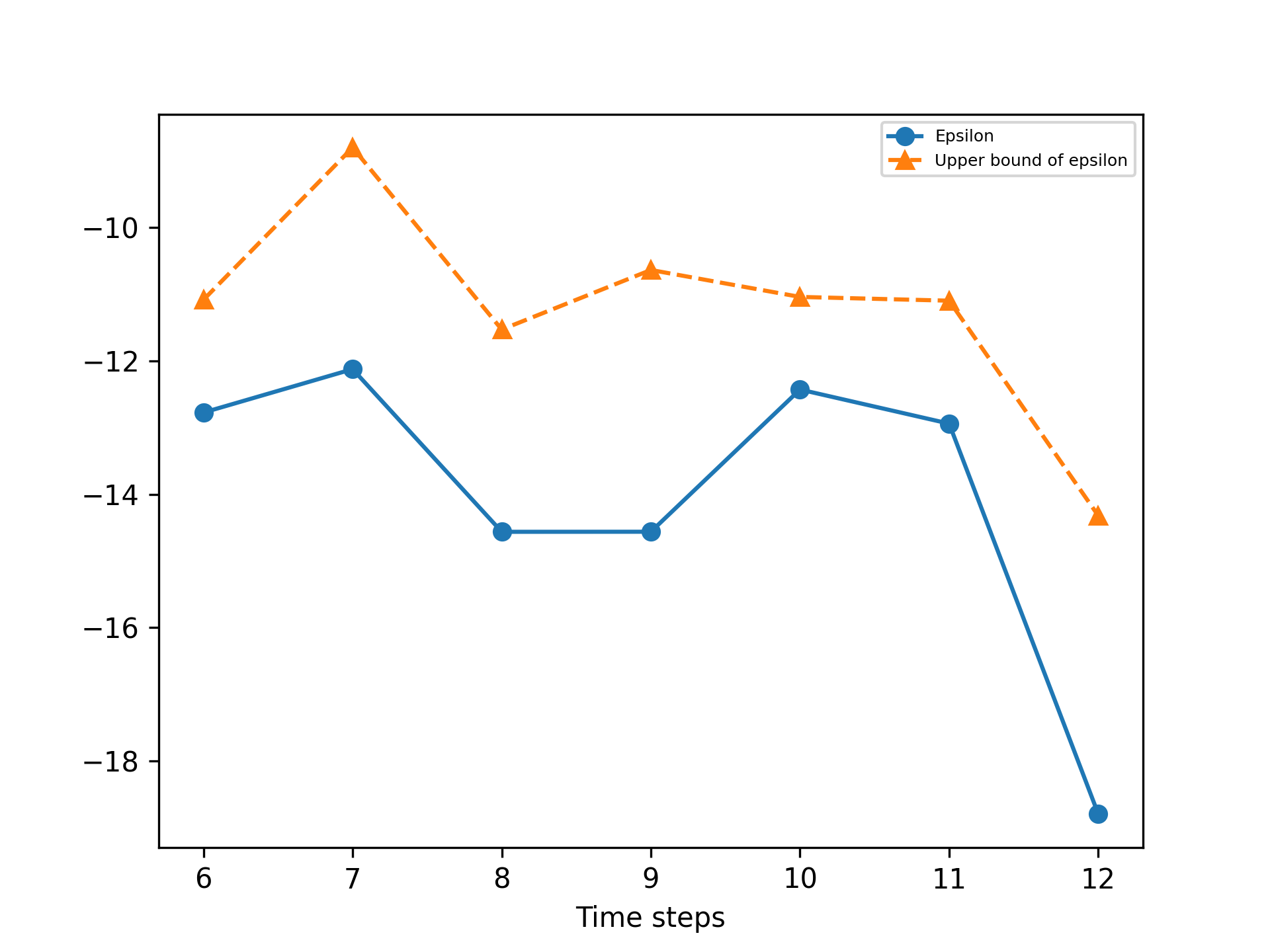}}
    \caption{Binary logarithm of the maximum SOS1 difference of continuous control $\epsilon(\Delta t)$ and its upper bound in \eqref{eq:sur-convergence-prop}. Blue lines represent $\epsilon(\Delta t)$ and orange dashed lines represent the upper bound. We show that $\epsilon(\Delta t)$ converges to $0$ with the increase in time steps and is always smaller than the upper bound, demonstrating Corollary \ref{cor:sur-convergence-1}.}
    \label{fig:epsilon}
\end{figure}

We set the time limit to $60$ seconds for solving the combinatorial integral approximation with min-up-time and max-switching constraints. In Figure~\ref{fig:r-time} we present the CPU time and iterations of all the instances. All the CPU times of SUR and Energy instances are less than 0.01 seconds so we do not show them in the figure. The CPU times increase significantly  with the number of variables. For the instance CircuitLiH, all the methods exceed the computational time limit. 
In most cases, iterations and CPU times of MS are more than MT because the problem with MS has a larger integrality gap.

\begin{figure}[ht!]
    \includegraphics[width=\textwidth]{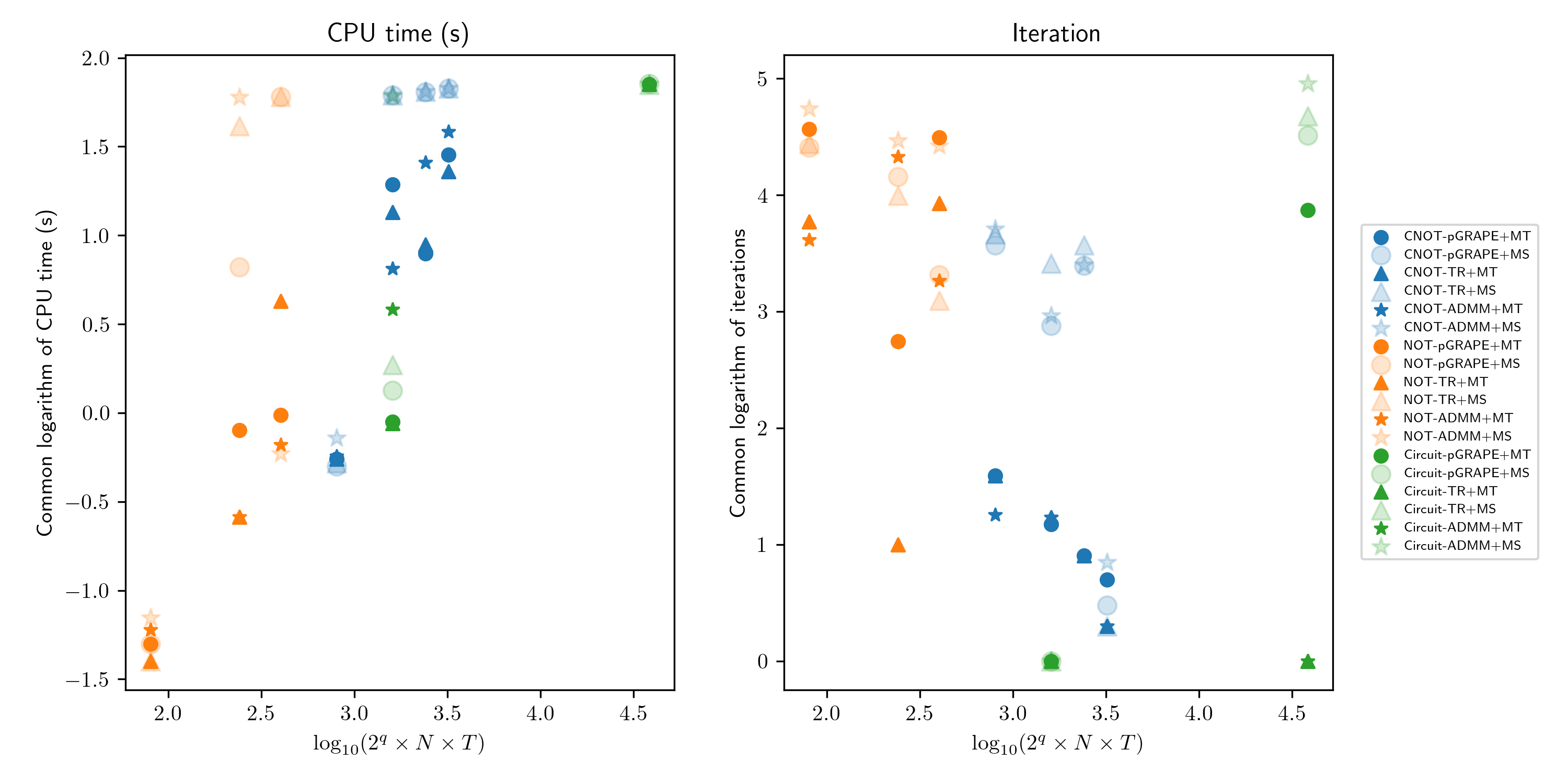}
    \caption{Common logarithm-logarithm figure of CPU time and iterations of CIA for all the instances. Blue, orange, and green colors represent instances of CNOT, NOT, and Circuit, respectively.  Circles, triangles, and stars represent obtaining continuous results by pGRAPE, TR, and ADMM, respectively. Dots with increasing size and transparency represent results by SUR, MT, and MS, respectively. }
    \label{fig:r-time}
\end{figure}

\subsection{Results of Improvement Heuristic}
\label{sec: res-i}
The objective values and TV regularizer values after the improvement heuristic with the control results of combinatorial integral approximation as initial points for selected instances are represented by opaque small dots in Figure~\ref{fig:all-instance}. 
Because there are multiple local minima for these quantum control problems, each method may obtain a different performance after the improvement heuristic. 
For systems with two controllers, all the methods obtain performance at a similar level. The performance of different methods varies more on the circuit compilation problem because it contains more local optima. 

We show in Figure~\ref{fig:h-time} the CPU time and iterations for the improvement heuristic of selected instances.
For each group of instances  Energy, CNOT, NOT, and Circuit, the CPU time increases with the problem size. 
The number of iterations highly depends on the initial points.

In addition, we observe that for both continuous relaxation and ALB, the CPU time is exponentially increasing but the number of iterations increases slightly. We note that the main increase of the computational time comes from the time-evolution process for simulating the quantum system because the dimension of the Hamiltonian matrices increases exponentially with the number of qubits $q$. In our algorithm, we could directly conduct the time evolution on quantum computers with a control pulse, and then only use the final state to compute the objective value and approximate the gradient by the finite difference method. Hence, we do not need to evaluate the state at each time step, indicating the scalability of our approach for handling large-scale instances.

\begin{figure}[ht!]
    \includegraphics[width=\textwidth]{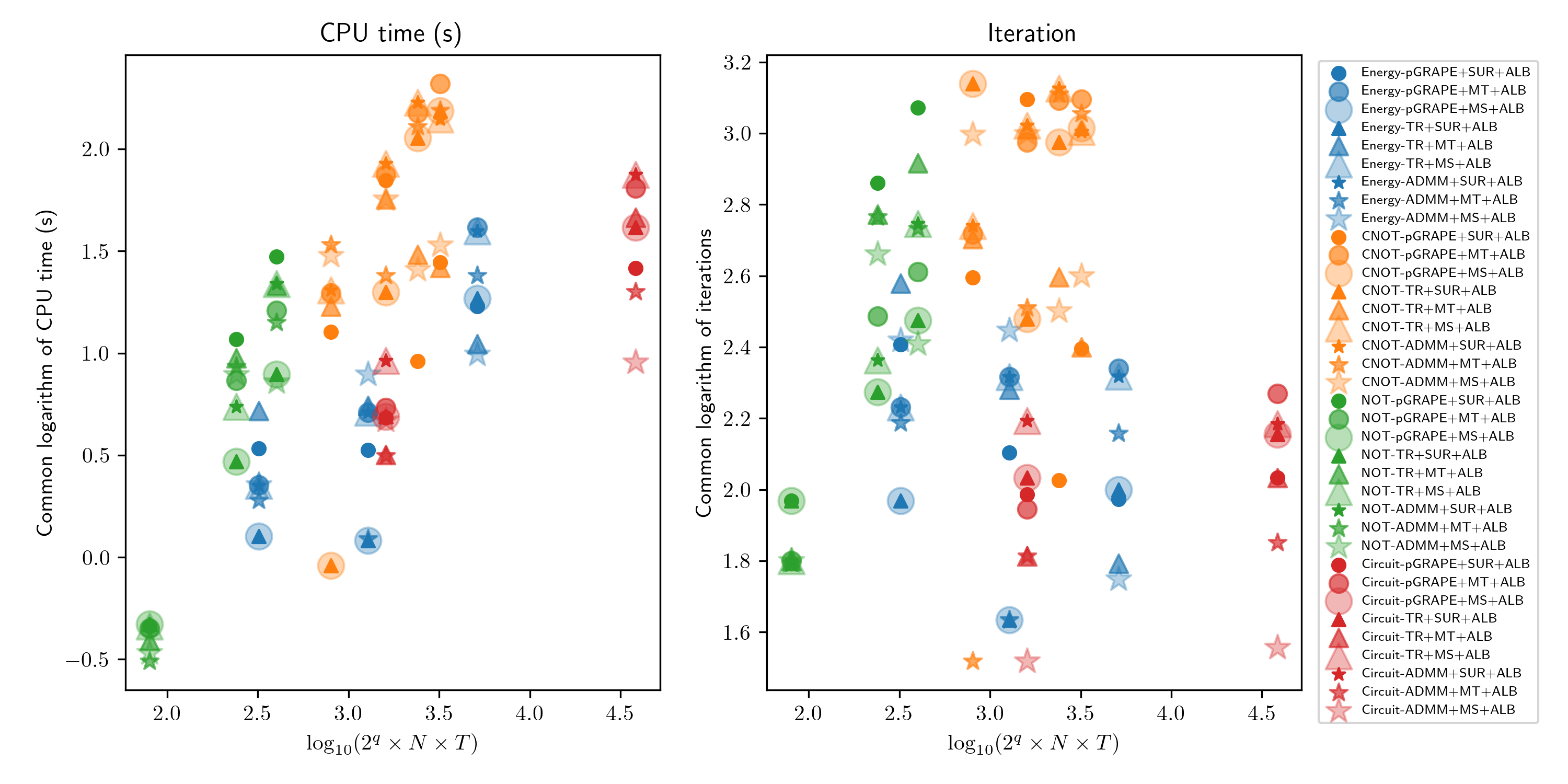}
    \caption{Common logarithm-logarithm figure of CPU time and iterations of ALB for all the instances. Blue, orange, green, and red colors represent groups of instances Energy, CNOT, NOT, and Circuit. Circles, triangles, and stars represent obtaining continuous results by pGRAPE, TR, and ADMM, respectively. Dots with increasing size and transparency represent results by SUR, MT, and MS, respectively. }
    \label{fig:h-time}
\end{figure}

In Figure~\ref{fig:all-instance} we show the objective values and TV regularizer values of the combinatorial integral approximation and the improvement heuristic. 
We show that the objective value of SUR keeps the same or slightly increases, demonstrating the theorems and propositions in Section~\ref{sec: alg-sur}. 
Adding min-up-time and max-switching constraints leads to an increase in objective values because of adding additional constraints stated in Section~\ref{sec: alg-bnb} to the feasible set. 
The increase from adding min-up-time constraints is higher because the restricted feasible region contains fewer feasible solutions. 
From the aspect of the chattering measured by the TV regularizer value, we show that SUR leads to more chattering compared with continuous solutions because of the rounding process. 
As discussed in Section~\ref{sec: alg-bnb}, imposing min-up-time and max-switching constraints reduces the chattering, and adding min-up-time constraints reduces it more significantly because the constraints are stricter.
The improvement heuristic reduces the objective values and chattering remarkably for the results obtained from all the methods with rounding techniques, especially pGRAPE, which shows the benefits of our improvement algorithms in Section~\ref{sec: alg-tr}. 
 In Figure~\ref{fig:all-instance}, the dot closest to the lower-left corner indicates that it obtains the best balance between improving objective values and reducing chattering. 
If the original objective function matters more, one can choose methods that consist of SUR and the improvement heuristic. If one focuses more on reducing the switches, the methods with min-up-time or max-switching constraints and the improvement heuristic are better.

\begin{figure}[ht!]
    \centering
    \includegraphics[width=\textwidth]{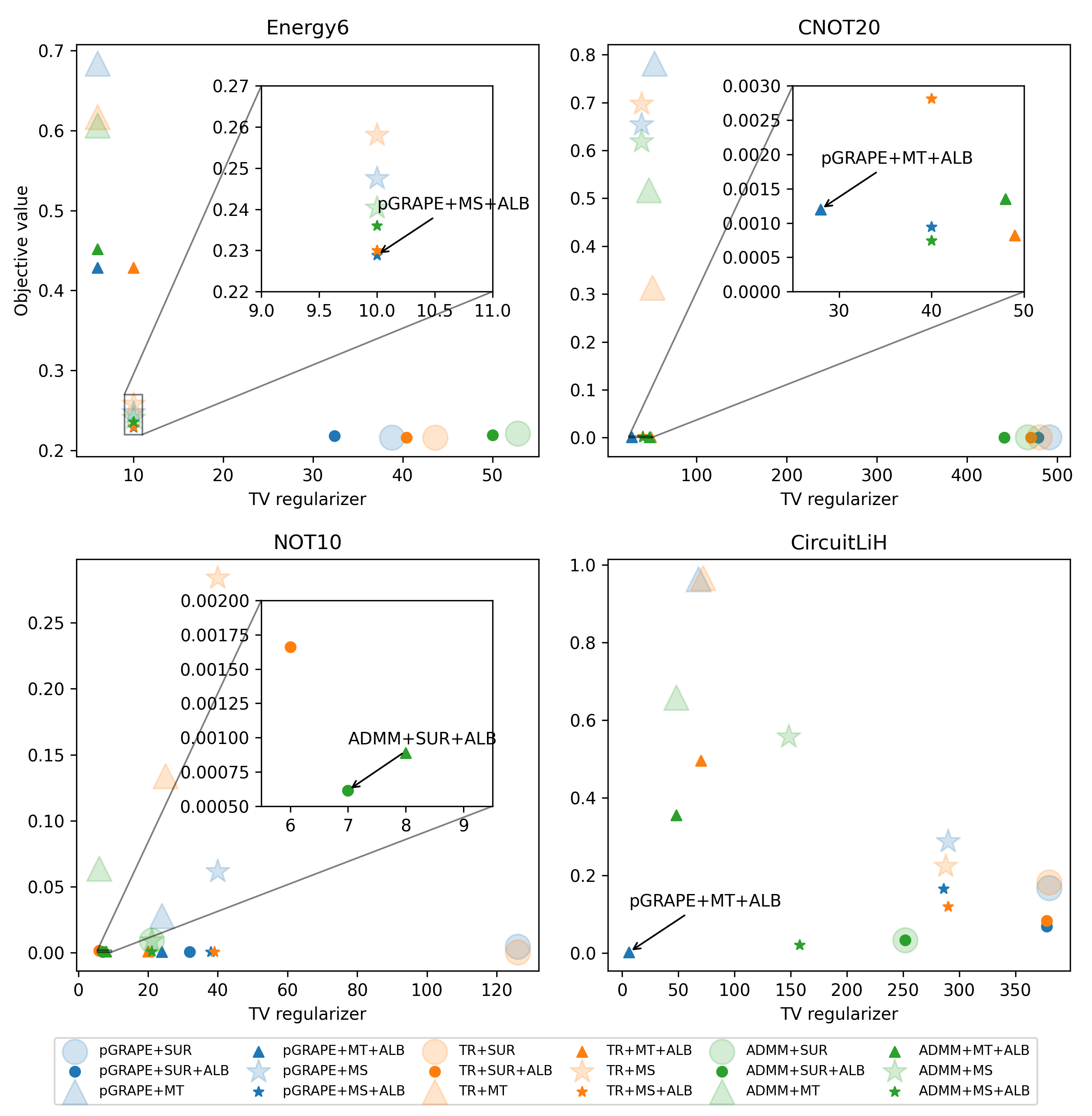}
    \caption{Objective values and TV regularizer values of binary results of selected instances. Blue, orange, and green dots represent solving continuous relaxation by pGRAPE, TR, and ADMM, respectively. Circles, triangles, and stars represent obtaining binary results by SUR, MT, and MS, respectively. 
    Big dots represent results before ALB, and opaque small dots represent results after ALB. We annotate the method represented by the closest point to the lower-left corner. }
    \label{fig:all-instance}
\end{figure}

\section{Conclusion}
\label{sec: conclusion}
In this paper we built a generic control model with both continuous and discretized formulations for the quantum control problem. We introduced a penalized squared $L_2$ function into the model to ensure that only one control is active at all times. In addition, we proposed a model with the TV regularizer aiming to reduce the absolute change of controls. 

We developed an algorithmic framework combining the GRAPE approach, combinatorial integral approximation, and local-branching improvement heuristic. With numerical simulations on multiple examples with various objective functions and controllers, we 
demonstrated the feasibility of the discrete optimal quantum control problem and illustrated that our algorithms obtained 
trade-off controls between high quality and low absolute changes within a short computational time. 
Specifically, the performance of different relaxation models varies among instances, and therefore, testing all three methods (pGRAPE, TR, and ADMM) is helpful to select the best control.
If one is more interested in obtaining controls with a lower energy or infidelity, we recommend methods using the sum-up-rounding technique. On the other hand, if one aims to reduce switches, it is better to choose methods with min-up-time constraints added. Both cases require the use of the approximate local branching algorithm to improve final solutions. 
In practice, the performance can be improved further by fine-tuning the corresponding parameters. 

All the numerical simulations in our paper were conducted on classical computers , and the CPU time mainly comes from the time-evolution process of quantum states. 
Because our algorithm only requires the final state to compute the objective value and approximate the gradient, running the time-evolution process in our algorithms on quantum computers is easy in practice and could help eliminate a significant amount of the exponential slow-down as we observe in our simulations.
Furthermore, with continuous control results, we can extract appropriate controller sequences and optimize the location of switching points to obtain high-quality controls. Switching-time optimization using continuous control results is future research.

\section*{Acknowledgment} 
This material is based upon work supported by the U.S. Department of Energy, Office of Science, under contract number DE-AC02-06CH11357.
This work was supported by the U.S.~Department of Energy, Office of Science, Office of Advanced Scientific Computing Research, Accelerated Research for Quantum Computing program.
The authors gratefully acknowledge the support from the Givens Associate program in Argonne National Laboratory. Xinyu Fei and Siqian Shen are thankful for the support from U.S.\ National Science Foundation (NSF) grants \#CMMI-1727618 and Department of Energy (DOE) grant \#DE-SC0018018. 

Certain commercial equipment, instruments, or materials are identified in this paper in order to specify the experimental procedure adequately.  Such identification is not intended to imply recommendation or endorsement by NIST, nor is it intended to imply that the materials or equipment identified are necessarily the best available for the purpose.

\bibliographystyle{unsrtnat}
\bibliography{Xinyu}

\begin{appendices}
\section{Proofs of Results in Section~\ref{sec: alg-binary}}
\label{app: proof}
In this section, we provide proofs of the results in Section~\ref{sec: alg-binary}. 
\subsection{Proofs of Results in Section~\ref{sec: alg-penalty}}
\label{app: proof-sos1}

\begin{proof}[Proof of Theorem~\ref{theo:l2-bound}]
Because $F$ is upper bounded by constant $C_F$, we have $|F(u)|\leq C_F$ for all the $u$. We prove that for any $\rho$ and $T$, $|\rho l(u^*_\rho, T)|\leq 2C_F$. 
Assume that there exists an optimal solution $u^{(1)}$ of \eqref{eq:model-d-l2-p} such that $|\rho l(u^{(1)}, T)|>2C_F$, then we have $F(u^{(1)})+\rho l(u^{(1)}, T) > C_F$. 
This is a contradiction to the definition of $u^{(1)}$ because we can find a feasible solution $u^{(2)}$ of \eqref{eq:model-d-1} such that $F(u^{(2)}) + \rho l_(u^{(2)}, T)=F(u^{(2)})\leq C_F$. 
Therefore we drive that $|l(u^*_\rho, T)|\leq 2C_F / \rho$, which means that $l(u^*_\rho, T)\sim O(1/\rho)$. 
\end{proof}

\subsection{Proofs of Results in Section~\ref{sec: alg-rounding}}
\label{app: proof-sur}
\begin{proof}[Proof of Proposition~\ref{prop: nosos1}]
Because the SOS1 property holds for binary control $u^b$, we have
\begin{align}
    \max_{k=1,\ldots,T} \left\|\sum_{\tau=1}^k (u^c_\tau - u^b_\tau) \Delta t\right\|_\infty &= \max_{k=1,\ldots,T} \max_{j=1,\ldots,N} \left|\sum_{\tau=1}^k (u^c_{j\tau} - u^b_{j\tau}) \Delta t\right| \nonumber \\
    &\geq \max_{k=1,\ldots,T} \frac{1}{N} \left|\sum_{j=1}^{N} \sum_{\tau=1}^k (u^c_{j\tau} - u^b_{j\tau}) \Delta t\right| \nonumber\\
    &= \max_{k=1,\ldots,T} \frac{1}{N} \left|\sum_{\tau=1}^k \left (\sum_{j=1}^{N} u^c_{j\tau} - 1\right ) \Delta t\right| = \frac{1}{N}\epsilon(\Delta t).
\end{align}
\end{proof}
\begin{proof}[Proof of Theorem~\ref{theo:sur-convergence}]
For simplicity, we use $\epsilon$ to denote $\epsilon(\Delta t)$. It is obvious that for any time step $k=1,\ldots,T$, exactly one control can be set to $1$ by $u^b$ by the construction. Hence $u^b$ satisfies the SOS1 property. 
For a specific time step $k=1,\ldots,T$, define 
\begin{align}
    i_k=\argmax_{j=1,\ldots,N}  \left |\sum_{\tau=1}^{k} u^c_{j\tau} \Delta t - \sum_{\tau=1}^{k} u^b_{j\tau}\Delta t\right |.
\end{align}
We assume that there exists a time step $r$ such that the claim does not hold. For simplicity, we use $i$ to represent the corresponding $i_r$. We consider the assumption in two cases. 
\paragraph{Case 1:} We assume that, 
\begin{align}
\label{eq:sur-a1}
    \sum_{\tau=1}^r u^c_{i\tau} \Delta t-\sum_{\tau=1}^{r} u^b_{i\tau} \Delta t<-\left(N-1\right) \Delta t - \frac{2N - 1}{N} \epsilon.
\end{align}
Let $\hat{k}$ be the highest index of the time step in which control $i$ is rounded up, 
\begin{align}
    \hat{k}=\argmax _{1 \leq k \leq r}\left\{k\ |\  u^b_{ik}=1\right\}.
\end{align}
Then by \eqref{eq:sur-a1}, we have, 
\begin{subequations}
\label{eq:sur-a1-proof-1g}
\begin{align}
    \label{eq:sur-a1-proof-1}
    \sum_{\tau=1}^{\hat{k}} u^c_{i\tau} \Delta t + \left(N-1\right) \Delta t + \frac{2N - 1}{N} \epsilon 
    \leq &\sum_{\tau=1}^r u^c_{i\tau} \Delta t + \left(N-1\right) \Delta t + \frac{2N - 1}{N} \epsilon\\
    < & \sum_{\tau=1}^{r} u^b_{i\tau} \Delta t = \sum_{\tau=1}^{\hat{k}} u^b_{i\tau} \Delta t.
\end{align}
\end{subequations}
Because $u^b_{i\hat{k}}=1$, we know that $i$ has the maximum value of \eqref{eq:sur-a1-proof-1} among $j=1,\ldots,N$. Hence it follows from~\eqref{eq:sur-a1-proof-1g}, 
\begin{align}
    \sum_{\tau=1}^{\hat{k}} u^c_{j\tau} \Delta t-\sum_{\tau=1}^{\hat{k}} u^b_{j\tau} \Delta t<-\left(N-1\right) \Delta t - \frac{2N - 1}{N} \epsilon,\ j=1,\ldots,N.
\end{align}
Summing up over all the controls $j$,
\begin{align}
    - \epsilon \leq \sum_{j=1}^{N} \left(\sum_{\tau=1}^{\hat{k}} u^c_{j\tau}\Delta t -\sum_{\tau=1}^{\hat{k}} u^b_{j\tau} \Delta t\right)<-N\left(N-1\right) \Delta t - (2N - 2)\epsilon.
\end{align}
The first inequality comes from the definition of $\epsilon$. This leads to $0<-N(N - 1)\Delta t - (N - 1)\epsilon$, which is a contradiction because $\epsilon\geq 0$ and $N \geq 2$. 
\paragraph{Case 2:} We assume that, 
\begin{align}
\label{eq:sur-a2}
    \sum_{\tau=1}^{r} u^c_{i\tau} \Delta t-\sum_{\tau=1}^{r} u^b_{i\tau} \Delta t>\left(N-1\right) \Delta t + \frac{2N - 1}{N} \epsilon.
\end{align}
From the definition of $\epsilon$, we have
\begin{align}
    \label{eq:sur-proof-a2-1}
    \sum_{1=j \neq i}^{N}\left(\sum_{\tau=1}^{r} u^c_{j\tau} \Delta t-\sum_{\tau=1}^{r} u^b_{j\tau} \Delta t\right)+\sum_{\tau=1}^{r} u^c_{i\tau} \Delta t-\sum_{\tau=1}^{r} u^b_{i\tau} \Delta t\leq \epsilon
\end{align}
and by substituting \eqref{eq:sur-a2} into~\eqref{eq:sur-proof-a2-1}, it holds that 
\begin{align}
    \sum_{1=j \neq i}^{N}\left(\sum_{\tau=1}^{r} u^c_{j\tau} \Delta t-\sum_{\tau=1}^{r} u^b_{j\tau} \Delta t\right)+(N - 1)\Delta t + \frac{N - 1}{N} \epsilon < 0.
\end{align}
Because the left-hand side can be written as the sum of $N - 1$ terms as
\begin{align}
    \Delta t + \frac{1}{N}\epsilon + \sum_{\tau=1}^{r} u^c_{j\tau} \Delta t-\sum_{\tau=1}^{r} u^b_{j\tau} \Delta t, 
\end{align}
at least one of them must be negative, therefore there exists $\hat{j}$ such that 
\begin{align}
    \label{eq:sur-proof-a2-2}
    \Delta t + \frac{1}{N}\epsilon + \sum_{\tau=1}^{r} u^c_{\hat{j}\tau} \Delta t-\sum_{\tau=1}^{r} u^b_{\hat{j}\tau} \Delta t < 0.
\end{align}
Let $\hat{k}$ be the highest index of the time step in which control $\hat{j}$ is rounded up, 
\begin{align}
    \hat{k}=\argmax _{1 \leq k \leq r}\left\{k\ |\ u^b_{\hat{j}k}=1\right\}.
\end{align}
Then we have
\begin{align}
    \sum_{\tau=1}^{\hat{k}} u^c_{\hat{j}\tau} \Delta t-\sum_{\tau=1}^{\hat{k}-1} u^b_{\hat{j}\tau} \Delta t \leq \Delta t+\sum_{\tau=1}^{r} u^c_{\hat{j}\tau}\Delta t -\sum_{\tau=1}^{r} u^b_{\hat{j}\tau} \Delta t<-\frac{1}{N}\epsilon.
\end{align}
The first inequality follows from $u_{\hat{j}\hat{k}}^b=1$ and $u_{\hat{j}k}^b=0,\ k\geq \hat{k}$. The second inequality follows from~\eqref{eq:sur-proof-a2-2}. 
Because $\hat{j}$ is the control which is rounded up at time step $\hat{k}$, for any $j=1,\ldots,N$, it holds that, 
\begin{align}
    \sum_{\tau=1}^{\hat{k}} u^c_{j\tau} \Delta t-\sum_{\tau=1}^{\hat{k}} u^b_{j\tau} \Delta t \leq \sum_{\tau=1}^{\hat{k}} u^c_{j\tau} \Delta t -\sum_{\tau=1}^{\hat{k}-1} u^b_{j\tau} \Delta t<-\frac{1}{N}\epsilon.
\end{align}
Summing over all the controls, we have 
\begin{align}
    \sum_{j=1}^{N} \sum_{\tau=1}^{\hat{k}}  \left(u^c_{j\tau} - u^b_{j\tau}\right)\Delta t <-\epsilon.
\end{align}
This contradicts the definition of the parameter $\epsilon$.
\end{proof}

\begin{proof}[Proof of Corollary~\ref{cor:sur-convergence-1}]
Based on the formulation of the discretized control in the continuous relaxation of the model \eqref{eq:model-d-l2-p}, we have
\begin{align}
    \epsilon(\Delta t) & = 
    \max_{k=1,\ldots,T} \left|\sum_{\tau=1}^k \left(\sum_{j=1}^N u^c_{j\tau} - 1\right) \Delta t\right|
     \leq \sum_{k=1}^T \left|\sum_{j=1}^N u^c_{jk} - 1\right|\Delta t\nonumber\\
    & \leq \sqrt{T} \sqrt{\sum_{k=1}^T \left(\sum_{j=1}^N u^c_{jk} - 1\right)^2}\Delta t 
    =\sqrt{Tl(u^c, T)} \Delta t = \sqrt{t_f{l(u^c,T)}\Delta t}.
\end{align}
From Theorem~\ref{theo:sur-convergence}, we directly obtain the statement \eqref{eq:sur-convergence-prop}. From Theorem~\ref{theo:l2-bound}, if the original function $F$ is bounded, the optimized squared $L_2$ term $l(u^c, T)$ is uniformly bounded over time steps $T$, then the absolute integral error between continuous and discretized control converges with $O(\sqrt{\Delta t})$.
\end{proof}
\begin{proof}[Proof of Proposition~\ref{prop: b-c-convergence}]
For any time step $k=1,\ldots,T$, we have 
\begin{align}
\label{eq:state-c}
    \lim_{\Delta t \rightarrow 0} X^c_k 
    & = \lim_{\Delta t \rightarrow 0}\prod_{\tau = 1}^k e^{-i\Delta tH_\tau } X_0
     = \lim_{\Delta t \rightarrow 0}\prod_{\tau = 1}^k e^{-i\Delta t\left(H^{(0)}+\sum_{j=1}^N u^c_{j\tau} H^{(j)}\right) }X_0\nonumber \\
     & = \lim_{\Delta t\rightarrow 0} e^{-i\Delta tH^{(0)} -i\Delta t\sum_{j=1}^N \sum_{\tau = 1}^k u^c_{j\tau}  H^{(j)}} X_0.
\end{align}
The last equality follows by Trotter expansion that $\displaystyle e^{\Delta t(A+B)}= e^{\Delta t A}e^{\Delta t B} + O(\Delta t^2)$ and taking the limit as $\Delta t$ goes to zero. Similarly, for binary control, we have 
\begin{align}
\label{eq:state-b}
    \lim_{\Delta t \rightarrow 0} X^b_k 
     = \lim_{\Delta t\rightarrow 0} e^{-i\Delta t H^{(0)} -i\Delta t \sum_{j=1}^N \sum_{\tau = 1}^k u^b_{j\tau}  H^{(j)}}X_0.
\end{align}
From Theorem \ref{theo:sur-convergence}, we have for any time step $k=1,\ldots,T$ and controller $j=1,\ldots,N$, 
\begin{align}
    \left| \sum_{\tau = 1}^k u^c_{j\tau}\Delta t - \sum_{\tau = 1}^k u^b_{j\tau} \Delta t\right| \leq \left\|\sum_{\tau=1}^k \left( u^c_{\tau} - u^b_{\tau}\right)\Delta t \right\|_\infty \leq \left(N-1\right) \Delta t + \frac{2N - 1}{N} \epsilon (\Delta t).
\end{align}
Combining with Corollary \ref{cor:sur-convergence-1} and taking the limit as $\Delta t$ goes to zero, we have 
\begin{align}
    \label{eq:int-error-quantum}
    \lim_{\Delta t\rightarrow 0} \sum_{\tau = 1}^k u^c_{j\tau}\Delta t = \lim_{\Delta t\rightarrow 0} \sum_{\tau = 1}^k u^b_{j\tau} \Delta t.
\end{align}
We substitute \eqref{eq:int-error-quantum} into the formulation of states \eqref{eq:state-c} and \eqref{eq:state-b}, then we obtain the conclusion that
\begin{align}
    \lim_{\Delta t\rightarrow 0} X^b_k = \lim_{\Delta t \rightarrow 0} X^c_k,\ k=1,\ldots,T.
\end{align}
Substituting states into the objective function, we prove that 
\begin{align}
    \lim_{\Delta t\rightarrow 0} F(X_T^b) = \lim_{\Delta t\rightarrow 0} F(X_T^c).
\end{align}
\end{proof}

\section{Detailed Numerical Results}
\label{app: results}
In this section, we present the results of continuous relaxation, combinatorial integral approximation, and the improvement heuristic for all the methods and instances in Table~\ref{tab: res-obj-c}--\ref{tab: res-time-improvement}. We also present the objective values and TV regularizer values for all the instances and annotate the best method in Figure~\ref{fig:obj-all-instance}. 

\begin{table}[H]
\centering
\caption{Objective value results of continuous relaxation. }
\begin{tabular}{l|rrr|rrrr}
       \hline
      \multirow{2}{*}{Instance} & \multicolumn{3}{c|}{Objective value} &
       \multicolumn{3}{c}{TV regularizer}\\
       \cline{2-7}
       & pGRAPE & TR & ADMM 
    & pGRAPE & TR & ADMM \\
       \hline
    Energy2 & 1.10E$-12$ & 9.11E$-05$ & 8.94E$-05$ & 0.999 & 0.567 & 0.523 \\
       \hline
    Energy4 & 0.154 & 0.155 & 0.168 & 5.064 & 4.114 & 2.752 \\
       \hline
    Energy6 & $0.213$ &	$0.213$ & $0.219$ & 5.999 & 4.508 & 3.237 \\
       \hline
       CNOT5 & 0.169 & 0.124 & 0.191 & 16.000 & 9.419 & 6.094 \\
       \hline
       CNOT10 & 1.16E$-09$ & 2.65E$-04$ & 3.21E$-04$ & 20.979 & 15.194 & 11.056 \\
       \hline
       CNOT15 & 1.00E$-10$ & 8.43E$-06$ & 3.65E$-06$ & 29.846 & 24.348 & 16.795 \\
       \hline
       CNOT20 & 5.93E$-10$ & 4.06E$-06$ & 8.07E$-07$ & 26.162 & 23.481 & 15.099 \\
                 \hline
    NOT2 & 0.163 & 0.163 & 0.163 & 2.226&	1.188&	0.908\\
    \hline
    NOT6 & 4.28E$-$10&	6.55E$-$05&	1.07E$-$04 & 3.251 &	0.652	&0.578\\
    \hline
    NOT10 & 6.55E$-$11 &	5.14E$-$05 &	4.79E$-$05 & 2.740 &	1.530 &	1.005\\
      \hline
        CircuitH$_2$ & 4.37E$-07$ & 1.24E$-04$ & 1.33E$-05$ & 18.720 & 8.421 & 2.744\\ 
       \hline
        CircuitLiH & 1.14E$-03$ & 1.45E$-03$ & 1.43E$-03$ & 53.976 & 48.720 & 0.677 \\
       \hline
\end{tabular}
\label{tab: res-obj-c}
\end{table}

\begin{table}[ht!]
\centering
\caption{CPU time and iterations of continuous relaxation. }
\begin{tabular}{l|rrr|rrr}
       \hline
       \multirow{2}{*}{Instance} & \multicolumn{3}{c|}{CPU time (s)} & \multicolumn{3}{c}{Iterations} \\
       \cline{2-7}
       & pGRAPE & TR & ADMM & pGRAPE & TR & ADMM \\
       \hline
      Energy2 & 0.13 & 1.77 & 19.56 & 7 & 132 & 100\\
      \hline
      Energy4 & 2.89 & 27.99 & 163.08 & 44 & 914 & 100\\
      \hline
      Energy6 & 27.94 & 341.43 & 1195.41 & 51 & 1578 & 100\\ %
      \hline
      CNOT5 & 1.12 & 79.55 & 21.75 & 111 & 3146 & 100\\
      \hline
      CNOT10 & 0.75 & 192.78 & 70.21 & 31 & 3279 & 100\\
      \hline
      CNOT15 & 1.26 & 348.96 & 100.59 & 39 & 3900 & 100 \\
      \hline
      CNOT20 & 1.02 & 432.73 & 150.25 & 21 & 3929 & 100\\
          \hline
    NOT2 & 0.05 & 4.93 & 2.63 & 15 & 1268 & 100\\
    \hline
    NOT6 & 0.15 & 48.81 & 10.93 & 24 & 3563 & 100\\
    \hline
    NOT10 & 0.10 & 91.39 & 34.05 & 6 & 3762 & 100\\
      \hline
      CircuitH$_2$ & 3.01 & 154.05 & 39.86 & 244 & 3832 & 100\\
      \hline
      CircuitLiH & 663.08 & 1403.03 & 564.95 & 4345 & 4247 & 100 \\
      \hline
\end{tabular}
\label{tab: res-time-c}
\end{table}

\begin{sidewaystable}[htbp]
\centering
\captionof{table}{Objective value results of combinatorial integral approximation. \label{tab: res-obj-r}}
\begin{adjustbox}{width=\linewidth}
\begin{tabular}{l|rrr|rrr|rrrrrrrrrrrrrrrrr}
       \hline
       \multirow{2}{*}{Instance} & \multicolumn{9}{c}{Objective value/TV regularizer} \\
       \cline{2-10}
       & pGRAPE+SUR & TR+SUR & ADMM+SUR & pGRAPE+MT & TR+MT & ADMM+MT & pGRAPE+MS & TR+MS & ADMM+MS \\
       \hline
        Energy2 & 4.22E$-04$/54 & 4.91E$-03$/54 & 4.01E$-04$/48 
       & $0.159$/4 & $0.159$/6 & $0.154$/6 & $0.029$/10 & $0.040$/10 & $0.028$/10 \\
       \hline
        Energy4 & $0.158$/26.8 & $0.158$/32.4 & $0.170$/44.8 & $0.367$/6 & $0.317$/6 & $0.363$/6 & $0.163$/10 & $0.162$/10 & $0.195$/10\\
       \hline
    Energy6 & $0.216$/38.8 & $0.216$/43.6 & $0.221$/52.8 & $0.684$/6 & $0.616$/6 & $0.606$/6 & $0.247$/10 & $0.258$/10 & $0.240$/10\\
       \hline
       CNOT5 & 0.170/16 & 0.332/24 & 0.190/41 & 0.243/10 & 0.525/6 & 0.285/7 & 0.170/16 & 0.593/15 & 0.191/32 \\
       \hline
       CNOT10 & 6.01E$-04$/116 & 1.78E$-03$/116 & 1.68E$-03$/86 & 0.158/22 & 0.323/21 & 0.084/15 & 0.011/39 & 0.019/38 & 0.006/32 \\
       \hline
       CNOT15 & 1.12E$-03$/266 & 2.30E$-03$/276 & 2.90E$-03$/279 & 0.539/37 & 0.290/36 & 0.176/27 & 0.325/38 & 0.284/40 & 0.214/39 \\
       \hline
       CNOT20 & 1.45E$-03$/491 & 8.30E$-04$/480 & 1.46E$-03$/467 & 0.782/53 & 0.314/51 & 0.517/47 & 0.654/39 & 0.697/39 & 0.619/39\\
       \hline
        NOT2 & 0.164/3 &	0.164/1 &	0.164/1&	0.164/1&	0.164/1&	0.164/1&	0.164/3&	0.164/1&	0.164/1\\
        \hline
        NOT6 & 2.38E$-$03/40 &  3.10E$-$03/46 & 	1.55E$-$03/52 & 	4.05E$-$02/8 & 	3.65E$-$02/9 & 	7.56E$-$02/10 & 	1.38E$-$02/19 & 	1.58E$-$01/14 & 	1.54E$-$01/14\\
        \hline
        NOT10 & 4.73E$-$03/126&	3.91E$-$04/126&	9.22E$-$03/21&	2.78E$-$02/24&	1.34E$-$01/25&	6.38E$-$02/6&	6.16E$-$02/40&	2.84E$-$01/40& 9.22E$-$03/21\\
        \hline
       CircuitH$_2$ & 0.027/32 & 0.027/36 & 0.006/76 & 0.600/8 & 0.591/8 & 0.063/8 & 0.026/22 & 0.038/24 & 0.008/22\\
       \hline
       CircuitLiH & 0.168/380 & 0.182/380 & 0.033/252 & 0.963/68 & 0.966/72 & 0.658/48 & 0.287/290 & 0.224/288 & 0.557/148\\
       \hline
\end{tabular}
\end{adjustbox}
\vspace{0.15in}
\centering
\captionof{table}{CPU time and iterations of combinatorial integral approximation. \label{tab: res-time-r}}
\begin{tabular}{l|rrr|rrrrrrrrrrrrrr}
       \hline
       \multirow{2}{*}{Instance} & \multicolumn{6}{c}{CPU time (s)/Iterations}\\
       \cline{2-7}
       & pGRAPE+MT & TR+MT & ADMM+MT 
       & pGRAPE+MS & TR+MS & ADMM+MS \\
      \hline
      Energy2 & $<0.01$/39 & $<0.01$/39 & $<0.01$/18 & $<0.01$/3715 & $<0.01$/4627 & $<0.01$/5106 \\
      \hline
      Energy4 & $<0.01$/15 & $<0.01$/17 & $<0.01$/17 & $<0.01$/759 & $<0.01$/2578 & $<0.01$/925 \\
      \hline
      Energy6 & $<0.01$/8 & $<0.01$/8 & $<0.01$/8 & $<0.01$/2473 & $<0.01$/3730 & $<0.01$/2508 \\
       \hline
       CNOT5 & 0.55/5 & 0.55/3 & 0.57/2 & 0.50/2 & 0.52/2 & 0.72/7\\
       \hline
       CNOT10 & 19.29/36786 & 13.50/5916 & 6.49/4136 & 61.83/25685 & 61.83/27432 & 61.83/54921 \\
       \hline
       CNOT15 & 7.92/555 & 8.89/10 & 25.79/21326 & 64.36/14349 & 64.62/9878 & 64.70/29421 \\
       \hline
       CNOT20 & 28.46/31236 & 22.86/8482 & 38.42/1837 & 67.41/2069 & 67.22/1242 & 67.30/26378 \\
       \hline
        NOT2 & 0.05/1& 0.05/1& 0.04/1& 0.04/1& 0.06/1& 0.07/1\\
        \hline
        NOT6 &0.80/7403 & 6.62/32640 & 0.26/1& 41.37/47882 & 0.26/1& 60.19/90179\\
        \hline
        NOT10 & 0.97/1255 & 60.51/1815 & 4.26/4136& 60.50/29462& 0.66/1& 0.59/1\\
        \hline
       CircuitH$_2$ & 0.89/1 & 0.87/1 & 3.83/1514 & 1.34/1 & 1.87/1 & 60.76/24639\\
       \hline
       CircuitLiH & 70.94/349 & 70.87/889 & 71.01/1570 & 71.45/1545 & 70.83/1595 & 71.23/2137 \\
       \hline
\end{tabular}
\end{sidewaystable}

\begin{sidewaystable}[htbp]
\centering
\captionof{table}{Objective value results of improvement heuristic. \label{tab: res-obj-improve}}
\vspace{-0.1in}
\begin{adjustbox}{width=\linewidth}
\begin{tabular}{l|rrr|rrr|rrrrrrrrrrrrrrrr}
       \hline
       \multirow{4}{*}{Instance} & \multicolumn{9}{c}{Objective value/TV regularizer}\\
       & \multicolumn{9}{c}{Improvement of objective value/Improvement of TV regularizer}\\
      \cline{2-10}
       & pGRAPE+ & TR+ & ADMM+ & pGRAPE+ & TR+ & ADMM+ & pGRAPE+ & TR+ & ADMM+ \\
       & SUR+ALB & SUR+ALB & SUR+ALB & MT+ALB & MT+ALB & MT+ALB & MS+ALB & MS+ALB & MS+ALB\\
       \hline
       \multirow{2}{*}{Energy2} 
    & $0.003$/10 & $0.005$/8 & $0.003$/4 & $0.003$/4 & $0.003$/4 & $0.041$/6 & $0.001$/10 & $0.003$/10 & $0.002$/8 \\
        & $-0.20$\%/81.48\% & 0.00\%/85.19\% & $-0.20$\%/91.67\% 
        & 18.55\%/0.00\% & 18.55\%/0.00\% & 13.36\%/0.00\%
        & 2.88\%/0.00\% & 3.85\%/0.00\% & 2.67\%/20.00\%\\
       \hline
      \multirow{2}{*}{Energy4} & 
    $0.160$/17.2 & $0.162$/14 & $0.166$/14.4 & $0.199$/6 & $0.198$/6 & $0.218$/5.6 & $0.160$/9.2 & $0.160$/10 & $0.184$/8.4\\
    & $-0.14$\%/35.82\% & $-0.42$\%/56.79\% & $0.53$\%/67.86\% 
       & 26.61\%/0.00\% & 17.34\%/0.00\% & 22.84\%/6.67\%
       & 0.38\%/8.00\% & 0.32\%/16.00\% & 1.31\%/0.00\%\\
       \hline
       \multirow{2}{*}{Energy6} &
    $0.218$/32.4 & $0.216$/40.4 & $0.219$/50.0 & $0.429$/6 & $0.429$/6 & $0.451$/6 & $0.229$/10 & $0.230$/10 & $0.216$/10\\
        & $-0.25$\%/16.49\% & $-0.01$\%/7.34\% & 0.25\%/5.30\%
        & 80.74\%/0.00\% & 49.09\%/0.00\% & 39.28\%/0.00\%
        & 2.47\%/0.00\% & 3.77\%/0.00\% & 0.57\%/0.00\%\\
       \hline
       \multirow{2}{*}{CNOT5} & 0.176/9 & 0.206/7 & 0.176/9 & 0.195/9 & 0.196/10 & 0.195/9 & 0.170/16 & 0.173/20 & 0.172/16 \\
       & $-3.53$\%/43.75\% & 37.95\%/70.83\% & 7.37\%/78.05\%
       & 19.75\%/10.00\% & 62.67\%/$-66.67$\% & 31.58\%/$-28.57$\%
       & 0.00\%/0.00\% & 70.83\%/$-33.33$\% & 9.95\%/$50.00$\% \\
       \hline
       \multirow{2}{*}{CNOT10} & 1.58E$-03$/30 & 2.46E$-03$/24 & 1.15E$-03$/20 & 4.06E$-03$/23 & 9.43E$-03$/16 & 6.04E$-03$/15  & 9.80E$-04$/39 & 1.31E$-03$/38 & 1.18E$-03$/36 \\
       & $-66.42$\%/74.14\% & $-38.20$\%/70.69\% & 0.00\%/76.74\%
       & 97.47\%/$-4.55$\% & 97.21\%/23.81\% & 92.86\%/0.00\%
       & 90.91\%/0.00\% & 94.74\%/0.00\% & 83.33\%/$-12.50$\% \\
       \hline
       \multirow{2}{*}{CNOT15} & 5.59E$-04$/262 & 4.67E$-04$/256 & 8.51E$-04$/263 & 6.31E$-03$/33 & 6.25E$-03$/34 & 1.63E$-03$/30 & 1.30E$-03$/39 & 3.72E$-03$/39 & 1.91E$-03$/40 \\
       & 44.10\%/1.50\% & 76.65\%/7.25\% & 71.64\%/5.73\%
       & 98.89\%/10.81\% & 97.93\%/5.56\% & 99.43\%/$-11.11$\%
       & 99.69\%/$-2.63$\% & 98.94\%/2.50\% & 99.07\%/$-2.56$\%\\
       \hline
       \multirow{2}{*}{CNOT20} & 4.56E$-04$/479 & 6.13E$-04$/471 & 5.07E$-04$/441 & 1.20E$-03$/38 & 8.22E$-04$/49 & 1.35E$-03$/48 & 9.47E$-04$/40 & 2.81E$-03$/40 & 7.45E$-04$/40\\
       & 54.44\%/2.44\% & 38.68\%/1.88\% & 49.27\%/5.57\%
       & 99.87\%/28.30\% & 99.74\%/3.92\% & 99.81\%/$-2.13$\%
       & 99.86\%/$-2.56$\% & 99.57\%/$-2.56$\% & 99.88\%/$-2.56$\% \\
       \hline
       \multirow{2}{*}{NOT2} & 0.164/1&	0.164/1&	0.164/1&	0.164/1&	0.164/1&	0.164/1&	0.163/3&	0.163/3&	0.163/3 \\
       & 0.46\%/66.67\%	& 0.36\%/0.00\%	&0.36\%/0.00\%&	0.36\%/0.00\%&	0.36\%/0.00\%&	0.36\%/0.00\%&	0.62\%/0.00\%&	0.52\%/$-$200.00\%&	0.52\%/$-$200.00\% \\
       \hline
       \multirow{2}{*}{NOT6} & 1.42E$-$03/6&	7.33E$-$04/10&	7.24E$-$04/13&	1.45E$-$03/9&	1.27E$-$02/11&	3.39E$-$03/12&	1.06E$-$03/22&	8.90E$-$04/18&	1.74E$-$03/16\\
       & 40.24\%/85.00\%&	76.34\%/86.96\%&	53.25\%	/75.00\%& 96.41\%/$-$12.50\%&	65.21\%/$-$22.22\%&	95.52\%/$-$20.00\%&	92.33\%/$-$15.79\%&	99.44\%/$-$28.57\%&	98.87\%	/$-$14.29\%\\
       \hline
       \multirow{2}{*}{NOT10} & 5.50E$-$04/32&	1.66E$-$03/6&	6.16E$-$04/7&	7.37E$-$04/24&	9.09E$-$04/20&	8.90E$-$04/8&	5.44E$-$04/38&	7.57E$-$04/39&	8.90E$-$04/21	\\
       & 88.37\%/74.60\% &	$-$76.45\%/92.06\% &	93.32\%/66.67\% &	97.35\%/0.00\% &	99.32\%/20.00\%&	98.60\%/$-$33.33\%&	99.12\%/5.00\% &	99.73\%/2.50\% &	90.35\%/0.00\% \\
       \hline
        \multirow{2}{*}{CircuitH$_2$} 
        & 0.003/24 & 0.013/32 & 0.006/76 
        & 0.245/12 & 0.007/2 & 0.054/10 
        & 0.014/18 & 0.003/22 & 0.008/22 \\ %
        & 88.89\%/25.00\% & 51.85\%/11.11\% & 0.00\%/0.00\%
        & 59.17\%/$-50.00$\%& 98.82\%/75.00\% & 14.29\%/$-25.00$\% 
        & 46.15\%/18.18\% & 92.11\%/8.33\% & 0.00\%/0.00\%\\
       \hline
       \multirow{2}{*}{CircuitLiH} 
       & 0.069/378 & 0.083/378 & 0.033/252 & 0.002/6 & 0.496/70 & 0.355/48 & 0.165/286 & 0.120/290 & 0.021/158 \\ %
        & 58.93\%/0.53\% & 54.40\%/0.53\% & 0.00\%/0.00\% 
        & 99.79\%/91.18\% & 48.65\%/2.78\% & 46.05\%/0.00\% 
        & 42.51\%/1.38\% & 46.43\%/$-0.69$\% & 96.23\%/$-6.76$\%\\
       \hline
\end{tabular}
\end{adjustbox}
\end{sidewaystable}

\begin{sidewaystable}[htbp]
\vspace{0.05in}
\centering
\captionof{table}{CPU time and iterations of improvement heuristic. \label{tab: res-time-improvement}}
\vspace{-0.1in}
\begin{adjustbox}{width=\textwidth}
\begin{tabular}{l|rrr|rrr|rrrrrrrrrrr}
       \hline
       \multirow{3}{*}{Instance} & \multicolumn{9}{c}{CPU time (s)/Iterations}\\
       \cline{2-10}
       & pGRAPE+ & TR+ & ADMM+ & pGRAPE+ & TR+ & ADMM+ & pGRAPE+ & TR+ & ADMM+ \\
       & SUR+ALB & SUR+ALB & SUR+ALB & MT+ALB & MT+ALB & MT+ALB & MS+ALB & MS+ALB & MS+ALB\\
      \hline
      Energy2 & 3.41/255 & 5.21/379 & 2.26/262 & 2.26/170 & 2.25/170 & 1.59/150 & 1.27/93 & 1.90/154 & 1.24/91 \\
      \hline
      Energy4 & 3.36/127 & 5.53/191 & 7.90/280 & 5.12/207 & 5.13/207 & 4.94/194 & 1.21/43 & 1.23/43 & 2.14/68\\
      \hline
      Energy6 & 16.95/94 & 11.11/62 & 9.88/56 & 41.42/219 & 39.53/208 & 33.36/174 & 18.51/100 & 24.02/144 & 10.84/56\\
       \hline
       CNOT5 & 12.69/394 & 16.96/506 & 30.09/993 & 19.67/521 & 20.34/550 & 21.46/575 & 0.91/1377 & 33.99/33 & 12.58/464\\
       \hline
       CNOT10 & 70.36/1245 & 56.97/1031 & 56.90/998 & 74.83/945 & 84.95/1050 & 51.37/574 & 19.89/301 & 24.04/323 & 16.65/270\\
       \hline
       CNOT15 & 9.12/106 & 30.48/395 & 25.57/317 & 150.25/1238 & 169.47/1335 & 107.17/895 & 113.02/944 & 129.21/1288 & 82.44/935 \\
       \hline
       CNOT20 & 27.82/249 & 26.22/251 & 33.77/397 & 208.79/1246 & 139.69/1010 & 157.57/1009 & 153.63/1035 & 155.07/1138 & 157.87/1182\\
       \hline
       {NOT2} &0.46/93& 0.40/63& 0.52/93& 0.39/63& 0.46/62& 0.31/63&0.34/63&0.47/63& 0.31/62&\\
       \hline
      {NOT6} & 11.71/725&7.36/306& 2.49/188&9.46/592&5.47/231&8.63/582& 7.82/458&5.82/236& 6.90/470\\
       \hline
       {NOT10} & 29.69/1181&16.16/409& 7.87/298& 21.46/824&  21.89/558&  14.11/541&  7.22/256& 17.05/465& 4.06/161\\
       \hline
       CircuitH$_2$ & 4.84/97 & 3.18/65 & 4.70/33 & 5.40/88 & 9.21/156 & 2.73/39 & 4.88/108 & 3.14/65 & 1.68/33\\ 
       \hline
       CircuitLiH & 26.08/108 & 46.26/108 & 9.05/36 & 64.29/186 & 74.88/153 & 59.21/135 & 41.43/143 & 20.02/71 & 39.78/158\\
       \hline
\end{tabular}
\end{adjustbox}

\end{sidewaystable}

\begin{figure}[H]
    \centering
    \includegraphics[width=\textwidth]{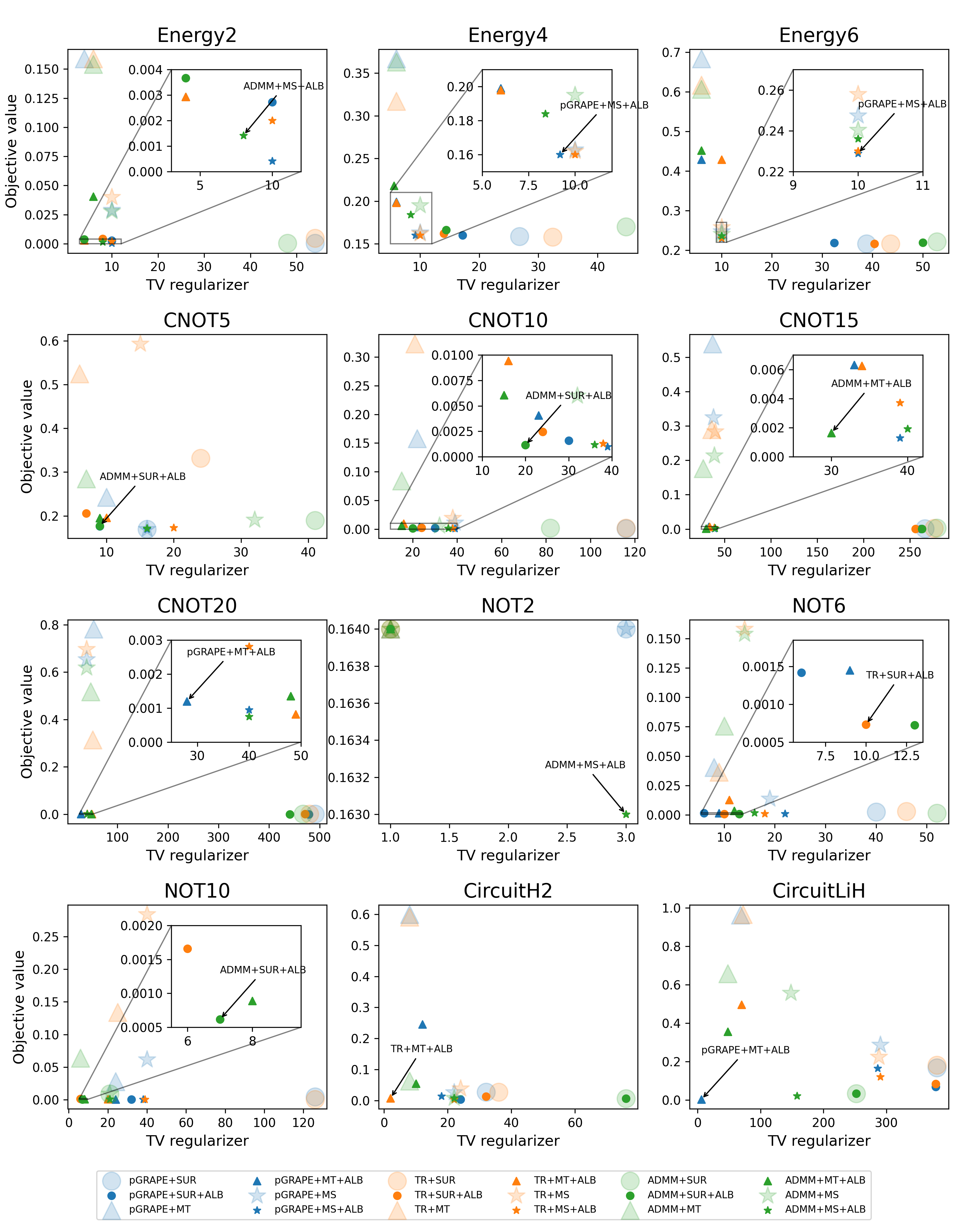}
    \caption{Objective values and TV regularizer values of binary results of all the instances.}
    \label{fig:obj-all-instance}
\end{figure}

\section{Control Results of NOT Estimation Problem}
\label{app: ctrl}
For the NOT gate estimation problem, we present the control obtaining the best trade-off between objective values and TV regularizer values which are annotated in Figure~\ref{fig:not-ctrl}. All three instances show that controller 1 has a more significant impact on the final infidelity. When the evolution time is short ($t_f=2$), the optimal control sets controller 1 active all the time. When the evolution time is longer ($t_f=6,\ 10$), the active time of controller 2 is shorter. The main reason is that only considering controller 1 can also obtain low infidelity with a sufficiently long evolution time but the performance of a single controller is worse than two controllers with the same evolution time~\citep{rebentrost2009optimal,Motzoi2009}.
\begin{figure}[htbp]
    \centering
    \subfloat[$t_f=2$ \label{fig:not2-ctrl}]{\includegraphics[width=0.32\textwidth]{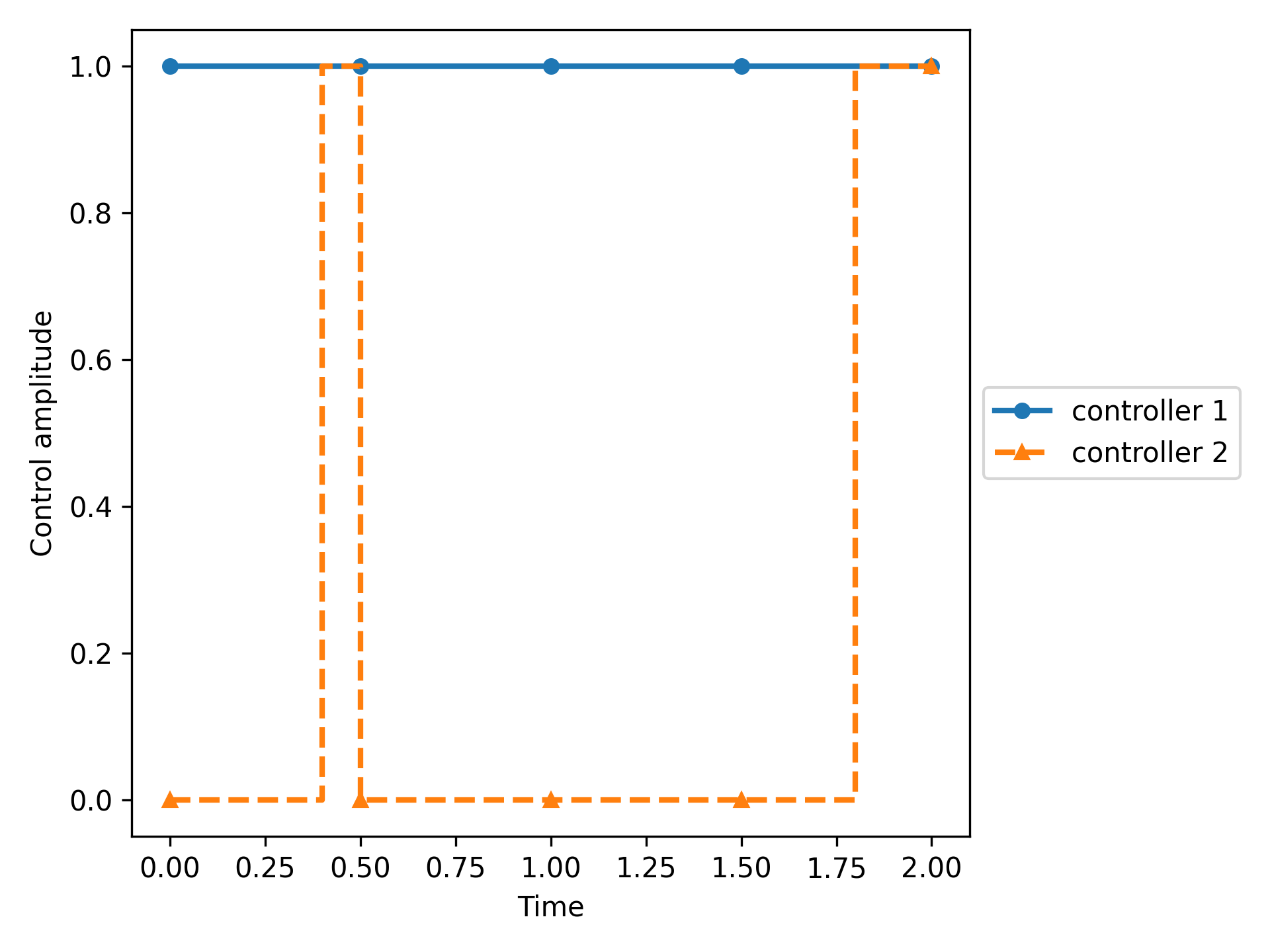}}
    \subfloat[$t_f=6$ \label{fig:not6-ctrl}]{\includegraphics[width=0.32\textwidth]{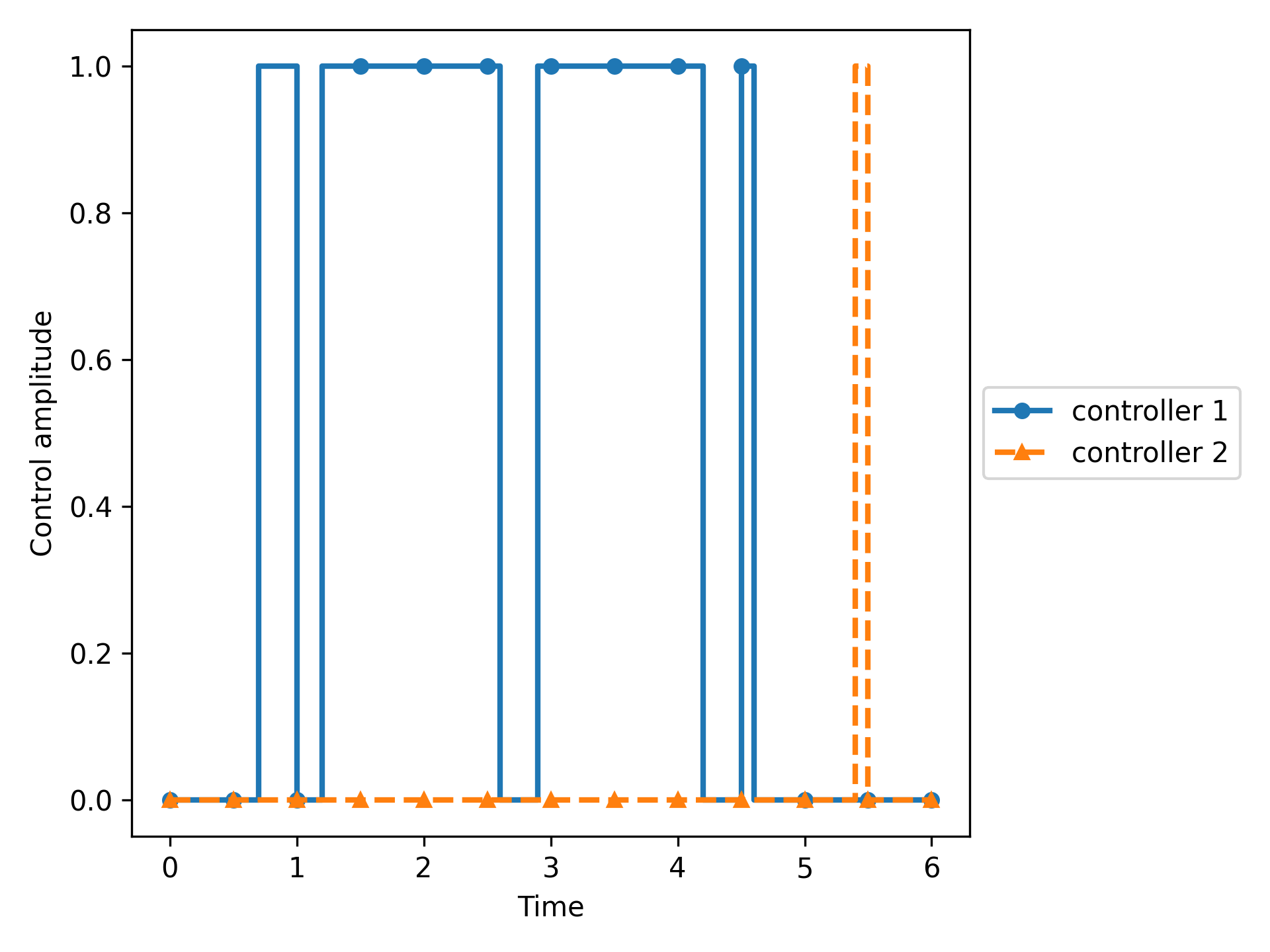}}
    \subfloat[$t_f=10$ \label{fig:not10-ctrl}]{\includegraphics[width=0.32\textwidth]{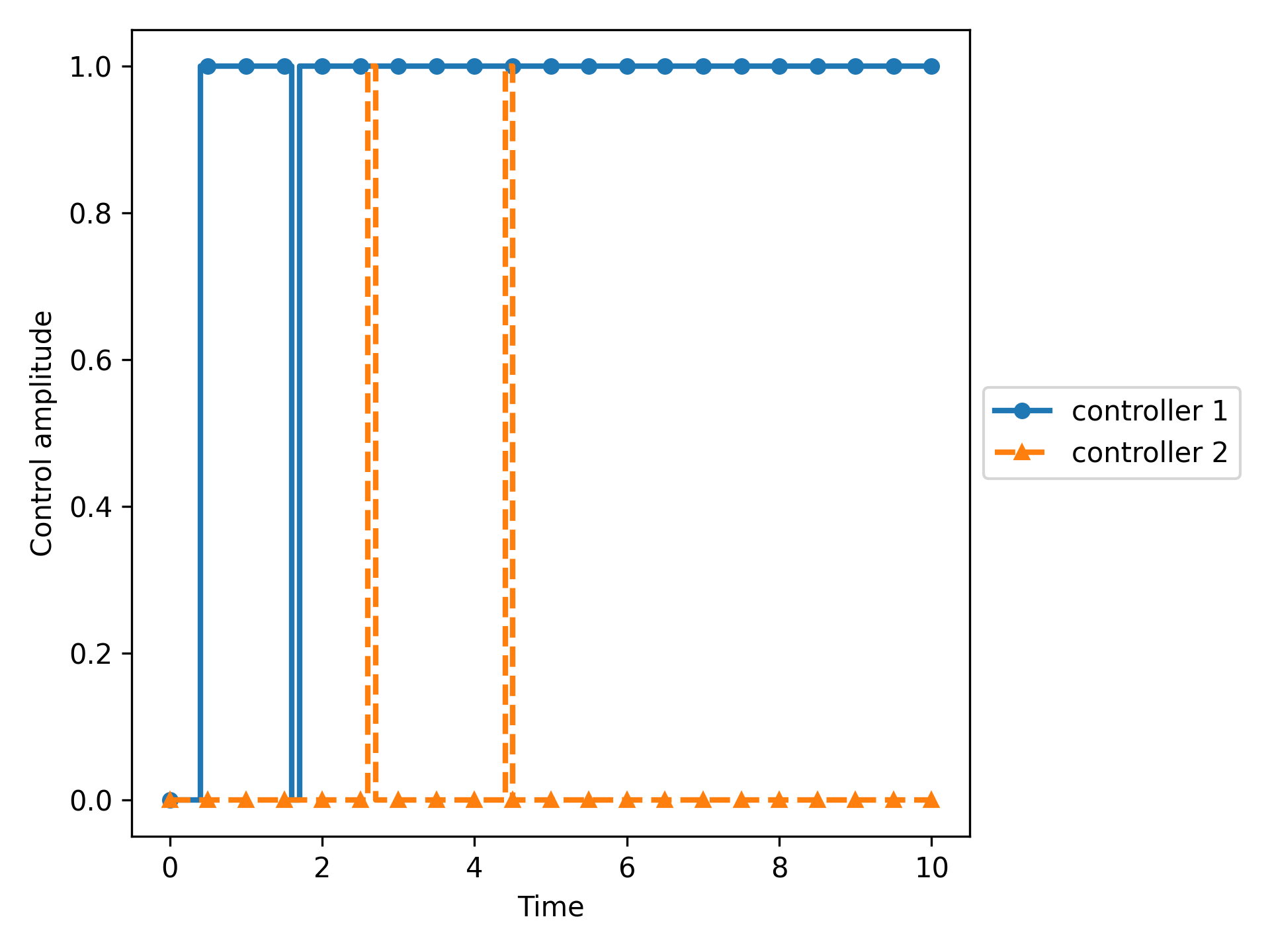}}
    \caption{Control results of NOT gate estimation example with different evolution times. Blue lines represent controller 1 and orange dashed lines represent controller 2.}
    \label{fig:not-ctrl}
\end{figure}
\end{appendices}

\vfill
\framebox{\parbox{.90\linewidth}{\scriptsize The submitted manuscript has been created by
        UChicago Argonne, LLC, Operator of Argonne National Laboratory (``Argonne'').
        Argonne, a U.S.\ Department of Energy Office of Science laboratory, is operated
        under Contract No.\ DE-AC02-06CH11357.  The U.S.\ Government retains for itself,
        and others acting on its behalf, a paid-up nonexclusive, irrevocable worldwide
        license in said article to reproduce, prepare derivative works, distribute
        copies to the public, and perform publicly and display publicly, by or on
        behalf of the Government.  The Department of Energy will provide public access
        to these results of federally sponsored research in accordance with the DOE
        Public Access Plan \url{http://energy.gov/downloads/doe-public-access-plan}.}}

\end{document}